  \setlist{nolistsep}
\def\subsection{\@startsection{subsection}{3}%
  \z@{1\linespacing\@plus.8\linespacing}{.5\linespacing}%
  {\centering\large\normalfont\sffamily}}
\def\section{\@startsection{section}{3}%
  \z@{1\linespacing\@plus.8\linespacing}{.8\linespacing}%
  {\centering\large\normalfont\bf\sffamily}}
\theoremstyle{plain}
  \newtheorem{theorem}{Theorem}[section]
  \newtheorem{lemma}[theorem]{Lemma}
  \newtheorem{proposition}[theorem]{Proposition}
  \newtheorem*{corollary}{Corollary}
\theoremstyle{remark}
  \newtheorem*{remark}{Remark}
  \newtheorem*{remarks}{Remarks}
\theoremstyle{definition}
  \newtheorem{example}[theorem]{Example}
  \newtheorem{definition}[theorem]{Definition}
\newenvironment{multirem}{\setcounter{step}{0}}{}
  \newcounter{step}
  \newcommand{\rem}{\par\refstepcounter{step}~\thestep.\space\ignorespaces}
\DeclareMathOperator{\st}{\mid}			
	\DeclareMathOperator{\such}{\mid}	
\DeclareMathOperator{\f:}{:}			
\DeclareMathOperator{\tr}{tr}			
\DeclareMathOperator{\im}{im}		
\DeclareMathOperator{\Aff}{Aff}		
\DeclareMathOperator{\Inv}{Inv}		
\DeclareMathOperator{\Coeff}{Coeff}	
\DeclareMathOperator{\Lat}{Lat}		
\DeclareMathOperator{\GL}{GL}		
\DeclareMathOperator{\Lie}{Lie}		
\DeclareMathOperator{\spec}{spec}		
\DeclareMathOperator{\eig}{eig}		
\DeclareMathOperator{\vspan}{span}	
\DeclareMathOperator{\rank}{rank}		
\DeclareMathOperator{\diag}{diag}		
\newcommand{\CC}{\mathbb{C}}	
\newcommand{\RR}{\mathbb{R}}	
\newcommand{\ZZ}{\mathbb{Z}}	
\newcommand{\NN}{\mathbb{N}}	
\newcommand{\dg}{^\dagger}	
\newcommand{\cH}{\bm{\mathcal{H}}}	
\newcommand{\bp}{\bm{\sigma}}	
\newcommand{\np}[1]{\varphi[#1]}	
\newcommand{\A}{\mathscr{A}}	
\newcommand{\Gr}{\mathscr{G}}	
\newcommand{\Nbd}{\mathscr{N}}	
\newcommand{\Z}{\mathcal{Z}}	
\newcommand{\Mat}{\mathscr{M}}	
\newcommand{\Ham}{\mathscr{H}}	
\newcommand{\Burn}{\mathscr{B}}		
\newcommand{\cL}{\mathscr{L}}	
\newcommand{\Eig}{\mathscr{E}}	
\newcommand{\D}{\mathscr{D}}	
\newcommand{\Pb}{\mathcal{P}}	
\newcommand{\Sb}{\mathcal{S}}	
\newcommand{\Sample}{\mathscr{S}}	
\newcommand{\wo}{\backslash}	
\newcommand{\cl}[2][3]{{}\mkern#1mu\overline{\mkern-#1mu#2}}	
\newcommand{\trans}{^{\mathsf T}}
\newcommand{\calhat}[1]{\skew{3}\widehat{#1}}
\newcommand{\bra}[1]{\left\langle{#1}\right\vert}
\newcommand{\ket}[1]{\left\vert{#1}\right\rangle}
\newcommand{\expect}[1]{\langle{#1}\rangle}
\newcommand{\braket}[2]{\langle#1 \vert #2 \rangle}
\newcommand{\Unit}{\mathbbm{1}} 
\newcommand{\Pvec}{\ket{\nu_{\alpha}}}  
\def\mathclap#1{\text{\hbox to 0pt{\hss$\mathsurround=0pt#1$\hss}}}
\newcommand{\ra}[1]{\renewcommand{\arraystretch}{#1}}
\newcounter{tabular}
\renewcommand{\@algocf@capt@boxed}{above}	
\newcommand{\erf}[1]{{\color{blue} Eq.~\eqref{#1}}}
\newcommand{\etal}{\textit{et al.}}
\newcommand{\ie}{\emph{i.e.,}~}
\newcommand{\eg}{\emph{e.g.,}~}
\newcommand{\viz}{\emph{viz.,}~}
\newcommand{\SNL}{Department of Scalable \& Secure Systems Research (08961),
Sandia National Laboratories, Livermore, CA 94550, USA}
\title{On model reduction for quantum dynamics:\\ symmetries and invariant subspaces}
\author{Akshat Kumar, Mohan Sarovar}
\address{\SNL}
\email{akskuma@sandia.gov, mnsarov@sandia.gov}
\begin{document}
\begin{abstract}
Simulation of quantum dynamics is a grand challenge of computational physics. In this work we investigate methods for reducing the demands of such simulation by identifying reduced-order models for dynamics generated by parameterized quantum Hamiltonians. In particular, we first formulate an algebraic condition that certifies the existence of invariant subspaces for a model defined by a parameterized Hamiltonian and an initial state. Following this we develop and analyze two methods to explicitly construct a reduced-order model, if one exists. In addition to general results characterizing invariant subspaces of arbitrary finite dimensional Hamiltonians, by exploiting properties of the generalized Pauli group we develop practical tools to speed up simulation of dynamics generated by certain spin Hamiltonians. To illustrate the methods developed we apply them to several paradigmatic spin models.
\end{abstract}
\maketitle

\section{Introduction}
\label{sec:intro}
Exact simulation of quantum dynamics is notoriously difficult because of the exponential growth of a quantum system's state space with the number of independent degrees of freedom it possesses. At the same time, recent results \cite{Poulin:2011jh} suggest that an exponentially small subset of this formal state space is accessed through \textit{realistic} dynamics. Realistic dynamics are defined as dynamics generated by Hamiltonians with locality and energy constraints evolving for a sub-exponential (in system size) time from a small set of initial states. In light of these results, we consider the task of identifying this physically relevant subset of states in Hilbert space. This is an instance of \emph{model reduction} in systems theory \cite{Antoulas:2009tb, Schilders:2008uc}. The strongest form of model reduction is when the initial state of the system is evolved within a non-trivial invariant subspace of the exponentially large formal Hilbert space. In this case, there is no approximation error in restricting the dynamical model to this invariant subspace by projection. Dynamics according to this reduced order model can be significantly more efficient\footnote{Throughout this work we reserve the term ``efficient" to refer to methods that require fewer resources than a full order simulation of a quantum system. The term should not be interpreted in the formal complexity theoretic sense (as polynomial complexity in number of degrees of freedom). When such formal notions are necessary, we will explicitly specify the asymptotic scaling behavior.} to simulate if the dimension of the invariant subspace is much smaller than the dimension of the full Hilbert space for the system. We note that projection-based model reduction generally considers a wider class of linear projections, where the subspace projected onto is not necessarily an invariant subspace of the dynamics \cite{Antoulas:2009tb, Schilders:2008uc}, and in such cases the reduced order model is an \emph{approximation} of the full order model. However, in this work we only consider the strongest form of model reduction that projects onto an invariant subspace (containing the initial state) and thus incurs no error.

In the following we develop a formal characterization of the invariant subspace spanned by a given initial state and parameterized quantum Hamiltonian model. After establishing notation and definitions in \textsection  \ref{sec:setup}, we formulate an algebraic characterization of the existence of invariant subspaces for general parametric finite dimensional Hamiltonians in \textsection  \ref{sec:prelim} by applying classical results from the theory of representations of groups and semigroups \cite{curtis1999pioneers}. This characterization essentially provides a condition that certifies whether a parameterized Hamiltonian model yields an invariant subspace when acting on a given initial state. Then in \textsection  \ref{sec:construct} we present two techniques to explicitly construct the invariant subspace if one exists. 

In the second part of the paper, beginning with \textsection  \ref{sec:pauli}, we specialize to models of many-body spin dynamics and exploit properties of the generalized Pauli group to formulate stronger results that characterize invariant subspaces in this setting. We also develop specialized tools to explicitly construct the physically relevant invariant subspace for Pauli Hamiltonians in an efficient manner. These tools are then applied in \textsection  \ref{sec:eg} to several paradigmatic spin models.

From a physics perspective, the methods we develop can be viewed as allowing one to exploit (unitarily representable) symmetries in a many-body system without having to know these symmetries \emph{a priori}. That is, reduced subspaces are associated to model symmetries, and the methods directly compute the projected dynamics in these reduced subspaces. 

In terms of previous work on model reduction for quantum systems, Mabuchi \etal ~ have developed projection-based model reduction methods for dissipative, measured quantum systems that allow efficient estimates of quantum states based on continuous measurement outcomes \cite{Han.Mab-2005, Mabuchi:2008uy, Hopkins:2008tx, Nielsen:2009vs}. Also, Nurdin has explored balanced truncation for linear quantum stochastic models \cite{Nurdin:2013ux} with the focus of generating reduced order models that have a physical realization. In contrast to these works, our approach is specifically aimed at generating reduced order models for many-body systems that are governed by Hamiltonian dynamics (closed systems). In addition, while these previous works have developed methods for formulating reduced order models that reproduce input-output relationships, our focus will be on reproducing the quantum state vector generated by the dynamics. Our approach has similar aims as the recently developed time-dependent density matrix renormalization group (t-DMRG) method for many-body simulation \cite{White:2004fd,Daley:2004hk}. However, unlike t-DMRG, we do not begin with an ansatz for the physically relevant set of states and simulate their dynamics. Instead, we aim to identify this set of states from the dynamical model and reduce the model by projecting onto them. Finally, in contrast to all of the methods above our approach does not result in approximative reduced order models but exact ones since it relies on identifying invariant subspaces that host the dynamics.

\section{Setup and notation}
\label{sec:setup}
We use $\cH_d$ to denote a $d$-dimensional Hilbert space, and given any vector space $V$ we denote by $L(V)$ the space of linear operators $T:V\rightarrow V$. $\Mat_d(\CC)$ denotes the space of all complex $d\times d$ matrices and $\GL_d(\CC)$ the subspace of all \emph{invertible} $d\times d$ matrices.

Let the quantum system of interest be $d$-dimensional, meaning that a state of the system, $\ket{\psi}$, is a normalized vector in $\cH_d \cong  \mathbb{C}^d$. Then, we consider the following general class of Hamiltonian models that generate system dynamics on $\cH_d$:

\begin{definition}
A time-independent Hamiltonian model \emph{with $M$ free linear parameters} (or an $M$-\emph{parameter Hamiltonian model} for short) is a self-adjoint operator valued function $\Ham \f: \RR^M \rightarrow  \Mat_{d}(\CC)$ of the form:
\begin{equation} \label{hamdef}
	\Ham(\lambda) = H_0 + \sum_{k=1}^M \lambda_k H_k\ \ \ \ \	(M > 0)
\end{equation}
where the $\{H_k \}_{k=0}^M$ are fixed finite-dimensional self-adjoint operators on $\cH_d$. 
\end{definition}
The operator $H_0$ is often called the \emph{free Hamiltonian} and the parameters $\lambda = \{\lambda_k\}_{k=1}^M$ are real \emph{tuning} parameters (collectively, we also refer to $\lambda$ as the tuning parameter). In the many-body physics context, these tuning parameters often determine the \emph{phase} of the physical system. We restrict to time-independent parameters $\lambda_k$ in this work for simplicity; but all of the results can be generalized to the time-dependent case (that is particularly relevant to quantum control). We will often abuse notation and denote a Hamiltonian as well as the induced model by the same symbol, $\Ham$. Given a state, $\ket{\psi_0}$, of the system at time $t_0$, the state of the system at time $t$ is given by\footnote{Throughout this paper we use units where $\hbar=1$.}
\begin{equation} \label{eq:u_map}
	\ket{\psi_{\lambda}(t)} = U_\lambda(t,t_0) \ket{\psi_0} = \textrm{exp}\left\{ -i \Ham(\lambda) (t-t_0)\right\} \ket{\psi_0}.
\end{equation}

Without loss of generality, we can take $\Ham$ to be traceless since any traceful component would simply generate a global phase factor that is physically irrelevant. Moreover, since $\lambda_k$ take arbitrary real values, including zero, $\tr(H_k) = 0$ for all $0\leq k\leq M$.

We are concerned with finding an invariant subspace for the dynamics generated by this Hamiltonian that contains the initial state $\ket{\psi_0}$. Importantly, the invariant subspace should be independent of the tuning parameters $\lambda$ since we do not wish to construct such a subspace for every possible combination of parameters. We collect these objectives in the following definition.

\begin{definition}
Let $\Ham$ be a time-independent Hamiltonian model on $\cH_d$, with $M$ free parameters. If $\ket{\psi_0}$ is the initial state for the dynamical evolution governed by $\Ham$, then we combinedly denote this model by the notation, $\Ham_{\ket{\psi_0}}$. Further, we call a non-trivial proper subspace $V \subset  \cH_d$ that is invariant for all of the operators in the full space $\Ham(\RR^M)$ (\ie for all parameter values), and contains the initial state $\ket{\psi_0}$, a \emph{reduced subspace} for $\Ham_{\ket{\psi_0}}$.
\end{definition}

In the second part of this work we focus attention on many-body spin-$\frac{1}{2}$ (qubit) models. In this context, let $\Pb_1 := \{\sigma_k \}_{k=0}^3$ be the set of Pauli matrices with $\sigma_0 = I_2$. It is clear that this set forms an orthogonal basis for the Hilbert space of all two-dimensional linear operators (\viz the operators on the space for one qubit) under the Hilbert-Schmidt inner product. Thus, the set of permutations of $n$-fold tensor
products,
\[
	\Pb_n := \{\sigma_{\smash[b]{j_1}} \otimes  \cdots  \otimes  \sigma_{\smash[b]{j_n}} \st 0 \leq  j_1, \ldots , j_n \leq  3 \text{ and }
		\sigma_{\smash[b]{j_k}} \in  \Pb_1\}
\]
forms an orthogonal basis for the space of operators acting on the space of $n$-spins, $\cH_{2^n} \cong  (\CC^2)^{\otimes n}$. We denote the canonical generators of the Pauli group on $n$ spins by $\Pb_n^* := \Pb_n \wo \{I_{2^n} \}$.

\section{Preliminaries} 
\label{sec:prelim}

As a consistency check, we begin by showing that our definition of a reduced subspace for a given model $\Ham_{\ket{\psi_0}}$ actually preserves the dynamics generated by $\Ham$ at every value of the tuning parameters $\lambda \in  \RR^M$, with initial state $\ket{\psi_0}$. To this end, we define a necessary piece of notation:
\begin{definition}
	If $T \in  L(\cH_d)$, then the $T$-\emph{cyclic subspace generated by} $\ket{\phi} \in  \cH_d$ is the subspace of
	$\cH_d$ given by $\Z(T, \ket{\phi}) := \vspan\{T^k \ket{\phi} \st k \geq  0 \}$.
\end{definition}

Since the dynamics of $\Ham$ are governed by the evolution equation \erf{eq:u_map}, we see by the convergent expansion,
\[
	U_\lambda(t,t_0) = \sum_{k=0}^\infty {(-i)^k \over k!} (t-t_0)^k \Ham(\lambda)^k ,
\]
that for each fixed $\lambda \in  \RR^M$ and every $t \in  \RR$, $\ket{\psi_\lambda(t)} \in  \Z(\Ham(\lambda), \ket{\psi_0})$. Now, if $V$ is a reduced subspace for $\Ham_{\ket{\psi_0}}$, then we have by definition that $\Ham(\lambda)V \subseteq  V$ and $\ket{\psi_0} \in  V$, for every $\lambda \in  \RR^M$. Thus clearly, given any $\lambda \in  \RR^M$, $\Ham(\lambda)^k \ket{\psi_0} \in  V$ for every $k \geq  0$, whence $\Z(\Ham(\lambda), \ket{\psi_0}) \subseteq  V$. This completes the proof of the following:

\begin{lemma} \label{prelim:check_lem}
	Given a Hamiltonian model $\Ham_{\ket{\psi_0}}$, every reduced subspace encompasses every state $U_\lambda(t,t_0)~\ket{\psi_0}$, for all $t,t_0 \in  \RR$, and all $\lambda \in  \RR^M$.
\end{lemma}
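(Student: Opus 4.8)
The plan is to fix an arbitrary reduced subspace $V$ for $\Ham_{\ket{\psi_0}}$ and to show, for each fixed $\lambda \in \RR^M$ and each pair $t,t_0 \in \RR$, that the evolved state $U_\lambda(t,t_0)\ket{\psi_0}$ lies in $V$. The natural route factors the argument through the cyclic subspace $\Z(\Ham(\lambda),\ket{\psi_0})$: first I would show that the evolved state always lives in this cyclic subspace, and then that the cyclic subspace is contained in $V$.

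For the first inclusion I would expand the propagator as the convergent power series $U_\lambda(t,t_0) = \sum_{k=0}^\infty \frac{(-i)^k}{k!}(t-t_0)^k \Ham(\lambda)^k$, so that each partial sum applied to $\ket{\psi_0}$ is a finite linear combination of the vectors $\Ham(\lambda)^k\ket{\psi_0}$. By definition these vectors span $\Z(\Ham(\lambda),\ket{\psi_0})$, so every partial sum lies in the cyclic subspace, and hence so does the limit $\ket{\psi_\lambda(t)}$.

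For the second inclusion I would invoke the defining properties of a reduced subspace. Since $V$ is invariant for every operator in $\Ham(\RR^M)$, in particular $\Ham(\lambda)V \subseteq V$ for the fixed $\lambda$. Because $\ket{\psi_0} \in V$, a trivial induction on $k$ (base case $\Ham(\lambda)^0\ket{\psi_0}=\ket{\psi_0}$, inductive step $\Ham(\lambda)^{k+1}\ket{\psi_0} = \Ham(\lambda)\bigl(\Ham(\lambda)^k\ket{\psi_0}\bigr) \in \Ham(\lambda)V \subseteq V$) gives $\Ham(\lambda)^k\ket{\psi_0} \in V$ for all $k \geq 0$; taking the span yields $\Z(\Ham(\lambda),\ket{\psi_0}) \subseteq V$. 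Combining the two inclusions gives $\ket{\psi_\lambda(t)} \in V$, which is the claim.

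The only point requiring genuine care is the passage from the finite partial sums to the infinite series, since membership in $V$ must survive the limit; the invariance hypothesis by itself controls only polynomial (finite-order) expressions in $\Ham(\lambda)$. This is where the finite dimensionality is essential: $V$ is a finite-dimensional subspace of $\cH_d$ and is therefore closed, so the limit of a convergent sequence of vectors in $V$ remains in $V$. I would make this explicit rather than leave it implicit, as it is the lone step where the argument is more than formal bookkeeping.
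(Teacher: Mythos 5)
Your proof is correct and follows essentially the same route as the paper's: expand the propagator as a power series to place $\ket{\psi_\lambda(t)}$ in the cyclic subspace $\Z(\Ham(\lambda),\ket{\psi_0})$, then use invariance of $V$ and $\ket{\psi_0}\in V$ to show $\Z(\Ham(\lambda),\ket{\psi_0})\subseteq V$. Your explicit remark that passing from partial sums to the limit requires the closedness of the finite-dimensional subspace $V$ is a point the paper leaves implicit, and making it explicit is a small but genuine improvement in rigor.
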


With the above definitions in place we can be more specific about how to construct reduced order dynamical equations for quantum Hamiltonian models possessing a reduced subspace. Let the columns of a $d\times r$ matrix $\Phi$ be a basis for an $r$-dimensional reduced subspace for $\Ham_{\ket{\psi_0}}$, with $\Phi\dg \Phi = I_r$. The dynamics generated by the Schr\"odinger equation for the original model,
\[
	\frac{{\textrm d}}{{\textrm d}t}\ket{\psi_\lambda(t)} = -i \Ham(\lambda) \ket{\psi_{\lambda}(t)}, \hspace{1cm} \ket{\psi_{\lambda}(t_0)}=\ket{\psi_0},
\] 
is equivalent to the dynamics generated by the projected model
\begin{equation}
\label{eq:red_order_model}
	\frac{{\textrm d}}{{\textrm d}t}\ket{\upsilon_\lambda(t)} = -i \calhat{\Ham}(\lambda) \ket{\upsilon_\lambda(t)}, \hspace{1cm} \ket{\upsilon_\lambda(t_0)}=\Phi\dg \ket{\psi_0},
\end{equation}
where the projected, reduced-order Hamiltonian is the $r\times r$ matrix
\begin{equation}
\label{eq:red_order_ham}
\calhat{\Ham}(\lambda) = \Phi \dg \Ham(\lambda) \Phi,
\end{equation}
for all $\lambda$. The $r\times 1$ vector, $\ket{\upsilon_\lambda(t)}$, is a compressed, faithful representation of the quantum state; the full representation of the state can be recovered at any time since $\ket{\psi_\lambda(t)} = \Phi \ket{\upsilon_\lambda(t)}$.

\subsection{Reduction to Common Invariant Subspaces for Finitely Many Operators}
We now show that we can readily reduce the problem of finding reduced order models for the dynamics generated by $\Ham_{\ket{\psi_0}}$ to finding an invariant subspace (containing $\ket{\psi_0}$) common to a finite collection of related operators, which we denote by  $\Coeff(\Ham)$. By basic properties of invariant subspaces, the problem is then cast as that of understanding the algebra of operators generated by $\Coeff(\Ham)$. In this way, we convert the problem into an algebraic one, which will be addressed in later sections.

By considering $\Ham(\lambda=\beta_k)$, for each canonical basis vector $\beta_k$ for $\RR^M$, along with $\Ham(0)$, and taking their linear combinations, we see that a subspace is invariant for $\Ham$ if and only if it is invariant for the collection $\{H_0 + H_k \}_{k=1}^M \cup \{H_0\}$ (which is one possible choice for $\Coeff(\Ham)$). In fact, by exploiting that $\Ham$, as given by \erf{hamdef}, is an affine map $\RR^M \rightarrow  L(\cH_d)$ as a function of the tuning parameter $\lambda$, we can strengthen this consideration to point out that we can choose any $M+1$ affinely independent points in $\RR^M$ and the existence of an invariance for the full model is equivalent to the existence of a common invariant subspace for $\Ham$ at these points \footnote{For definitions of affine spaces and associated concepts (affine independence, affine basis, affine rank) see Ref. \cite[Ch. 2]{Gallier:2011vu}.}. A small generalization of this is the content of the following:

\begin{lemma} \label{prelim:lem1}
	Let $\mathcal{A}: \CC^M \rightarrow  L(\cH_d)$ be an affine map with full affine rank, for any $M \in  \NN$ and
	denote by $\mathcal{A}_L$ its linear part. Then, for $V \subset  \cH_d$ a non-trivial proper subspace, the
	following are equivalent:
	\begin{enumerate}[label=(\roman{*})]
	\item	If $\beta$ is an affine basis for $\CC^M$ and $\beta_* \subset  \beta$ is a linear basis, $V$ is invariant for the set
			$\mathcal{A}_L(\beta_*) \cup  \mathcal{A}(\beta \wo \beta_*)$ (and thus also for $\mathcal{A}(\beta)$);
				\label{prelim:lem1.i}
	\item	There exists an affine basis $\beta$ for $\CC^M$ such that $V$ is invariant for the set
			$\mathcal{A}(\beta)$;	\label{prelim:lem1.ii}
	\item	$V$ is invariant for the affine space $\im(\mathcal{A}) := \{\mathcal{A}(\gamma) \st \gamma \in  \CC^M \}$.
				\label{prelim:lem1.iii}
	\end{enumerate}
\end{lemma}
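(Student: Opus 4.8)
The plan is to reduce all three conditions to containment statements about a single object: the set $\Inv(V) := \{T \in L(\cH_d) \st TV \subseteq V\}$ of operators under which $V$ is invariant. The one conceptual step is to observe that $\Inv(V)$ is a \emph{linear subspace} of $L(\cH_d)$: if $T_1 V \subseteq V$ and $T_2 V \subseteq V$, then for any scalars $a,b$ and any $v \in V$ we have $(aT_1 + bT_2)v = a(T_1 v) + b(T_2 v) \in V$, since $V$ is itself a subspace; so $aT_1 + bT_2 \in \Inv(V)$. Once this is in hand, each of \ref{prelim:lem1.i}--\ref{prelim:lem1.iii} is just the assertion that a particular (affine or linear) subset of $L(\cH_d)$ lies inside $\Inv(V)$, and the equivalences become affine-linear bookkeeping. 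I would write $\mathcal{A}(\gamma) = \mathcal{A}_L(\gamma) + c$ with $c := \mathcal{A}(0)$, so that $\im(\mathcal{A}) = \im(\mathcal{A}_L) + c$ is an affine subspace with direction space $\im(\mathcal{A}_L)$.

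For \ref{prelim:lem1.ii} $\Leftrightarrow$ \ref{prelim:lem1.iii}, the only ingredient beyond the subspace property is that an affine map sends affine combinations to affine combinations: if $\beta$ is an affine basis of $\CC^M$ then every $\gamma \in \CC^M$ is an affine combination $\gamma = \sum_i t_i p_i$ (with $p_i \in \beta$ and $\sum_i t_i = 1$), whence $\mathcal{A}(\gamma) = \sum_i t_i \mathcal{A}(p_i)$. Thus \ref{prelim:lem1.iii} $\Rightarrow$ \ref{prelim:lem1.ii} is immediate (restrict $\im(\mathcal{A}) \subseteq \Inv(V)$ to the finite set $\mathcal{A}(\beta)$), while for \ref{prelim:lem1.ii} $\Rightarrow$ \ref{prelim:lem1.iii} I would note that $\mathcal{A}(\beta) \subseteq \Inv(V)$ forces every $\mathcal{A}(\gamma)$, being an affine --- hence in particular linear --- combination of the $\mathcal{A}(p_i)$, to lie in the linear subspace $\Inv(V)$, giving $\im(\mathcal{A}) \subseteq \Inv(V)$.

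For \ref{prelim:lem1.i} $\Leftrightarrow$ \ref{prelim:lem1.iii}, the idea is to split the affine containment into its linear direction and a single base point. Writing $\beta \wo \beta_* = \{q\}$ --- a single point, since $|\beta| = M+1$ and $|\beta_*| = M$ --- I would use that $\beta_*$ spans $\CC^M$, so $\vspan \mathcal{A}_L(\beta_*) = \im(\mathcal{A}_L)$; hence, by the subspace property, invariance for $\mathcal{A}_L(\beta_*)$ is equivalent to $\im(\mathcal{A}_L) \subseteq \Inv(V)$. Combining this with $\mathcal{A}(q) \in \Inv(V)$ and the identity $\im(\mathcal{A}) = \mathcal{A}(q) + \im(\mathcal{A}_L)$ gives $\im(\mathcal{A}) \subseteq \Inv(V)$ (using closure of $\Inv(V)$ under sums), and the converse reverses these steps (using closure under differences to recover $\im(\mathcal{A}_L) \subseteq \Inv(V)$ from $\im(\mathcal{A}) \subseteq \Inv(V)$ and $\mathcal{A}(q) \in \Inv(V)$). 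The parenthetical ``and thus also for $\mathcal{A}(\beta)$'' is then immediate, since $\mathcal{A}(\beta) \subseteq \im(\mathcal{A})$.

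The only genuinely delicate point --- and where I would be most careful --- is the affine/linear interplay hidden in \ref{prelim:lem1.i}: one must check that an affine basis $\beta$ of $\CC^M$ really does contain a subset $\beta_*$ that is a linear basis, so that \ref{prelim:lem1.i} is usable when deriving \ref{prelim:lem1.iii}. This is a statement about the domain alone and follows because affine independence of the $M+1$ points forces the $(M+1)\times M$ matrix of their coordinates to have rank $M$, so some $M$ of them are linearly independent; in any case the explicit choice $\beta = \{0, e_1, \ldots, e_M\}$, $\beta_* = \{e_1, \ldots, e_M\}$ already suffices to drive \ref{prelim:lem1.i} $\Rightarrow$ \ref{prelim:lem1.iii}. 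Full affine rank of $\mathcal{A}$ then guarantees $\mathcal{A}_L$ is injective, so $\mathcal{A}_L(\beta_*)$ is an honest basis of the $M$-dimensional space $\im(\mathcal{A}_L)$, although for the invariance equivalence only the (always valid) spanning property is actually used. Everything else is a direct consequence of $\Inv(V)$ being a linear subspace, so I anticipate no substantial obstacle beyond stating the containments cleanly.
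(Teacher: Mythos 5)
Your proof is correct and follows essentially the same route as the paper's: both rest on the facts that invariance of $V$ is preserved under linear combinations of operators (your observation that $\Inv(V)$ is a linear subspace of $L(\cH_d)$), that affine maps preserve affine combinations, and that an affine basis of $\CC^M$ contains a linear basis. Your containment bookkeeping via $\Inv(V)$, with the identity $\im(\mathcal{A}) = \mathcal{A}(q) + \im(\mathcal{A}_L)$ and the remark that full affine rank is never actually needed, is simply a cleaner packaging of the paper's terser cyclic proof of (i) $\Rightarrow$ (ii) $\Rightarrow$ (iii) $\Rightarrow$ (i).
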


\begin{proof}
	These results follow trivially from the \emph{affineness} of $\mathcal{A}$ and the preservation of
	the invariance of a subspace under linear combinations of maps for which that subspace is invariant.
	Note also that whenever $\beta \subset  \CC^M$, the affine hull $\Aff(\beta)$ has dimension at most the linear hull
	of $\beta$. Therefore, if $\beta$ is an affine basis, then it must contain a linear basis $\beta_*$ and
	$\beta \wo \beta_* = \{\gamma \}$. Then, choosing $\beta$ to be the canonical orthonormal basis for $\CC^M$ along with
	the zero vector, we immediately get \ref{prelim:lem1.i} $\implies$ \ref{prelim:lem1.ii}.
	
	For \ref{prelim:lem1.ii} $\implies$ \ref{prelim:lem1.iii}, note that since $\beta$ is an affine basis for $\CC^M$,
	$\Aff(\mathcal{A}(\beta)) = \im(\mathcal{A})$. And since affine combinations are just linear combinations
	with a restriction on the coefficients, we have by linearity that $V$ is invariant for $\im(\mathcal{A})$.
	
	Now we need only to see that \ref{prelim:lem1.iii} $\implies$ \ref{prelim:lem1.i}. Let $\beta := \beta_* \cup  \{\gamma \}$ be
	an affine basis for $\CC^M$ with $\beta_*$ a linear basis. By the assumption, $V$ must be invariant for
	$\mathcal{A}(\beta) = \mathcal{A}(\beta_*) \cup  \{\mathcal{A}(\gamma) \}$. Thus, by linearity, it must also be
	invariant for $\mathcal{A}_L(\beta_*) = \mathcal{A}(\beta_*) - \mathcal{A}(\gamma)$.
\end{proof}

\begin{remarks}\begin{multirem}
\rem	
This is just an elaboration on the observation that invariant subspaces for a collection of operators are
	preserved under affine transformations of the scalar coefficient variables describing the affine
	combinations among the operators. 

\rem	
In the sense of inhomogeneous systems, it is really the classical Rouch\a'e-Capelli theorem that is at
	work here, allowing us to rewrite the constituent operators of $\mathcal{A}$ as linear combinations
	of $\mathcal{A}(\gamma)$ evaluated at $M+1$ affinely independent points. The trend continues: if we take
	$\mathcal{A}$ to be linear, then it is clear that invariance for $\mathcal{A}$ by anything less than
	a basis for $\CC^M$ immediately implies a linear dependence among its constituent operators (and
	conversely). A much stronger version of this is the content of a theorem proven by Burnside, which
	will be the topic of discussion in \textsection \ref{sec:cert_general}.
\end{multirem}\end{remarks}

We record the application of Lemma \ref{prelim:lem1} to reducing our original problem, in the following:

\begin{proposition} \label{prelim:prop1}
	An $M$-parameter Hamiltonian model acting on $\cH_d$,
	\[
		\Ham(\lambda) = H_0 + \sum_{k=1}^M \lambda_k H_k
	\]
	has a non-trivial proper invariant subspace $V \subset  \cH_d$ (independent of $\lambda \in  \RR^M$) if and only if $V$ is invariant for a set of operators $\Coeff(\Ham)$ given by the image of an affine basis $\beta$ of $\RR^M$ under $\Ham$, or equivalently, $\Ham(\beta_*) \cup  \Ham(\beta \wo \beta_*)$, for $\beta_*$ a linear basis in $\beta$.
	\qed
\end{proposition}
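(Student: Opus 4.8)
The plan is to read Proposition \ref{prelim:prop1} as the specialization of Lemma \ref{prelim:lem1} to the affine map $\mathcal{A} = \Ham$, the only wrinkle being that the lemma is phrased over $\CC^M$ while the model is parameterized by $\RR^M$. First I would note that \erf{hamdef} exhibits $\Ham$ as an affine map $\RR^M \to L(\cH_d)$ with linear part $\Ham_L(\lambda) = \sum_{k=1}^M \lambda_k H_k$ and constant term $H_0$. The hypothesis that a non-trivial proper subspace $V$ be invariant for the full model, i.e.\ for $\Ham(\lambda)$ at every $\lambda \in \RR^M$, is then exactly the statement that $V$ is invariant for $\im(\Ham)$, which is condition \ref{prelim:lem1.iii} of Lemma \ref{prelim:lem1} read over the reals.

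I would then invoke the equivalence \ref{prelim:lem1.i}--\ref{prelim:lem1.iii}. The substantive point is that this equivalence rests only on two facts: that the image under an affine map of an affine combination of points is the corresponding affine combination of the images, and that invariance of a fixed subspace is preserved under linear (hence affine) combinations of operators. Neither fact uses that the scalar field is $\CC$ rather than $\RR$, nor that $\Ham$ has full affine rank (linear independence of the $H_k$); so the proof of Lemma \ref{prelim:lem1} transfers verbatim with $\CC^M$ replaced by $\RR^M$. This yields that $V$ is invariant for $\im(\Ham)$ if and only if it is invariant for $\Ham(\beta)$, for any affine basis $\beta$ of $\RR^M$, which is precisely the claimed equivalence.

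It remains to match the notation. Choosing the affine basis $\beta = \{0, e_1, \dots, e_M\}$ with linear basis $\beta_* = \{e_1, \dots, e_M\}$ and $\beta \wo \beta_* = \{0\}$, evaluation gives $\Ham(e_k) = H_0 + H_k$ and $\Ham(0) = H_0$, so $\Ham(\beta) = \{H_0 + H_k\}_{k=1}^M \cup \{H_0\}$ reproduces the set $\Coeff(\Ham)$ fixed just before the lemma; the clause ``$\Ham(\beta_*) \cup \Ham(\beta \wo \beta_*)$'' is merely the decomposition of $\Ham(\beta)$ along this splitting. I expect no real obstacle beyond justifying the move from $\CC^M$ to $\RR^M$. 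Should one wish to avoid re-examining the lemma's proof, an equivalent route is to complexify: invariance for $\Ham(\RR^M)$ forces invariance for $H_0$ and each $H_0 + H_k$, hence for each $H_k$ by subtraction, hence for every complex-linear combination, so $V$ is invariant for $\Ham(\CC^M)$ and Lemma \ref{prelim:lem1} applies directly.
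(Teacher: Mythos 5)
Your proposal is correct and takes essentially the same route as the paper: Proposition \ref{prelim:prop1} is recorded there as an immediate application of Lemma \ref{prelim:lem1} (hence the bare \qed), with the choice $\Ham(\beta) = \{H_0 + H_k\}_{k=1}^M \cup \{H_0\}$ already singled out in the paragraph preceding the lemma, exactly as in your notation-matching step. Your explicit handling of the $\RR^M$-versus-$\CC^M$ mismatch --- either rerunning the lemma's proof over $\RR$ or complexifying via subtraction of $H_0$, both of which are sound since $V$ is a complex subspace of $\cH_d$ --- addresses a point the paper silently elides, as is your observation that the full-affine-rank hypothesis is never actually used.
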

\begin{remark}
	Letting $\beta := \{\beta_k\}_{k=1}^M$ be the canonical orthonormal basis for $\RR^M$ and $\Ham_L$ the
	linear part of $\Ham$, we see that $\Ham_L(\beta) \cup  \Ham(0) = \{H_k \}_{k=0}^M$.  Thus, by the above,
	we can take $\Coeff(\Ham)$ to be this set of operators, which is a rather natural choice. However, in the
	sequel, we keep to the understanding that $\Coeff(\Ham)$ is any collection of operators governed by
	Proposition \ref{prelim:prop1}.
\end{remark}

Since the invariance of subspaces is preserved not only under taking linear combinations, but also under compositions of operators, the question of finding reduced subspaces for a model $\Ham_{\ket{\psi_0}}$ really involves the full algebraic structure of the associated operator algebra. This is convenient since characterizing substructures of the operator algebra $L(\cH_d)$ is easier than characterizing subspaces of the Hilbert space $\cH_d$. This is the topic of the next subsection, but first we list some definitions that will be useful in what follows.

\begin{definition}	\label{def:algebra}
Let $S = \{ A_1, A_2, ..., A_K\}$ be a set of operators with each $A_i \in  L(\cH_d)$. The operator algebra $\A(S)$, which is a subalgebra of $L(\cH_d)$, is defined as
\[
	\A(S) = \{ p(A_1, A_2, ..., A_K) | ~ p \text{ is a polynomial in $K$ non-commuting variables} \}
\]
\end{definition}

For a Hamiltonian model $\Ham$ acting on $\cH_d$, we will slightly abuse notation for simplicity and denote $\A(\Ham)$ as the algebra generated by $\Coeff(\Ham)$, since $\A(\Ham) = \A(\Coeff(\Ham))$. $\A(\Ham)$ will play a central role in algebraic certificates for the presence of a reduced subspace for $\Ham_{\ket{\psi_0}}$.

Next, it will be useful (mostly for notational convenience) to define the notion of a \emph{lattice} of subspaces of a given vector space (see \cite[\textsection~1.8]{israel2006invariant}).
\begin{definition} \label{def:lattice}
Given a vector space $V$, a set $\mathcal{L}$ of subspaces of $V$ is called a \emph{lattice} if $\{0\}$ and $V$ belong to $\mathcal{L}$ and it is closed under taking intersections and direct sums. Clearly the elements of $\mathcal{L}$ are (partially) ordered by the subspace inclusion relation, which we will denote by $\leq $; collectively, we denote a lattice along with its order relation as the tuple, $(\mathcal{L}, \leq )$.
\end{definition}

\noindent As an example, given any vector space $V$, the family of all subspaces of $V$ trivially forms a lattice, $(\Lat(V), \leq )$. Thus, whenever a subspace $W$ of $V$ contains another subspace $W'$, we can express this fact by the notation, $W' \leq  W$ and we attain a \emph{chain} of inclusions: $\{0\} \leq  W' \leq  W \leq  V$. We also define the following notation that allows us to consider just the non-trivial elements of a subspace lattice:

\begin{definition}
If $V$ is a finite-dimensional vector space and $\mathcal{L}$ is a lattice on $V$, then we denote by $\mathcal{L}^\circ $ its subset of all but the extremal (minimal and maximal) elements, $\{0\}$ and $V$.
\end{definition}

For a set $S$ of $d$-dimensional operators, we will collect its family of invariant subspaces in $\Inv(S) := \{V \in  \Lat(\cH_d) \st \forall A\in S, AV \subseteq  V \}$, which is obviously closed under taking direct sums and intersections and thus forms a lattice $(\Inv(S), \leq )$; clearly this is contained in $\Lat(\cH_d)$.

Finally, following the language of representation theory, we make the definitions: 
\begin{definition}
A collection of operators $S$ is \emph{irreducible} if $\Inv(S)^\circ  = \varnothing $. Similarly, a subalgebra $\A$ is \emph{irreducible} if $\Inv(\A)^\circ  = \varnothing $.
\end{definition}
\noindent Irreducibility means that the only invariant subspaces common to all operators in $S$ (or the algebra $\A$) are the trivial subspaces, $\{0\}$ and $\cH_d$.

\subsection{A certificate for the existence of invariant subspaces}
\label{sec:cert_general}
A classical result in the theory of matrix algebras, stemming from the representation theory of groups, shows that we can determine the existence of an invariant subspace for the Hamiltonian model $\Ham$ by examining the \emph{dimension} of $\A(\Ham)$ as a linear subspace of $L(\cH_d)$. Indeed, the general theorem we use is the following:

\begin{theorem}[\cite{Lomonosov200445}, \cite{jacobson1953lectures}] \label{red:burnside}
	A subalgebra $\A \leq  \Mat_d(\CC)$ is irreducible if and only if $\A = \Mat_d(\CC)$.
\end{theorem}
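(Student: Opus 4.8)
The plan is to establish this (a form of Burnside's theorem) working over $\CC$, so that every operator on a nonzero finite-dimensional space has an eigenvalue; this is what makes the argument go through. The forward implication is immediate: if $\A = \Mat_d(\CC)$, then for any nonzero $\ket{v}$ and any $\ket{w}$ there is a matrix carrying $\ket{v}$ to $\ket{w}$, so any nonzero $\A$-invariant subspace already equals $\cH_d$, giving $\Inv(\A)^\circ = \varnothing$. For the converse, assume $\A$ is irreducible; I may assume $d \geq 2$, the case $d=1$ being vacuous. I would first reduce irreducibility to \emph{transitivity}. The common null space $\{\ket{v} : A\ket{v}=0\ \forall A\in\A\}$ is $\A$-invariant, hence $\{0\}$ or $\cH_d$; the latter forces $\A=\{0\}$, which is reducible for $d\geq 2$, so it is $\{0\}$. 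Consequently, for every nonzero $\ket{v}$ the orbit $\A\ket{v} = \vspan\{A\ket{v} : A\in\A\}$ is a nonzero invariant subspace and therefore equals $\cH_d$. This transitivity is the workhorse of everything that follows.

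The crux is to produce a rank-one element of $\A$. I would take a nonzero $A\in\A$ of \emph{minimal} rank $r$ and show $r=1$. Write $R=\im(A)$. For any $B\in\A$ the operator $AB$ maps $R$ into $R$, so $AB|_R$ is an operator on the finite-dimensional complex space $R$ and has an eigenvalue $\lambda$. Then $ABA-\lambda A = (AB-\lambda I)A$ lies in $\A$ and has range $(AB-\lambda I)(R)$ of dimension strictly below $r$, because $AB|_R - \lambda I_R$ is singular. Minimality of $r$ forces $ABA-\lambda A=0$, i.e. $AB|_R = \lambda I_R$ is a scalar for every $B\in\A$. Now fix a nonzero $\ket{w}\in R$; by transitivity, for every $\ket{t}\in\cH_d$ there is $B\in\A$ with $B\ket{w}=\ket{t}$, whence $A\ket{t}=AB\ket{w}\in\CC\ket{w}$. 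Thus $R=\im(A)\subseteq\CC\ket{w}$ is one-dimensional, so $r=1$.

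Finally I would bootstrap to the whole algebra. Write the rank-one operator as $A=\ket{u}\bra{v}$ with $\ket{u},\ket{v}\neq 0$. Since $\A$ is an algebra, $BAC = \ket{Bu}\bra{C\dg v}\in\A$ for all $B,C\in\A$. Transitivity of $\A$ gives $\{\ket{Bu} : B\in\A\}=\cH_d$, and transitivity of the adjoint algebra $\A\dg := \{C\dg : C\in\A\}$ gives $\{\ket{C\dg v} : C\in\A\}=\cH_d$; here $\A\dg$ is itself irreducible because $W\mapsto W^\perp$ is an inclusion-reversing bijection between $\Inv(\A\dg)$ and $\Inv(\A)$. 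Hence $\A$ contains every rank-one operator $\ket{a}\bra{b}$, and since these span $\Mat_d(\CC)$ we conclude $\A=\Mat_d(\CC)$.

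I expect the middle step to be the main obstacle: the minimal-rank/eigenvalue argument producing a rank-one operator. Algebraic closedness of $\CC$ is essential there (the eigenvalue $\lambda$ of $AB|_R$ must exist), and one must verify carefully that $ABA-\lambda A$ genuinely drops rank while remaining in $\A$, so that minimality of $r$ can be invoked to force it to vanish.
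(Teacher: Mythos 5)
Your proof is correct, but it takes a genuinely different route from the paper's. You give the classical direct proof of the matrix-algebra statement: reduce irreducibility to transitivity (via the common null space), extract a nonzero element $A \in \A$ of minimal rank $r$ and force $r=1$ by picking an eigenvalue $\lambda$ of $AB|_{\im(A)}$ so that minimality kills $ABA - \lambda A$, then sweep out all rank-one operators $\ket{Bu}\bra{C\dg v}$ using transitivity of $\A$ and of $\A\dg$ --- your justification that $\A\dg$ is irreducible, via the inclusion-reversing correspondence $W \mapsto W^\perp$, is the same observation the paper uses in \textsection~3.3. Your argument is essentially the streamlined proof in the Lomonosov--Rosenthal reference \cite{Lomonosov200445} that the theorem statement cites. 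The paper's Appendix~A deliberately goes another way: it derives the statement as a corollary of Burnside's original 1905 theorem on \emph{group} representations, forming the group $\Gr = \langle \A \cap \GL_d(\CC)\rangle$ generated by the invertible elements of $\A$, showing $\Gr \subseteq \A$ by expressing inverses as polynomials via Cayley--Hamilton, proving $\Inv(\Gr) = \Inv(\A)$ by a topological limiting argument (every singular $A \in \A$ is approximated along $(1-\tau)A + \tau I$ by invertibles in $\A$, and subspaces are closed), and then applying the group theorem to the inclusion representation. What each buys: yours is self-contained and purely linear-algebraic, needing nothing beyond eigenvalue existence over $\CC$, and it establishes transitivity and the presence of rank-one elements along the way (the substance behind density-type results); the paper's is shorter modulo the classical group result and makes the historical connection to representation theory explicit, which is its stated purpose. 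One small caveat on your write-up: the case $d=1$ is not quite ``vacuous'' --- the zero algebra is irreducible in $\cH_1$ under the paper's definition yet unequal to $\Mat_1(\CC)$ --- but this is immaterial in context, since the algebras in play are unital ($I_d \in \mathscr{B}(\Ham)$ by construction), and for unital subalgebras of $\Mat_1(\CC)$ the claim is trivial.
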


\begin{remark}
	See Appendix \ref{app:burnside_proof} for a derivation of this theorem from the more classical version
	that was proved originally by Burnside and belongs to group representation theory.
\end{remark}

This immediately results in the following certificate for the existence of a non-trivial proper invariant subspace for a Hamiltonian model:

\begin{proposition} \label{red:burnprop}
	A Hamiltonian model $\Ham(\lambda)$ acting on $\cH_d$ keeps invariant a non-trivial proper subspace of $\cH_d$ if and only if the subalgebra $\A(\Ham)$ generated by $\Coeff (\Ham)$ is a proper subalgebra of $L(\cH_d)$ -- viz., $\dim(\A(\Ham)) < \dim(L(\cH_d))$.
	\qed
\end{proposition}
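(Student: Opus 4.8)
The plan is to establish the statement as a short chain of equivalences that threads together the reduction already in hand (Proposition \ref{prelim:prop1}), a standard passage from a generating set to the algebra it generates, and Burnside's theorem (Theorem \ref{red:burnside}). First I would invoke Proposition \ref{prelim:prop1} to replace the condition that $\Ham$ keeps invariant some non-trivial proper subspace $V$ (independent of $\lambda \in \RR^M$) with the equivalent condition that $V$ is a common invariant subspace of the finite operator set $\Coeff(\Ham)$; in the lattice notation of \textsection\ref{sec:prelim} this reads $\Inv(\Coeff(\Ham))^\circ \neq \varnothing$.

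The key step is to show $\Inv(\Coeff(\Ham)) = \Inv(\A(\Ham))$, i.e.\ that a subspace is invariant for the generators exactly when it is invariant for the whole generated algebra. One inclusion is immediate since $\Coeff(\Ham) \subseteq \A(\Ham)$. For the other, I would argue that invariance propagates through the two operations that build $\A(\Ham)$ out of $\Coeff(\Ham)$: if $V$ is invariant for operators $A$ and $B$, then $(\alpha A + \beta B)V \subseteq V$ and $(AB)V = A(BV) \subseteq AV \subseteq V$, so by induction on the degree/word structure every non-commuting polynomial in $\Coeff(\Ham)$ preserves $V$. A constant term, i.e.\ a multiple of the identity, is harmless since the identity fixes every subspace, so whether $\A(\Ham)$ is regarded as unital or not does not affect its invariant-subspace lattice. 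Consequently $V$ is a common non-trivial proper invariant subspace of $\Coeff(\Ham)$ if and only if $\A(\Ham)$ is \emph{reducible} in the sense of \textsection\ref{sec:prelim}, that is $\Inv(\A(\Ham))^\circ \neq \varnothing$.

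Finally I would apply Theorem \ref{red:burnside} in contrapositive form: since $\A(\Ham)$ is a subalgebra of $\Mat_d(\CC) = L(\cH_d)$, it is irreducible precisely when $\A(\Ham) = \Mat_d(\CC)$, hence reducible precisely when $\A(\Ham)$ is a \emph{proper} subalgebra; and because $\A(\Ham)$ is a linear subspace of the finite-dimensional space $L(\cH_d)$, being proper is the same as $\dim(\A(\Ham)) < \dim(L(\cH_d))$. Chaining the three equivalences yields the claim. The only genuinely substantive content is the middle step, the identification $\Inv(\Coeff(\Ham)) = \Inv(\A(\Ham))$; everything else is bookkeeping, and the reduction to a \emph{finite} generating set (so that $\A(\Ham)$ is literally the matrix algebra to which Burnside's theorem applies) has already been carried out in Proposition \ref{prelim:prop1}. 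I therefore expect no serious obstacle: the main care needed is simply to phrase the generator-to-algebra invariance argument cleanly and to note that the identity plays no role in the invariant-subspace lattice.
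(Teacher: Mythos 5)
Your proposal is correct and matches the paper's (implicit) argument exactly: the paper states Proposition \ref{red:burnprop} as an immediate consequence of Proposition \ref{prelim:prop1} together with Theorem \ref{red:burnside}, relying on the observation made just before Definition \ref{def:algebra} that subspace invariance is preserved under linear combinations and compositions, so that $\Inv(\Coeff(\Ham)) = \Inv(\A(\Ham))$. Your write-up simply makes explicit the generator-to-algebra step and the harmlessness of the identity, both of which the paper leaves to the reader.
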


Application of this certificate requires determining $\dim(\A(\Ham))$. There are various methods to calculate this quantity given $\Coeff (\Ham)$, and we take the approach of constructing a basis for $\A(\Ham)$, which we call the \emph{Burnside basis}, and denote by $\mathscr{B}(\Ham)$. 
This basis is a maximal linearly independent subset (over $\CC$) of all monomials in $\A(\Ham)$ generated by taking products of the operators in $\Coeff(\Ham)$. An explicit algorithm for generating $\Burn(\Ham)$ is detailed in Appendix \ref{app:burnside_alg}; it proceeds by multiplying elements of $\Coeff(\Ham)$ together, and then the resulting matrices with elements of $\Coeff(\Ham)$ again, and so on. Subsets of these matrix products are added to $\Burn(\Ham)$ according to some decision procedure to determine linear independence. The algorithm terminates after a finite number of steps since the algebra is finite-dimensional, and $\dim(\A(\Ham)) = |\mathscr{B}(\Ham)|$.

This certificate for the existence of a non-trivial proper invariant subspace is computationally straightforward and has the benefit that it yields information (the Burnside basis) that enables construction of a \emph{reduced} subspace for $\Ham_{\ket{\psi_0}}$, a task that is addressed in \textsection  \ref{sec:burnside_alg}. We postpone a discussion of other possible certificates to \textsection  \ref{sec:disc}.

\subsection{Structure of $\A(\Ham)$}
\label{sec:struct_A}
The study of invariant subspaces for a Hamiltonian model $\Ham(\lambda)$ inevitably requires consideration of the full operator algebra $\A(\Ham) = \A(\Coeff(\Ham))$. Since all operators in $\Coeff(\Ham)$ are self-adjoint, the algebra $\A(\Ham)$ is also \emph{self-adjoint}, that is, $A^\dagger \in  \A(\Ham)$ whenever $A \in  \A(\Ham)$. Moreover, for any operator $T \in  L(\cH_d)$, the spaces $\Inv(T)$ and $\Inv(T^\dagger)$ are in one-to-one correspondence by taking orthogonal complements, which means that $\Inv(\A(\Ham))$ is closed under orthogonal complementation. Thus, for any chain $\{0 \} = V_0 < \cdots  < V_K = \cH_d$ in
$\Inv(\A(\Ham))$, we have the orthogonal decomposition \cite[\textsection~11.2]{israel2006invariant}:
\begin{equation} \label{red:decomp}
	\cH_d = \bigoplus_{k=1}^K \mathcal{E}_k,
\end{equation}
where $\mathcal{E}_k \in  \Inv(\A(\Ham))$ is the orthogonal complement of $V_{k-1}$ in $V_k$, for each $1\leq k\leq K$.
Therefore, $\A(\Ham)$ can be simultaneously transformed to a block-diagonal form
\begin{equation} \label{red:blkdiag}
	UAU^\dagger =
	\left( \begin{array}{ccc}
	A_1	& 	& \makebox(0,0){\text{\large 0}} \\
		& \ddots \\
	\makebox(0,0){\text{\large 0}} & & A_K
	\end{array} \right),
\end{equation}
for each $A \in  \A(\Ham)$, with $U$ the unitary change of basis transformation from the canonical orthonormal basis for $\cH_d$ to an orthonormal basis for the decomposition in \erf{red:decomp}. It is well-known (see \cite[Chap. 4]{jacobson1953lectures}) that the converse of this discussion also holds, due to which we are led to the following simple characterization:
\begin{proposition} \label{red:prop1}
	Let $\Ham(\lambda)$ be a Hamiltonian model acting on $\cH_d$. Then, $\Inv(\A(\Ham))^\circ  \neq  \varnothing $ if and only if $\Coeff(\Ham)$ is non-trivially simultaneously block-diagonalizable by a unitary transformation $U \in  L(\cH_d)$.
\end{proposition}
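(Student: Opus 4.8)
The plan is to prove both implications by invoking the self-adjointness of $\A(\Ham)$ together with the orthogonal-decomposition machinery already recorded in \erf{red:decomp} and \erf{red:blkdiag}. The essential point is that a single invariant subspace yields only block-\emph{triangular} structure in general; it is self-adjointness that upgrades this to the genuine block-\emph{diagonal} structure the statement demands, and that lets the change of basis be taken unitary.

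For the forward implication, suppose $\Inv(\A(\Ham))^\circ \neq \varnothing$, so that some non-trivial proper subspace $V$ is invariant for every operator in $\A(\Ham)$. Because $\A(\Ham)$ is self-adjoint, $\Inv(\A(\Ham))$ is closed under orthogonal complementation, so $V^\perp$ is invariant as well; thus the chain $\{0\} < V < \cH_d$ (which may be refined to a maximal chain) produces via \erf{red:decomp} an orthogonal decomposition $\cH_d = \bigoplus_k \mathcal{E}_k$ into invariant summands with at least two terms. Choosing $U$ to be the unitary carrying the canonical basis to an orthonormal basis adapted to this decomposition, every $A \in \A(\Ham)$ --- in particular every element of $\Coeff(\Ham) \subseteq \A(\Ham)$ --- acquires the block-diagonal form \erf{red:blkdiag}, and the diagonalization is non-trivial since there are at least two blocks.

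For the reverse implication, suppose a unitary $U$ simultaneously block-diagonalizes $\Coeff(\Ham)$ into $K \geq 2$ blocks, equivalently that there is an orthogonal decomposition $\cH_d = \bigoplus_{k=1}^K \mathcal{E}_k$ with each $\mathcal{E}_k$ invariant for every operator in $\Coeff(\Ham)$. Since invariance of a subspace is preserved under both linear combinations and compositions, and $\A(\Ham)$ consists precisely of polynomials in the elements of $\Coeff(\Ham)$, each $\mathcal{E}_k$ is then invariant for all of $\A(\Ham)$. As $K \geq 2$, each $\mathcal{E}_k$ is a non-trivial proper subspace, so $\mathcal{E}_1 \in \Inv(\A(\Ham))^\circ$ and this set is non-empty.

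The only genuinely subtle point is the role of self-adjointness in the forward direction: without it the orthogonal complement of an invariant subspace need not be invariant, and one would obtain merely simultaneous block-\emph{triangularization}. I would therefore emphasize that it is precisely the self-adjointness of $\A(\Ham)$, inherited from the self-adjoint generators in $\Coeff(\Ham)$, that both guarantees a unitary $U$ and forces the block-diagonal rather than block-triangular form. Everything else is routine bookkeeping already carried out in the discussion preceding the proposition.
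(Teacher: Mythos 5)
Your proof is correct and follows essentially the same route as the paper, whose ``proof'' is precisely the discussion preceding the proposition: self-adjointness of $\A(\Ham)$ forces $\Inv(\A(\Ham))$ to be closed under orthogonal complementation, yielding the orthogonal decomposition \erf{red:decomp} and the unitary block-diagonal form \erf{red:blkdiag}, with the converse (which you spell out via preservation of invariance under sums and products of polynomials in $\Coeff(\Ham)$) cited there as well-known. Your explicit emphasis that self-adjointness is what upgrades block-triangular to block-diagonal is a faithful elaboration of the paper's reasoning, not a departure from it.
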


\begin{remark}
	In fact, one need not necessarily find a \emph{unitary} block-diagonalizing matrix: a result of Laffey \cite[Lemma 1]{Laffey1977189} shows that the $\Coeff(\Ham)$ is non-trivially simultaneously block-diagonalizable by a unitary if and only if $\Coeff(\Ham)$ can be non-trivially simultaneously block-diagonalized by some non-singular matrix. See \cite[\textsection~2]{Shapiro1979129} for more discussion.
\end{remark}

If we denote by $\A_k(\Ham)$ the projection of $\A(\Ham)$ onto $\mathcal{E}_k$ ($1\leq k\leq K$), then each $\A_k(\Ham)$ ($1\leq k\leq K$) forms an algebra consisting of the $A_k$ that appear in the block-diagonalization above, corresponding to each $A \in  \A(\Ham)$. Since it may happen that for every $A \in \A(\Ham)$, there is some $1 \leq k \leq K$ such that $A_k$ in \erf{red:blkdiag} is repeated $j$ times, we can collect all repetitions consecutively and denote this by the following notation:
\[
	\A_k(\Ham) \otimes \Unit_j := \{ \underbrace{A_k \oplus \cdots \oplus A_k}_\text{$j$ times} \st A_k \in \A_k(\Ham) \}.
\]
This allows us to write a decomposition analogous to \erf{red:decomp} for the operator algebra:
\[
	U \A(\Ham) U^\dagger = \{A_1 \oplus  \cdots  \oplus  A_K \such A_k \in  \A_k(\Ham), 1\leq k\leq K \} = \bigoplus_{k=1}^{K'} \A_k(\Ham) \otimes \Unit_{j_k}.
\]
If we further continue the decomposition in \erf{red:decomp} by taking the corresponding chain in $\Inv(\A(\Ham))$ to be \emph{maximal}, meaning that the $\mathcal{E}_k$ do not contain any invariant subspaces, then each $\A_k(\Ham)$ must be irreducible and so by Theorem \ref{red:burnside}, $\A_k(\Ham) = \Mat_{q_k}(\CC)$, with $q_k = \dim(\mathcal{E}_k)$. Accordingly, it is well-known (see \cite[\textsection ~ 5.1]{farenick2001algebras}) that there is a unitary $U$ such that the \emph{full} decomposition of the algebra is
\begin{equation} \label{red:fulldecomp}
	U \A(\Ham) U^\dagger = \bigoplus_{k=1}^m \Mat_{q_k} \otimes \Unit_{j_k},
\end{equation}
where $m$ is the dimension of the centre of $\A(\Ham)$.

\vspace{2mm}

\begin{remarks}\begin{multirem}
\rem		In terms of Proposition \ref{red:prop1}, the existence of the invariant subspaces that we will exploit below is equivalent to the existence of a unitary transformation that block diagonalizes the Hamiltonian model $\Ham(\lambda)$ for any value of $\lambda$. Then by Wigner's theorem \cite{Bargmann:1964eo} what we are are exploiting from a physics perspective are unitary symmetries of the system.
\rem		Recent methods for performing simultaneous block diagonalization of matrix algebras \cite{Murota:2010wu} could be used to construct reduced subspaces of Hamiltonian models based on Proposition \ref{red:prop1}. These methods rely on eigenspace computations and therefore will generally be sensitive to numerical uncertainty.  In contrast, the algebraic method we develop in the following involves only rational computations, aside from the inherent approximations that may be necessary for the \emph{a priori} specification of the system (\eg specification of the initial state).
\end{multirem}\end{remarks}


\section{Constructing reduced subspaces for a Hamiltonian model}
\label{sec:construct}
In \textsection  \ref{sec:cert_general} we derived an algebraic condition for the existence of a non-trivial invariant subspace for the Hamiltonian model $\Ham(\lambda)$. Evaluating this algebraic condition (which is essentially a comparison between the dimension of the subalgebra generated by $\Coeff (\Ham)$ and the dimension of the space of all operators on the state space) provides a certificate for whether a non-trivial invariant subspace will exist for the model. In this section we extend this result to explicitly construct a reduced subspace for the Hamiltonian model when it evolves a particular initial state. This will allow us to exploit the reduced subspace to construct a reduced-order model for the dynamics, as in \erf{eq:red_order_model}. In the following we present two different techniques to explicitly construct the reduced subspace of interest. 

\begin{remark}
	We note for completeness that after the construction of $\A(\Ham)$ one may wish to check if $\ket{\psi_0}\bra{\psi_0} \in \A(\Ham)$. If so, every invariant subspace of the subalgebra is also a reduced subspace or the orthogonal complement of a reduced subspace. In this case, techniques to find invariant subspaces of the subalgebra $\A(\Ham)$ can be utilized to identify the minimal reduced subspace. In the following we do not make this assumption, but rather formulate general methods for finding reduced subspaces.
\end{remark}

Before describing the methods, it will be useful to define the orbit of a state under a subalgebra:
\begin{definition}
\label{def:orbit}
Given a state $\ket{\phi} \in  \cH_d$ and a subalgebra $\A \leq  L(\cH_d)$, the \emph{orbit} of $\ket{\phi}$ under $\A$ is denoted and defined as $\A\cdot  \ket{\phi}= \{A \ket{\phi} \st A \in  \A \}$.
\end{definition}
\noindent Note that $\A\cdot  \ket{\phi}$ is a subspace of every invariant subspace for $\A$ that contains $\ket{\phi}$. In terms of our Hamiltonian model $\Ham_{\ket{\psi_0}}$, this means that the orbit of the state $\ket{\psi_0}$ under $\A(\Ham)$ is the \emph{minimal} reduced subspace for the model. Hence, a reduction for $\Ham$ exists if and only if $\dim(\A(\Ham)\cdot \ket{\psi_0}) < \dim(\cH_d) = d$.

\subsection{Construction via the Burnside basis}
\label{sec:burnside_alg}
Any linear basis, $B$, for $\A(\Ham)$ can be used to generate a spanning set for the minimal reduced subspace $\A(\Ham)\cdot\ket{\psi_0}$.
Therefore we can use the Burnside basis, $B=\mathscr{B}(\Ham)$, to construct a basis for $\A(\Ham)\cdot\ket{\psi_0}$ by taking a maximal linearly independent subset of $\{\ket{\beta_j} = B_j \ket{\psi_0} ~|~ B_j \in \mathscr{B}(\Ham)\}$ and orthonormalizing it. The resulting vectors form the columns of the model reduction matrix, $\Phi$. Note that $I_d \in \mathscr{B}(\Ham)$ by construction, and therefore the initial state is guaranteed to be in the reduced subspace. Also, $\rank(\Phi) \leq |\mathscr{B}(\Ham)|$ since although the elements of $\mathscr{B}(\Ham)$ are all linearly independent, the $\ket{\beta_j}$ are not necessarily all mutually orthogonal.

\vspace{2mm}

\begin{multirem}\begin{remarks}
\rem	Note that $\ket{\psi_0} \in  V < \cH_d$ if and only if $V \in  \Inv(\ket{\psi_0}\bra{\psi_0})^\circ $. Thus, any procedure for certifying the existence of invariant subspaces for $\Ham$ by operating on $\Coeff(\Ham)$ can be modified to certify the existence of \emph{reduced} subspaces for $\Ham_{\ket{\psi_0}}$. Explicitly, a reduced subspace for $\Ham_{\ket{\psi_0}}$ exists if and only if the dimension of the subalgebra $\A(\Coeff(\Ham)\cup \{\ket{\psi_0}\bra{\psi_0} \})$ is less than the dimension of $L(\mathcal{H}_d)$. Further, a basis for $\A(\Coeff(\Ham)\cup \{\ket{\psi_0}\bra{\psi_0} \})$ gives the relevant reduced subspace. However, it is often computationally more efficient to generate the orbit of $\ket{\psi_0}$ under $\A(\Ham)$ since a simple algebraic characterization of the state $\ket{\psi_0}$ may not exist. We shall see an example of this when we consider Pauli Hamiltonians in \textsection  \ref{sec:pauli}.

\rem
We note that the certificate provided by Proposition \ref{red:burnprop} bears some resemblance to the Lie algebraic controllability condition derived in the context of unitary quantum control \cite{Ram.Sal.etal-1995}; however, the physical context for the two conditions are different. Although it is true that for the Lie algebra $\Lie(\Ham)$ generated by $\Coeff(\Ham)$, $\Inv(\Lie(\Ham)) = \Inv(\A(\Ham))$, the Lie algebra is generally not able to generate invariant subspaces by taking orbits of vectors. This is simply due to the fact that $\Lie(\Ham)$ is a vector subspace of $\A(\Ham)$.\footnote{$\A(S)$ is sometimes referred to as the \emph{interaction algebra} in the literature surrounding decoherence-free subspaces, where $S$ is a set operators defined by the interaction Hamiltonian between system and environment. In particular, the authors of \cite{Kempe:2001tr} note that the Lie algebra over $S$ is not equal to the interaction algebra in general.}
Since $\A(\Ham)\cdot \ket{\psi_0}$ gives the minimal reduced subspace for $\Ham_{\ket{\psi_0}}$, a strict inequality in the containment $\Lie(\Ham)\cdot \ket{\psi_0} \leq  \A(\Ham)\cdot \ket{\psi_0}$ would imply that the Lie algebra does not generate a reduced subspace at all. An explicit example of this is provided by the following $4$-spin Hamiltonian model 
\[
		\Ham(\lambda) = \sum_{j=1}^4 \sigma_z^{(j)} + \lambda_x \sum_{j=1}^4 \sigma_x^{(j)} + \lambda_y \sum_{j=1}^4 \sigma_y^{(j)},
\]
where $j$ denotes the spin on which the Pauli matrix acts non-trivially; \ie $\sigma_\alpha^{(1)} := \sigma_\alpha \otimes I_2 \otimes I_2 \otimes I_2$, and so on. Also, assume the initial state is an eigenstate of $\sum_{j=1}^4 \sigma_z^{(j)}$. Then $\dim(\Lie(\Ham))=3$ since the permutation symmetry of $\Ham$ makes the action isomorphic to that of $su(2)$. Furthermore, the subspace $\Lie(\Ham)\cdot \ket{\psi_0}$ is $2$-dimensional. In contrast, $\dim(\A(\Ham))=35$ and the minimal reduced subspace, $\A(\Ham)\cdot \ket{\psi_0}$, is $5$-dimensional. 
\end{remarks}\end{multirem}

\subsection{Construction by sampled time evolution}
\label{sec:sample}
In this subsection we explore an alternative method for constructing the physically relevant reduced subspace by using samples of the quantum evolution, $\ket{\psi_{\lambda}(t)} := e^{-i t \Ham(\lambda)} \ket{\psi_0}$, at some points in time, $t_1, \ldots , t_m$ ($m \geq 1$) and for some fixed value(s) of the tuning parameter $\lambda$. The subspace spanned by the states (or \emph{snapshots}) $\{\ket{\psi_\lambda(t_1)}, \ldots , \ket{\psi_\lambda(t_m)}\}$ readily contains some of the dynamics governed by the Hamiltonian $\Ham(\lambda)$. In this section, we will improve upon this observation by showing conditions under which it is possible for the finite collection of snapshots to span the minimal reduced subspace, $\A(\Ham)\cdot \ket{\psi_0}$, for the full model $\Ham_{\ket{\psi_0}}$. We will furthermore see that when these conditions are met, taking enough ($\leq  d$) samples at random from any time interval, according to any continuous distribution on the interval, will \emph{almost always} yield a spanning set for $\A(\Ham)\cdot \ket{\psi_0}$. In this case the columns of the model reduction matrix $\Phi$ are formed from a maximal linearly independent subset of $\{\ket{\psi_\lambda(t_1)}, \ldots , \ket{\psi_\lambda(t_m)}\}$.

The motivation for considering this approach to constructing the reduced subspace is that in some contexts it may be practical to directly simulate the full order model for a short time at a few values of the tuning parameter. The question is whether the results of such large scale simulations can be used to identify the reduced subspace for $\Ham_{\ket{\psi_0}}$ and thereby construct a reduced order model. 

We shall keep fixed the value of the tuning parameters at $\lambda \in  \RR^M$ for the model $\Ham_{\ket{\psi_0}}$. Let $\Psi_\lambda(I) := \vspan \{\ket{\psi_\lambda(t)} \st t \in  I \}$, for $I \subseteq  \RR$ any interval. The span of our samples $\{\ket{\psi_\lambda(t_1)}, \ldots , \ket{\psi_\lambda(t_m)}\}$ for $t_i \in I$ will be a subspace of $\Psi_\lambda(I)$, and we will address below how the dimensions of $\vspan \{\ket{\psi_\lambda(t_1)}, \ldots , \ket{\psi_\lambda(t_m)}\} $ and $\Psi_\lambda(I)$ compare. But first, working with the  idealized subspace $\Psi_\lambda(I)$ we can establish the following inequality chain by Lemma \ref{prelim:check_lem}:
\begin{equation} \label{sample:ineq}
	\Psi_\lambda(I) \leq  \Z(\Ham(\lambda), \ket{\psi_0}) \leq  \A(\Ham)\cdot \ket{\psi_0} ~.
\end{equation}
By the first inequality we see at the outset that our collection of evolution snapshots can at best capture the $\Ham(\lambda)$-cyclic subspace generated by $\ket{\psi_0}$. The ideal scenario is where both of these inequalities are saturated and we now examine under what conditions this is true. First consider the second inequality. Its saturation is purely a property of the Hamiltonian $\Ham(\lambda)$ and the initial state, $\ket{\psi_0}$, as shown by the following lemmas.

\begin{lemma} \label{lem:cyclic_dim}
	The dimension of $\Z(\Ham(\lambda), \ket{\psi_0})$ equals the number of distinct eigenvalues of $\Ham(\lambda)$ with
	eigenspaces that are not orthogonal to $\ket{\psi_0}$.
\end{lemma}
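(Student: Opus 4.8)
The plan is to diagonalize $\Ham(\lambda)$ and track how $\ket{\psi_0}$ distributes over its eigenspaces, since the cyclic subspace $\Z(\Ham(\lambda),\ket{\psi_0}) = \vspan\{\Ham(\lambda)^k\ket{\psi_0} \st k\geq 0\}$ is built entirely from powers of $\Ham(\lambda)$, and powers act diagonally on eigenspaces. Write $\Ham(\lambda) = \sum_{j=1}^p \mu_j P_j$, where $\mu_1,\ldots,\mu_p$ are the \emph{distinct} eigenvalues and $P_j$ is the orthogonal projector onto the $\mu_j$-eigenspace (this spectral decomposition exists because $\Ham(\lambda)$ is self-adjoint). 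Then $\Ham(\lambda)^k = \sum_{j=1}^p \mu_j^k P_j$, so every vector $\Ham(\lambda)^k\ket{\psi_0}$ lies in $\vspan\{P_1\ket{\psi_0},\ldots,P_p\ket{\psi_0}\}$. This immediately gives the upper bound $\dim\Z(\Ham(\lambda),\ket{\psi_0}) \leq \#\{j \st P_j\ket{\psi_0}\neq 0\}$, which is exactly the number of distinct eigenvalues whose eigenspace is not orthogonal to $\ket{\psi_0}$ (since $P_j\ket{\psi_0}=0$ iff $\ket{\psi_0}$ is orthogonal to the $\mu_j$-eigenspace).

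\emph{For the reverse inequality,} I would show that the nonzero vectors among $\{P_j\ket{\psi_0}\}$ are all achievable as linear combinations of the cyclic powers, i.e.\ that each such $P_j\ket{\psi_0}$ lies in $\Z(\Ham(\lambda),\ket{\psi_0})$. Relabel so that $P_1\ket{\psi_0},\ldots,P_r\ket{\psi_0}$ are the nonzero projections, with corresponding distinct eigenvalues $\mu_1,\ldots,\mu_r$. The key observation is that the $r\times r$ coefficient matrix relating the vectors $\{\ket{\psi_0}, \Ham(\lambda)\ket{\psi_0},\ldots,\Ham(\lambda)^{r-1}\ket{\psi_0}\}$ to the vectors $\{P_1\ket{\psi_0},\ldots,P_r\ket{\psi_0}\}$ is the Vandermonde matrix $(\mu_j^k)_{0\leq k\leq r-1,\,1\leq j\leq r}$. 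Since the $\mu_j$ are distinct, this Vandermonde matrix is invertible, so the two spanning sets are related by an invertible transformation; hence the $r$ vectors $P_j\ket{\psi_0}$ span the same space as the first $r$ cyclic powers. Because the $P_j\ket{\psi_0}$ are mutually orthogonal (eigenspaces for distinct eigenvalues of a self-adjoint operator are orthogonal) and nonzero, they are linearly independent, giving $\dim\Z(\Ham(\lambda),\ket{\psi_0}) \geq r$. Combined with the upper bound, this yields equality.

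\emph{The main obstacle} I anticipate is purely a bookkeeping one: making the Vandermonde argument airtight requires confirming that the cyclic subspace is in fact spanned by the first $r$ powers $\{\Ham(\lambda)^k\ket{\psi_0}\}_{k=0}^{r-1}$, rather than needing higher powers. This follows from the upper bound already established (the dimension cannot exceed $r$, so no power beyond the $(r-1)$st can add a new dimension), so the two halves of the argument interlock cleanly. A minor subtlety worth stating explicitly is the equivalence ``$P_j\ket{\psi_0}=0 \iff \ket{\psi_0}\perp \text{(the $\mu_j$-eigenspace)}$,'' which is immediate from $P_j$ being the orthogonal projector. No deeper machinery is needed; the self-adjointness of $\Ham(\lambda)$ is what guarantees both the orthogonal spectral decomposition and the orthogonality of the $P_j\ket{\psi_0}$, and the distinctness of eigenvalues is exactly what powers the Vandermonde nondegeneracy.
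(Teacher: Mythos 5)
Your proof is correct and follows essentially the same route as the paper's: both rest on the spectral decomposition of the self-adjoint $\Ham(\lambda)$ and the invertibility of an $r\times r$ Vandermonde matrix in the distinct ``active'' eigenvalues to pin $\dim \Z(\Ham(\lambda),\ket{\psi_0})$ at $r$. The only difference is presentational --- you work basis-free with the spectral projectors $P_j$ (which handles degenerate eigenspaces automatically and makes the linear independence of the $P_j\ket{\psi_0}$ immediate from orthogonality), whereas the paper writes the cyclic vectors in an explicit eigenbasis and exhibits a nonzero $r\times r$ minor $S_\lambda^r$ via the same Vandermonde determinant.
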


\begin{remark}
	For brevity in the sequel, let $\spec_\lambda$ be the set of distinct eigenvalues for the Hamiltonian $\Ham(\lambda)$ and
	$\eig(\mu)$ be the eigenspace corresponding to $\mu \in  \spec_\lambda$. Then, the lemma takes the following
	symbolic form:
	\[
		\dim \Z(\Ham(\lambda), \ket{\psi_0}) = \#\{\mu \in  \spec_\lambda \st ~ \eig(\mu) \not\perp \ket{\psi_0} \}.
	\]
\end{remark}

\begin{proof}[Proof of Lemma \ref{lem:cyclic_dim}]
	Let $D_\lambda := \diag(\mu_1, \ldots , \mu_d)$ be the diagonal matrix of eigenvalues (counting multiplicity) of $\Ham(\lambda)$.
	Then, by the self-adjointness of $\Ham$, we have that
	\begin{equation} \label{unitary_decomp}
		\Ham(\lambda)^k = U_\lambda D_\lambda^k ~ U_\lambda^\dagger,
	\end{equation}
	for all $k\geq 0$, wherein $U_\lambda$ is a unitary matrix corresponding to the eigenbasis
		$\Eig_\lambda := \{\ket{\xi_1}, \ldots , \ket{\xi_d} \}$
	for $\Ham(\lambda)$. Clearly, the dimension of $\Z(\Ham(\lambda), \ket{\psi_0})$ is given by the size of the maximal linearly
	independent subsets (which we will also call the \emph{rank}) of the set of vectors,
	\begin{align*}
		U_\lambda^\dagger ~ \Z(\Ham(\lambda), \ket{\psi_0}) &= \{U_\lambda^\dagger ~ \Ham(\lambda)^k ~ \ket{\psi_0} \st 0 \leq  k \leq  d-1 \} \\
		&=	\{\left[\mu_1^k ~ \braket{\xi_1}{\psi_0}, \ldots , \mu_d^k ~ \braket{\xi_d}{\psi_0}\right]\trans ~\st~ 0 \leq  k \leq  d-1 \}.
	\end{align*}
	It is immediately seen that
		$\rank(U_\lambda^\dagger ~ \Z(\Ham(\lambda), \ket{\psi_0})) \leq  \#\{\mu \in  \spec_\lambda \st ~ \eig(\mu) \not\perp \ket{\psi_0} \}$ since a completely orthogonal eigenspace will yield a zero vector.
	
	To show the reverse inequality, let $r = \#\{\mu \in  \spec_\lambda \st ~ \eig(\mu) \not\perp \ket{\psi_0} \}$ and consider the matrix $S_\lambda^r$
	whose $(k+1)$st column, for $0\leq k\leq r-1$, is the vector 
		$\left[\mu_{i_1}^k ~ \braket{\xi_1}{\psi_0}, \ldots , \mu_{i_r}^k ~ \braket{\xi_r}{\psi_0}\right] \in  \CC^r$,
	wherein $1\leq j\leq r$, the $\mu_{i_j}$ are distinct and $\ket{\xi_j} \in  \eig(\mu_{i_j}) \cap  \Eig_\lambda$ such that
	$\braket{\xi_j}{\psi_0}\neq 0$. Then, a quick application of the Vandermonde determinant formula shows that
	\[
		\det S_\lambda^r = \prod_{k=1}^d \braket{\xi_k}{\psi_0} ~ \prod_{1\leq s<t\leq d} (\mu_{i_t} - \mu_{i_s}) .
	\]
	Since all of the $\mu_{i_j}$ are distinct and $\braket{\xi_j}{\psi_0} \neq  0$, we see that this determinant is non-zero.
	Moreover, $S_\lambda^r$ is clearly a sub-matrix of the $d \times d$ matrix similarly formed by taking the
	${(k+1)}$st column vector to be $U_\lambda^\dagger ~ \Ham(\lambda)^k ~ \ket{\psi_0}$. Altogether, this means that
	\[
	\rank(U_\lambda^\dagger ~ \Z(\Ham(\lambda), \ket{\psi_0})) \geq  r = \#\{\mu \in  \spec_\lambda \st ~ \eig(\mu) \not\perp \ket{\psi_0} \}
	\]
	and by the
	reverse inequality above, we have that in fact the equality in the statement of the theorem holds.
\end{proof}

This Lemma and the second inclusion in \erf{sample:ineq} together show that
	$\#\{\mu \in  \spec_\lambda \st ~ \eig(\mu) \not\perp \ket{\psi_0} \} \leq  \dim \A(\Ham)\cdot \ket{\psi_0}$
and that this being an equality is both, necessary and sufficient in order that $\Z(\Ham(\lambda), \ket{\psi_0}) = \A(\Ham)\cdot \ket{\psi_0}$. 

This characterization and the method of proof in the Lemma become important ingredients in the following strengthening of \erf{sample:ineq}.
\begin{lemma}
\label{sample:interval}
	For every interval $I \subseteq  \RR$, $\Psi_\lambda(I) = \Z(\Ham(\lambda), \ket{\psi_0})$.
\end{lemma}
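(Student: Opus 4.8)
The plan is to prove only the nontrivial inclusion. The first relation in \erf{sample:ineq} already gives $\Psi_\lambda(I) \leq \Z(\Ham(\lambda), \ket{\psi_0})$, so it suffices to establish the reverse containment $\Z(\Ham(\lambda), \ket{\psi_0}) \leq \Psi_\lambda(I)$. I would phrase this dually, through orthogonal complements: it is equivalent to show that any $\ket{w}$ orthogonal to every snapshot $\ket{\psi_\lambda(t)}$, $t \in I$, is automatically orthogonal to the whole cyclic subspace.

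First I would set up the spectral picture exactly as in the proof of Lemma \ref{lem:cyclic_dim}. Expanding $\ket{\psi_0}$ in the eigenbasis $\Eig_\lambda$ and grouping terms by distinct eigenvalue gives
\[
	\ket{\psi_\lambda(t)} = e^{-it\Ham(\lambda)}\ket{\psi_0} = \sum_{s=1}^{r} e^{-it\mu_{i_s}}\, \ket{\phi_s},
\]
where $\mu_{i_1}, \ldots, \mu_{i_r}$ are the distinct eigenvalues in $\spec_\lambda$ whose eigenspaces are not orthogonal to $\ket{\psi_0}$, and $\ket{\phi_s}$ is the (nonzero) component of $\ket{\psi_0}$ in $\eig(\mu_{i_s})$. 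Since distinct eigenspaces of the self-adjoint $\Ham(\lambda)$ are mutually orthogonal, the $\ket{\phi_s}$ are $r$ nonzero orthogonal vectors; and because $\Ham(\lambda)^k\ket{\psi_0} = \sum_s \mu_{i_s}^k \ket{\phi_s}$, the same Vandermonde computation used in Lemma \ref{lem:cyclic_dim} shows that $\{\ket{\phi_s}\}_{s=1}^r$ is a basis of $\Z(\Ham(\lambda), \ket{\psi_0})$.

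With this in hand the reverse inclusion is short. Suppose $\braket{w}{\psi_\lambda(t)} = 0$ for all $t \in I$. Substituting the expansion yields $\sum_{s=1}^r \braket{w}{\phi_s}\, e^{-it\mu_{i_s}} = 0$ for every $t \in I$, i.e.\ an exponential polynomial with distinct real frequencies $\mu_{i_s}$ vanishing on $I$. The one genuine ingredient is that the functions $t \mapsto e^{-it\mu_{i_s}}$ with distinct $\mu_{i_s}$ are linearly independent on any nondegenerate interval. I would prove this by the Wronskian, whose value is $\prod_{s<s'}(\mu_{i_{s'}}-\mu_{i_s})$ up to a nonvanishing exponential factor, which is precisely the Vandermonde determinant of Lemma \ref{lem:cyclic_dim} and hence nonzero. (Equivalently, an exponential polynomial is entire, so vanishing on a set with an accumulation point forces it to vanish identically.) Linear independence then forces $\braket{w}{\phi_s} = 0$ for every $s$, so $\ket{w} \perp \vspan\{\ket{\phi_s}\} = \Z(\Ham(\lambda), \ket{\psi_0})$. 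Thus $\Psi_\lambda(I)^\perp \leq \Z(\Ham(\lambda), \ket{\psi_0})^\perp$, giving $\Z(\Ham(\lambda), \ket{\psi_0}) \leq \Psi_\lambda(I)$ and, with \erf{sample:ineq}, the claimed equality.

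The only obstacle worth flagging is the linear-independence-of-exponentials step, and it is mild: it is exactly the Vandermonde/Wronskian fact already exploited in Lemma \ref{lem:cyclic_dim}, and it is the one place where nondegeneracy of $I$ (so that $I$ has an accumulation point) is used. An alternative route avoiding even this is to note that $\Psi_\lambda(I)$ is finite-dimensional, hence closed, so differentiating $t \mapsto \ket{\psi_\lambda(t)}$ repeatedly at a point $t_0 \in I$ keeps each $(-i\Ham(\lambda))^k\ket{\psi_\lambda(t_0)}$ inside $\Psi_\lambda(I)$; since $e^{-it_0\Ham(\lambda)}$ is an invertible function of $\Ham(\lambda)$ it preserves the cyclic subspace, whence $\Z(\Ham(\lambda), \ket{\psi_\lambda(t_0)}) = \Z(\Ham(\lambda), \ket{\psi_0}) \leq \Psi_\lambda(I)$.
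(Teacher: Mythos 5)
Your proof is correct, and its main argument runs on the same engine as the paper's. The paper rotates to the eigenbasis and computes the rank of the family of exponential component functions $\Gamma_\lambda(I)$, showing via the Wronskian--Vandermonde computation \erf{sample:wronski} that this rank equals $r = \dim \Z(\Ham(\lambda),\ket{\psi_0})$, so that the first inclusion in \erf{sample:ineq} is saturated by a dimension count. You instead dualize: any $\ket{w}$ annihilating all snapshots must annihilate the eigencomponents $\ket{\phi_s}$, which rests on exactly the same linear independence of $\{e^{-it\mu_{i_s}}\}$ with distinct frequencies, proved by the same Wronskian (and in finite dimensions $(\Psi_\lambda(I)^\perp)^\perp = \Psi_\lambda(I)$, so the dualization is legitimate). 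So your primary route is essentially the paper's proof in annihilator form, though your explicit identification of $\{\ket{\phi_s}\}_{s=1}^r$ as an orthogonal basis of the cyclic subspace is a slightly cleaner packaging of what the paper handles via ranks of function families. What is genuinely different is your closing alternative: differencing/differentiating the curve at $t_0 \in I$ and using closedness of the finite-dimensional subspace $\Psi_\lambda(I)$ to get $(-i\Ham(\lambda))^k \ket{\psi_\lambda(t_0)} \in \Psi_\lambda(I)$ for all $k$, then transporting back through $e^{-it_0 \Ham(\lambda)}$. This avoids the exponential-independence lemma entirely and is more elementary; its one implicit ingredient --- that $e^{-it_0\Ham(\lambda)}$, being a polynomial in $\Ham(\lambda)$ in finite dimensions, maps $\Z(\Ham(\lambda),\ket{\psi_0})$ into itself and, by invertibility, onto itself --- deserves to be stated explicitly, as does the fact that difference quotients of snapshots lie in $\Psi_\lambda(I)$. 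Both of your routes, like the paper's, require $I$ to be nondegenerate (the statement fails for a single-point interval when $r>1$), and you correctly flag where this enters.
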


\begin{proof}
	In the following we view the state vector as a curve in $d$-dimensional space parameterized by $t$. That is, we consider the dynamics over the time interval $I$, $\ket{\psi_\lambda(t)}:\RR \supseteq  I \rightarrow  \cH_d \cong  \CC^d$, in componentialized form as,
	\[
		\ket{\psi_\lambda(t)} = \left[ \psi_\lambda^1(t), \ldots , \psi_\lambda^d(t) \right] \trans.
	\]
	Then, the dimension of the smallest linear subspace containing $\ket{\psi_\lambda(I)}$ is just the size of maximal
	linearly independent subsets over $I$ of the component functions $\{\psi_\lambda^j : I \rightarrow  \CC \st 1\leq j\leq d\}$.
	Since the dimension is preserved under unitary actions, we begin as in the proof of Lemma \ref{lem:cyclic_dim},
	by taking the unitary diagonalization of $\ket{\psi_\lambda(t)}$ (\ie replacing $D_\lambda^k$ in \erf{unitary_decomp} by
	$\exp(-i t D_\lambda)$), according to the orthonormal eigenbasis $\Eig_\lambda := \{\ket{\xi_1}, \ldots , \ket{\xi_d} \}$
	for $\Ham(\lambda)$, so that we now have the curve,
	\begin{equation} \label{sample:curve}
		U_\lambda^\dagger ~ \ket{\psi_\lambda(t)} = \left[ e^{-i t \mu_1} ~ \braket{\xi_1}{\psi_0}, \ldots , e^{-i t \mu_d} ~ \braket{\xi_d}{\psi_0} \right] \trans.
	\end{equation}
	Thus, analogous to the proof of Lemma \ref{lem:cyclic_dim}, we now consider the question of the size of
	the maximal linearly independent subsets (again, the \emph{rank}) of the set of functions
	\[
		\Gamma_\lambda(I) := \{\braket{\xi_j}{\psi_0} ~ e^{-i t \mu_j} : I \rightarrow  \CC \st 1 \leq  j \leq  d \} .
	\]
	Again, it is easy to see in the one direction that $\rank(\Gamma_\lambda(I)) \leq  \dim(\Z(\Ham(\lambda), \ket{\psi_0}))$.

	Now, letting $r = \dim(\Z(\Ham(\lambda), \ket{\psi_0}))$, take once more the subset $\Gamma_\lambda^r$ of $\Gamma_\lambda$, which contains the
	functions $\gamma_j(t) := \braket{\xi_j}{\psi_0} ~ e^{-i t \mu_{i_j}}$, for each $1\leq j\leq r$, wherein the $\mu_{i_j} $ are
	distinct and $\ket{\xi_j} \in  \eig(\mu_{i_j}) \cap  \Eig_\lambda$ such that $\braket{\xi_j}{\psi_0} \neq 0$. Let
		$\gamma = (\gamma_1, \ldots , \gamma_r) : I \rightarrow  \CC^d$
	and form the Wronskian $W(t) := \det [\gamma(t), \gamma^{(1)}(t), \ldots , \gamma^{(r-1)}(t)]$. Note that
	\[
		\gamma_j^{(k-1)} = (-i \mu_{i_j})^{k-1} ~ \braket{\xi_j}{\psi_0} ~ \exp(-i t \mu_{i_j}) ~ \text{ ($1 \leq  j, k \leq  r$)}
	\]
	so that
	\begin{equation} \label{sample:wronski}
		W(t) = \exp\bigg(-i t \sum_{j=1}^r \mu_{i_j} \bigg) ~ \bigg(\prod_{j=1}^r \braket{\xi_j}{\psi_0} \bigg) ~ 
				\det~\widetilde{V}(-i\mu_{i_1}, \ldots , -i\mu_{i_r}),
	\end{equation}
	where $\widetilde{V}$ is the $r \times r$ Vandermonde matrix. Since
	\[
		\det~\widetilde{V}(-i\mu_{i_1}, \ldots , -i\mu_{i_r}) = \prod_{1\leq j<k\leq r} (-i)(\mu_{i_k} - \mu_{i_j}),
	\]
	and by assumption, $\braket{\xi_j}{\psi_0} \neq  0$ and the $\mu_{i_j}$ are distinct, for all $1\leq j\leq r$, we see that $W(t)\neq 0$
	for all $t \in  I$. Thus, $\Gamma_\lambda^r$ is a linearly independent set of functions, which means also that
		$\rank(\Gamma_\lambda(I)) \geq  \dim(\Z(\Ham(\lambda), \ket{\psi_0}))$,
	and by the reverse inequality above, we see that this degenerates to an equality.
	Hence, the dimension of the minimal linear subspace containing the curve $\ket{\psi_\lambda(I)}$ is the same as that
	of $\Z(\Ham(\lambda), \ket{\psi_0})$, which through the first inequality of \erf{sample:ineq} concludes the proof. 
\end{proof}

\begin{remark}
The two lemmas above aid in identifying a strategy for increasing the effectiveness of time sampling for generating the reduced subspace. Saturation of the first inequality in \erf{sample:ineq} implies that one should try to maximize the dimension of $\Z(\Ham(\lambda), \ket{\psi_0})$. One strategy for doing this is to sample the evolution at multiple parameter values. That is, $\dim( \Z(\Ham(\lambda_1), \ket{\psi_0}) +  \Z(\Ham(\lambda_2), \ket{\psi_0}) )\geq \dim( \Z(\Ham(\lambda_1), \ket{\psi_0}) )$. Hence, sampling the evolution at multiple parameter values will increase the likelihood that the time samples generate the orbit $\A(\Ham)\cdot \ket{\psi_0}$. An interesting and open question is whether it is possible to constructively determine the minimum number of distinct parameter values necessary before the sum of cyclic subspaces equals the orbit $\A(\Ham)\cdot \ket{\psi_0}$. In fact, this is a question that is relevant to any empirical model reduction technique \cite{Antoulas:2009tb, Schilders:2008uc} that attempts to construct reduced order models from snapshots of the state, and is the subject of ongoing work.
\end{remark}

Having established the conditions under which the inequalities in \erf{sample:ineq} are saturated, we return to the question about when the subspace spanned by the time samples $\{\ket{\psi_\lambda(t_1)}, \allowbreak ... ,\allowbreak \ket{\psi_\lambda(t_m)}\}$ with $t_i \in I$ is identical to $\Psi_\lambda(I)$, or equivalently (by Lemma \ref{sample:interval}), to $\Z(\Ham(\lambda),\ket{\psi_0})$. In fact, Lemma \ref{sample:interval} gives us enough data about the dimension of the dynamical curve $\ket{\psi_\lambda(I)}$, locally with respect to time, that we can now show the probabilistic abundance of \emph{good} time samples -- that is, snapshots of the dynamics that span $\Z(\Ham(\lambda), \ket{\psi_0})$.

\begin{theorem}
	Let $r = \dim \Z(\Ham(\lambda),\ket{\psi_0})$. Then, given a vector $\vec{t} \in  \RR^k$ of $r\leq k\leq d$ points in time, collect the
	corresponding state snapshots in
		$\Sample(\vec{t}) := \left[ \ket{\psi_\lambda(t_1)}, \ldots , \ket{\psi_\lambda(t_k)} \right]$.
	This collection $\Sample(\vec{t})$ spans $\Z(\Ham(\lambda), \ket{\psi_0})$ almost everywhere and on a dense and open subset of $\RR^k$.
\end{theorem}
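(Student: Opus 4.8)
The plan is to reduce the statement to a single rank condition and then combine a lower-semicontinuity argument (for openness) with an analytic–Vandermonde argument (for density and full measure). First I would record the \emph{a priori} bound: by the first containment in \eqref{sample:ineq} (equivalently, by the convergent expansion preceding Lemma \ref{prelim:check_lem}) every snapshot $\ket{\psi_\lambda(t_\ell)}$ lies in $\Z(\Ham(\lambda),\ket{\psi_0})$, so the column space of $\Sample(\vec t)$ is always contained in $\Z(\Ham(\lambda),\ket{\psi_0})$, whose dimension is $r$. Hence $\rank\Sample(\vec t)\le r$ for every $\vec t$, and ``$\Sample(\vec t)$ spans $\Z(\Ham(\lambda),\ket{\psi_0})$'' is \emph{exactly} the assertion $\rank\Sample(\vec t)=r$. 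Diagonalizing as in the proof of Lemma \ref{sample:interval}, I would left-multiply by the invertible $U_\lambda^\dagger$ (which preserves rank) so that the $\ell$th column becomes $[\,e^{-it_\ell\mu_1}\braket{\xi_1}{\psi_0},\ldots,e^{-it_\ell\mu_d}\braket{\xi_d}{\psi_0}\,]\trans$ as in \eqref{sample:curve}. Rows attached to eigenvectors orthogonal to $\ket{\psi_0}$ vanish, and rows attached to a common eigenvalue are scalar multiples of the same exponential row; collapsing these leaves an $r\times k$ matrix $\widetilde E(\vec t)$ with entries $\braket{\xi_j}{\psi_0}\,e^{-it_\ell\mu_{i_j}}$ indexed by the $r$ \emph{distinct} frequencies $\mu_{i_1},\ldots,\mu_{i_r}$ whose eigenspaces meet $\ket{\psi_0}$ nontrivially. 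Thus $\rank\Sample(\vec t)=\rank\widetilde E(\vec t)$, and the target becomes $\rank\widetilde E(\vec t)=r$.

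For openness I would note that $\vec t\mapsto\Sample(\vec t)$ is continuous and that matrix rank is lower semicontinuous; together with the uniform bound $\rank\Sample(\vec t)\le r$, this forces the good set $G:=\{\vec t\in\RR^k:\rank\Sample(\vec t)=r\}$ to be open, since near a point of maximal rank the rank can neither drop (lower semicontinuity) nor rise beyond $r$.

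For density and full measure I would argue that the bad set $B=\RR^k\setminus G$ is precisely the locus where every $r\times r$ minor of $\widetilde E(\vec t)$ vanishes, so it suffices to exhibit one minor that is not identically zero. Taking the minor on the first $r$ sampling times, $m(\vec t):=\bigl(\prod_{j=1}^r\braket{\xi_j}{\psi_0}\bigr)\det\bigl[e^{-it_p\mu_{i_j}}\bigr]_{p,j=1}^r$, which is real-analytic in $\vec t$ (the overlaps being nonzero), I would specialize to an arithmetic progression $t_p=(p-1)h$. Then $m$ reduces to $\bigl(\prod_j\braket{\xi_j}{\psi_0}\bigr)$ times the ordinary Vandermonde determinant $\prod_{1\le j<k\le r}(w_k-w_j)$ with $w_j:=e^{-i\mu_{i_j}h}$; since the $\mu_{i_j}$ are pairwise distinct, the $w_j$ are pairwise distinct for all sufficiently small $h\neq0$, so $m\not\equiv0$. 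A nontrivial real-analytic function has a Lebesgue-null zero set (and the extra coordinates $t_{r+1},\ldots,t_k$, on which $m$ does not depend, only contribute a factor $\RR^{k-r}$, still null), so $B\subseteq\{m=0\}$ has measure zero, giving the ``almost everywhere'' claim; an open set whose complement is null is dense, giving ``dense and open.''

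The main obstacle is precisely the step of producing one non-vanishing $r\times r$ \emph{point-sampling} minor. The Wronskian computation in Lemma \ref{sample:interval} certifies linear independence of the exponentials $\gamma_j(t)=\braket{\xi_j}{\psi_0}e^{-it\mu_{i_j}}$, but through a determinant of \emph{derivatives} rather than of point evaluations, so it does not directly furnish a nonzero minor of $\widetilde E$. The arithmetic-progression specialization is the bridge: it converts the exponential determinant into a genuine Vandermonde determinant in the variables $w_j=e^{-i\mu_{i_j}h}$, where distinctness of the frequencies is exactly what guarantees non-degeneracy. Everything else — the rank bound, the semicontinuity, and the null-set/density bookkeeping — is routine.
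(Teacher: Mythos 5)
Your proposal is correct, and while the openness half coincides with the paper's (the paper writes the good set as a union of preimages of $\CC\setminus\{0\}$ under the continuous $r\times r$ minors of $\Sample(\vec t)$, which is exactly your lower-semicontinuity-of-rank argument), your measure-zero/density half takes a genuinely different route. The paper argues inductively, one time point at a time: given $j<r$ times with independent snapshots, each $(j+1)\times(j+1)$ minor of the augmented matrix is an analytic function of the single new variable $t$, hence has only countably many zeros unless identically zero, and the identically-zero case is excluded by invoking Lemma \ref{sample:interval}; induction over $j$ then yields a null bad set in $\RR^r$. You instead exhibit \emph{one} explicit $r\times r$ minor $m(\vec t)$ of the collapsed matrix $\widetilde E(\vec t)$ and certify $m\not\equiv 0$ directly by specializing to an arithmetic progression $t_p=(p-1)h$, where $m$ becomes a Vandermonde determinant in the distinct values $w_j=e^{-i\mu_{i_j}h}$ (for small $h\neq 0$) --- essentially the determinant computation the paper defers to Proposition \ref{sample:arith} --- and then conclude via the standard fact that a nontrivial real-analytic function on $\RR^k$ has a Lebesgue-null zero set. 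What each approach buys: yours bypasses Lemma \ref{sample:interval} entirely (you need only the eigenvalue-counting characterization of $r$ from Lemma \ref{lem:cyclic_dim} to set up the row-collapsing, and you correctly note rank is preserved when zero rows and proportional rows for repeated eigenvalues are removed), is non-inductive, and unifies the random-sampling theorem with the arithmetic-sampling proposition; the paper's induction is self-contained given its preceding lemmas and needs only the more elementary one-variable fact that nontrivial analytic functions have countable zero sets, rather than the multivariate null-zero-set theorem. Your bookkeeping is also sound on the two points where care is needed: the bad set satisfies $B\subseteq\{m=0\}$ because vanishing of \emph{all} $r\times r$ minors forces $m=0$, and the independence of $m$ from $t_{r+1},\ldots,t_k$ only adds a product factor $\RR^{k-r}$, preserving nullity.
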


\begin{proof}
	First we prove the openness property. Let $\D_k$ be the collection of functions $\RR^r \rightarrow  \CC$ given by the $r \times r$ minors of $\Sample(\vec{t})$.
	For each $\vec{t} \in  \RR^k$, every $\ell \in  \D_k$ removes some $k-r$ components corresponding to the
	deletion of $k-r$ columns from $\Sample(\vec{t})$. Thus, each $\ell \in  \D_k$ defines a projection
	$p_{\ell} : \RR^k \rightarrow  \RR^r$ that sends each $\vec{t} \in  \RR^k$ to the corresponding $\RR^r$ vector. 
	Since $\vspan(\Sample(\vec{t})) \leq \Z(\Ham(\lambda), \ket{\psi_0})$, $\mathscr{S}(\vec{t})$ spans the
	$r$-dimensional cyclic subspace only if there exists an $r \times r$ minor that is non-zero. Hence the subset
	$\mathscr{T} \subset  \RR^k$ of $\vec{t} \in  \RR^k$ for which $\vspan(\Sample(\vec{t})) = \Z(\Ham(\lambda), \ket{\psi_0})$
	is given by
	\[
		\mathscr{T} = \bigcup_{\ell \in  \D_k} (\ell \circ  p_\ell)^{-1}(\CC \wo 0).
	\]
	Now, $\CC \wo 0$ is an open set and for each $\ell \in  \D_k$, $\ell \circ  p_\ell$ is continuous. 
	Therefore, the collection $\mathscr{T}$ of $k$ time samples that span the cyclic subspace is an open set.

	All that remains is to show is that the complement of $\mathscr{T}$ has Lebesgue measure zero, since in this
	case the density of good snapshots (ones that span the cyclic subspace) is guaranteed. For any $k > r$ tuple of
	times $\vec{t} \in  \RR^k$, the snapshots
		$\Sample(\vec{t})$
	have a span of dimension less than $r$ if and only if the set of vectors corresponding to every $r$-subtuple of
	$\vec{t}$ is linearly dependent. We now show that the set of $r$-tuples of times that give linearly dependent
	state snapshots has Lebesgue measure zero, whence the $k$-tuples of times that give linearly dependent
	$\Sample(\vec{t})$ must also have Lebesgue measure zero.

	Given any tuple of times $\vec{t} \in  \RR^r$, the linear independence of the snapshots
		$\{\ket{\psi_\lambda(t_1)}, \allowbreak \ldots , \allowbreak\ket{\psi_\lambda(t_r)} \}$
	is equivalent to the process of inductively checking that for each $1<j\leq r$,
		$\ket{\psi_\lambda(t_j)} \not\in  \vspan \{\ket{\psi_\lambda(t_1)}, \ldots , \ket{\psi_\lambda(t_{j-1})} \}$.
	So suppose that we have $\vec{t} \in  \RR^j$ for $r > j \geq  1$ such that the corresponding state snapshots are
	linearly independent, and consider adding a new state snapshot $\ket{\psi_\lambda(t)}$ to this list. Now, since each
	$(j+1) \times (j+1)$ minor of the matrix
	\[
		\left[ \ket{\psi_\lambda(t_1)} ~ \cdots  ~ \ket{\psi_\lambda(t_j)} ~ \ket{\psi_\lambda(t)} \right]
	\]
	is a polynomial in $j+1$ analytic functions $\psi^{i}_\lambda(t)$ (elements of $\ket{\psi_\lambda(t)}$), it defines an analytic
	curve in $t$. Thus it has only countably many roots or is identically zero \cite[\textsection~3.1]{krantz2002primer}. By
	Lemma \ref{sample:interval}, the latter case cannot be achieved. Therefore there are only a countably number of
	time points $t$ such that the $(j+1) \times (j+1)$ minors are zero, and hence for almost all $t$,
		$\ket{\psi_\lambda(t)} \not\in  \vspan \{\ket{\psi_\lambda(t_1)}, \ldots , \ket{\psi_\lambda(t_j)} \}$.
	Whence by induction, we further have that for almost all $\vec{t} \in  \RR^r$, the snapshots
		$\{\ket{\psi_\lambda(t_1)}, \ldots , \ket{\psi_\lambda(t_r)}\}$
	will be linearly independent.
\end{proof}

The probabilistic abundance of time samples that span the cyclic subspace provides strong support for empirical model reduction techniques that construct the reduced order model from sampling the solution of the full order model. However, in practice the time evolution is often not randomly sampled, but rather state snapshots are taken at regularly spaced time intervals, $\Delta t$. The following Proposition specifies conditions under which this strategy provides good state samples.

\begin{proposition} \label{sample:arith}
	Fix a sampling time $\Delta t \in  \RR$ and let $r = \dim(\Z(\Ham(\lambda), \ket{\psi_0}))$. Then, for any $m \in  \ZZ$, the state samples
	corresponding to the sequence of time-steps, 
	\[
		\vec{t} = (m \Delta t, (m+1) \Delta t, \ldots , (m+r-1) \Delta t),
	\]
	span $\Z(\Ham(\lambda), \ket{\psi_0})$ if and only if for each $1 \leq  j < k \leq  r$, whenever $\mu_k \neq  \mu_j$,
	\[
		\Delta t \not\equiv  0 ~ \left( \emph{mod}~ \frac{2\pi}{\mu_k - \mu_j} \right).
	\]
\end{proposition}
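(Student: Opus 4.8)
The plan is to reduce the claim to the non-vanishing of a single $r\times r$ determinant which, after the eigen-decomposition already used in Lemmas~\ref{lem:cyclic_dim} and \ref{sample:interval}, is recognizably a rescaled Vandermonde determinant; the congruence condition then emerges as the exact criterion for its nodes to be pairwise distinct.

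First I would set up the same decomposition as before. Writing $\ket{\psi_\lambda(t)} = e^{-i t \Ham(\lambda)}\ket{\psi_0} = \sum_{s=1}^r e^{-i t \nu_s}\ket{w_s}$, where $\nu_1,\ldots,\nu_r$ are the distinct eigenvalues of $\Ham(\lambda)$ whose eigenspaces are not orthogonal to $\ket{\psi_0}$ (so $r=\dim\Z(\Ham(\lambda),\ket{\psi_0})$ by Lemma~\ref{lem:cyclic_dim}), and $\ket{w_s}$ is the non-zero, unnormalized projection of $\ket{\psi_0}$ onto $\eig(\nu_s)$. Since distinct eigenspaces are orthogonal, $\{\ket{w_s}\}_{s=1}^r$ is linearly independent and spans $\Z(\Ham(\lambda),\ket{\psi_0})$. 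Grouping the possibly degenerate eigenspaces into the single vectors $\ket{w_s}$ is the one point needing care rather than effort: it is what makes the coefficient matrix below genuinely square of size $r$, so that rank loss can only come from coincident Vandermonde nodes and not from a rectangular or rank-deficient setup --- and it is also why exactly $r$ consecutive samples is the right number.

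Next I would examine the samples directly. For $t_a = (m+a-1)\Delta t$ with $1\leq a\leq r$, each column of $\Sample(\vec t)$ is $\ket{\psi_\lambda(t_a)} = \sum_{s=1}^r e^{-i t_a \nu_s}\ket{w_s}$, so by linear independence of the $\ket{w_s}$ the samples span $\Z(\Ham(\lambda),\ket{\psi_0})$ if and only if the $r\times r$ matrix $M$ with entries $M_{as} = e^{-i(m+a-1)\Delta t\,\nu_s}$ is invertible. Factoring $e^{-i m\Delta t\,\nu_s}$ out of column $s$ reduces $M$ to $\big[(z_s)^{a-1}\big]_{a,s}$ with $z_s := e^{-i\Delta t\,\nu_s}$, which is Vandermonde. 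Hence
\[
	\det M = \Big(\prod_{s=1}^r e^{-i m\Delta t\,\nu_s}\Big)\prod_{1\leq s<s'\leq r}(z_{s'}-z_s),
\]
and since the leading product never vanishes, $M$ is invertible precisely when the nodes $z_1,\ldots,z_r$ are pairwise distinct.

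Finally I would translate distinctness into the stated congruence. For $\nu_j\neq\nu_k$, $z_j=z_k$ holds if and only if $e^{-i\Delta t(\nu_k-\nu_j)}=1$, i.e.\ $\Delta t(\nu_k-\nu_j)\in 2\pi\ZZ$, i.e.\ $\Delta t = n\,\tfrac{2\pi}{\nu_k-\nu_j}$ for some $n\in\ZZ$ --- which is exactly $\Delta t\equiv 0 \pmod{2\pi/(\mu_k-\mu_j)}$ in the notation of the statement (where the $\mu$'s denote these same distinct active eigenvalues, so the qualifier ``$\mu_k\neq\mu_j$'' is automatically met). Thus all nodes are distinct exactly when the displayed non-congruence holds for every pair $j<k$, giving the asserted equivalence. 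I anticipate no genuine obstacle; the entire content sits in the square Vandermonde reduction of the previous paragraph.
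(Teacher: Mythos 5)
Your proof is correct and takes essentially the same route as the paper's: both reduce the claim to the non-vanishing of a single $r\times r$ determinant that factors as never-vanishing phases and weights times a Vandermonde determinant in the nodes $z_s = e^{-i\Delta t\,\mu_s}$, with the congruence condition exactly expressing pairwise distinctness of the nodes. The only (cosmetic) difference is in handling degeneracy: you group eigenspace projections $\ket{w_s}$ into a basis of $\Z(\Ham(\lambda),\ket{\psi_0})$ up front, whereas the paper deletes zero and duplicate rows from the sampled matrix --- your packaging is, if anything, slightly tidier, and it also quietly corrects the sign of the congruence in the paper's final sentence (which reads $\equiv$ where $\not\equiv$ is meant).
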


\begin{proof}
	Consider the matrix formed by applying each element of $\vec{t}$ to \erf{sample:curve} and let
	$V_m$ be the $r \times r$ sub-matrix given by removing any zero or duplicate rows (identically in $\Delta t$). This 
	matrix then takes the form,
	\[
		V_m(z_1, \ldots , z_r) :=
		\begin{pmatrix}
		w_1 z_1^m			& \cdots & w_1 z_1^{m+r-1} \\
		\vdots			& \ddots & \vdots \\
		w_r z_r^m		& \cdots & w_r z_r^{m+r-1}
		\end{pmatrix} ,
	\]
	with $z_k = e^{-i\Delta t \mu_k}$ and $w_k = \braket{\xi_k }{ \psi_0}$. Analogous to the Wronskian calculation in
	\erf{sample:wronski}, we get
	\[
		\det V_m(z_1, \ldots , z_r) = \prod_{j=1}^r w_j z_j^m ~ \prod_{1\leq j<k\leq r} (z_k - z_j).
	\]
	Since the removal of zero columns ensures $w_j \neq  0$ and the removal of duplicate rows guarantees
	that the $\mu_j$ are distinct, $\det(V_m)$ is clearly non-zero if and only if
	$\Delta t \equiv  0 \pmod{\sfrac{2\pi}{\mu_k - \mu_j}}$ for all $1 \leq  j < k \leq  r$. 
\end{proof}
\begin{remark}
The physical interpretation of this condition is that the sampling period should not overlap with any \emph{intrinsic periods} of the system, where the latter are defined as the inverse of the Bohr frequencies of the Hamiltonian.
\end{remark}

\subsection{Computational feasibility}
Construction of the reduced subspace of $\Ham(\lambda)_{\ket{\psi_0}}$ is an ``offline" computation that only needs to be done once. It then allows for more efficient simulation of the model for all values of the parameters $\lambda$. However, it is instructive to examine the computational complexity of the procedures we have outlined above for constructing the model reduction matrix $\Phi$.

First consider the construction via the Burnside basis as detailed in \textsection  \ref{sec:cert_general} and Appendix \ref{app:burnside_alg}. Implementations of Algorithm \ref{alg_burnside} for constructing the Burnside basis will involve multiplication of matrices and rank computations. Subsequently, a basis for the reduced subspace is constructed by matrix-vector multiplication. All of these are polynomial complexity operations in the size of the matrices and vectors, $d$. However, in the quantum context, $d$ grows exponentially in the number of degrees of freedom and therefore this algorithm is in general going to be computationally infeasible for large quantum systems. Despite this, in certain cases the structure of the underlying matrices can be exploited to formulate more efficiently implementable versions of the Burnside algorithm, as we shall see in \textsection  \ref{sec:pauli}.

Next, consider the method for constructing the reduced subspace by sampling the time evolution of the system. The results in \textsection  \ref{sec:sample} specify the conditions required for such time samples to span the reduced subspace. Obtaining the time samples requires solution of the full order model, and then a determination of a basis for the space spanned by the samples. Both tasks have complexity that scales exponentially in the number of degrees of freedom due to the exponential scaling of the vector dimension $d$. However, this procedure can be useful if a small number of snapshots can be generated by large-scale simulations at a single, or few, parameter values $\lambda$. Then the model reduction can be performed to more efficiently explore the parametric dependence of the model.

\subsection{Robustness of dimensional reductions}
The structural aspects of $\A(\Ham)$ as discussed in \textsection ~ \ref{sec:struct_A} allow us now to study how \emph{much} model reduction is available -- that is, how large is the minimal number of dimensions required to contain the dynamics -- for either perturbations of the initial state $\ket{\psi_0}$ or \emph{generically}, that is, when the initial state is chosen at random. Note firstly that the decomposition \erf{red:fulldecomp} has also the associated decomposition of the dimension of the system space:
\begin{equation} \label{eq:dimdecomp}
	d = j_1 q_1 + \cdots + j_m q_m .
\end{equation}
Using the unitary $U$ that gives the decomposition \erf{red:fulldecomp}, we begin by partitioning the new basis representation $\ket{\nu} := U \ket{\psi_0}$ of the initial state into $j_1 + \cdots + j_m$ vector components: define the sequence $J_0 = 0$ and $J_k = j_1 + \cdots + j_k$ (for $1 \leq k \leq m$) and take the partition
\begin{equation} \label{eq:ketpart}
	\ket{\nu} = [\ket{\nu_1}, \ldots, \ket{\nu_{J_1}}, \ket{\nu_{J_1 + 1}}, \ldots, \ket{\nu_{J_2}}, \ldots, \ket{\nu_{J_m}}]^{\trans}
\end{equation}
such that for each $\alpha \in \{ J_{k-1} + 1, \ldots, J_k \}$, $\Pvec \in \CC^{q_k}$. For purposes of exposition, we denote by $q(\alpha)$ the dimension $q_k$ of $\Pvec$ and by $j(\alpha)$ the corresponding factor $j_k$ in \erf{eq:dimdecomp}; see Figure \ref{fig:mat_decomp} for an example that illustrates this notation and partitioning of $\ket{\nu}$. Now, since $\dim(\Mat_{q(\alpha)}(\CC)\cdot \Pvec) = q(\alpha)$ or $0$ according to whether or not $\Pvec \neq 0$, respectively, we see immediately that a sufficiency condition for $\A(\Ham)\cdot \ket{\psi_0} = U^* (\A(\Ham)\cdot \ket{\nu}) < \cH_d$ is that some $\Pvec = 0$. This is not a necessary condition as we will see below.

\begin{figure}[h!]
\centering
{\includegraphics[scale=0.35]{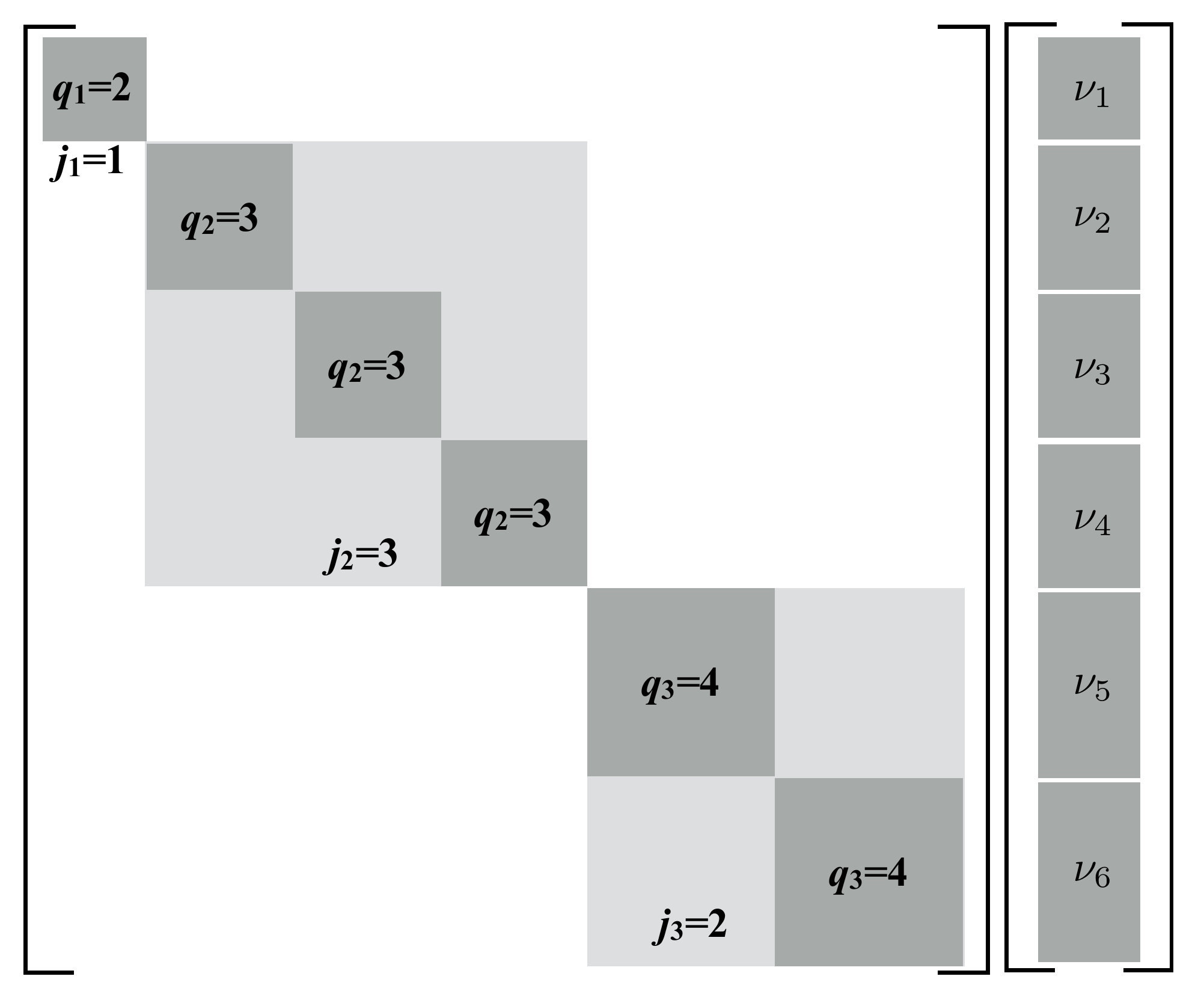}}
\caption{\small \label{fig:mat_decomp} An example of the block decomposition of elements of $\A(\Ham)$, as prescribed by \erf{red:fulldecomp} and a $d$-dimensional vector written in the basis that achieves this block-diagonal form. The only non-zero elements are in the dark gray sub matrices. In this case, $d=19, m=3$, and the $J_k$ as defined in the main text are: $J_1=1, J_2=4,J_3=6$.}
\end{figure}

\FloatBarrier

We can actually describe exactly the dimension of the reduced subspace in terms of the new representation of the initial state. To this end, the following lemma describes first the case for an individual component $\Mat_q(\CC) \otimes \Unit_j$ in \erf{red:fulldecomp}.

\begin{lemma}
	Let $\ket{\nu} \in \CC^{jq}$ for some $q, j \geq 1$ and partition it into $j$ sub-vectors, each of length $q$. Further, denote by $M$ the $q \times j$ matricial form of this partition of $\ket{\nu}$. Then,
	\[
		\dim ~ (\Mat_q(\CC) \otimes \Unit_j)\cdot \ket{\nu} = q \rank(M) .
	\]
\end{lemma}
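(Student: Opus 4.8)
The plan is to exploit that the object whose dimension we want is not merely a set but a \emph{linear} subspace, and then to recognize it as the image of a familiar matrix-multiplication map whose rank is transparent. First I would note that the assignment $\Lambda : \Mat_q(\CC) \to \CC^{jq}$ given by $\Lambda(A) = (A \oplus \cdots \oplus A)\ket{\nu}$ (with $j$ summands) is linear in $A$. Consequently $(\Mat_q(\CC)\otimes\Unit_j)\cdot\ket{\nu} = \im(\Lambda)$ is a subspace, and its dimension is exactly $\rank(\Lambda)$; this reduces the lemma to a rank computation for a single linear map.

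Next I would identify $\Lambda$ with a matrix product. Writing $\ket{\nu} = [\ket{v_1}, \ldots, \ket{v_j}]^{\trans}$ in terms of its $j$ sub-vectors $\ket{v_i} \in \CC^q$, these sub-vectors are by definition precisely the columns of the $q \times j$ matrix $M$. Since $(A\oplus\cdots\oplus A)\ket{\nu} = [A\ket{v_1}, \ldots, A\ket{v_j}]^{\trans}$ and $A\ket{v_i}$ is the $i$th column of the product $AM$, the map $\Lambda$ is, after the evident identification of $\CC^{jq}$ with the space $\Mat_{q\times j}(\CC)$ of $q\times j$ complex matrices (stacking columns), nothing but $A \mapsto AM$. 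It therefore suffices to compute the dimension of the subspace $\{AM \st A \in \Mat_q(\CC)\} \subseteq \Mat_{q\times j}(\CC)$.

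I would then characterize this image through row spaces. Every row of $AM$ is a linear combination of the rows of $M$, so the row space of any $AM$ lies in the row space of $M$; conversely, any $q\times j$ matrix whose rows all lie in the row space of $M$ is of the form $BM$ (express each row as a combination of rows of $M$ and read off $B$). Hence
\[
\{AM \st A \in \Mat_q(\CC)\} = \{\, N \in \Mat_{q\times j}(\CC) \st \text{every row of } N \text{ lies in the row space of } M \,\}.
\]
In this description the $q$ rows of $N$ may be chosen freely and independently within the $\rank(M)$-dimensional subspace $\mathrm{rowsp}(M) \subseteq \CC^j$, so the dimension is $q\,\rank(M)$, as claimed. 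As an independent check I would run the rank--nullity version: $\ker\Lambda = \{A \st AM = 0\}$ is exactly the set of matrices vanishing on the $r$-dimensional column space of $M$ (with $r = \rank M$), a space of dimension $q(q-r)$, whence $\rank(\Lambda) = q^2 - q(q-r) = q\,\rank(M)$.

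The computation carries no deep obstacle; the only point demanding care is the vectorization bookkeeping. One must match the stacking convention for $\ket{\nu}$ used in \eqref{eq:ketpart} against the columns-versus-rows of $M$, and keep straight that the \emph{image} is governed by the row space of $M$ while the \emph{kernel} is governed by its column space. Once these two conventions are aligned the dimension count is immediate.
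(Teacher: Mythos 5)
Your proof is correct and takes essentially the same route as the paper: both reduce the orbit $(\Mat_q(\CC)\otimes\Unit_j)\cdot\ket{\nu}$ to $\Mat_q(\CC)\cdot M$ via the vectorization identification, and then count the dimension as $q\,\rank(M)$ by noting that the rows of $AM$ range independently over the row space of $M$. The paper realizes this count concretely with the matrix units $e_{k,l}$ (each $e_{k,l}M$ being supported in a single row), whereas your row-space characterization and rank--nullity cross-check are equivalent packagings of the same computation.
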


\begin{proof}
	Clearly, $(\Mat_q(\CC) \otimes \Unit_j)\cdot \ket{\nu} \cong \Mat_q(\CC)\cdot M$, by vectorization of elements in the space on the right-hand side. Now, let $\beta := \{e_{k,l} \}_{1 \leq k,l \leq q}$ be the canonical basis for $\Mat_q(\CC)$ such that the $(k,l)$ entry of $e_{k,l}$ is $1$ (and the matrix is zero everywhere else) and note that $\Mat_q(\CC) \cdot M = \vspan\{e M \st e \in \beta \}$. We see further that the $l$th row of $e_{k,l} M$ is the $k$th row of $M$ and the matrix is zero everywhere else. Thus, for each fixed $l$, there are exactly $\rank(M)$ linearly independent matrices $e_{k,l} M$ for $1 \leq k \leq q$, since this is just the number of linearly independent rows of $M$. By the $q$ many choices for $l$ (due to the number of rows in $M$), we have the desired dimensionality computation in the statement of the lemma.
\end{proof}

This lemma along with the decomposition \erf{red:fulldecomp} give immediately the following form for the dimension of orbit of the initial state under $\A(\Ham)$:

\begin{theorem} \label{thm:diminv}
	Let $\ket{\nu}$ be as in \eqref{eq:ketpart} and define the matrix $M_k = \left[\ket{\nu_{J_{k-1}+1}} ~ \cdots ~ \ket{\nu_{J_k}} \right] \in \Mat_{q_k \times j_k}(\CC)$.
	Then,
	\begin{equation} \label{eq:diminv}
		\dim(\A(\Ham)\cdot \ket{\psi_0}) = \sum_{k=1}^m q_k \rank(M_k) .
	\end{equation}
\end{theorem}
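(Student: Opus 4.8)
The plan is to reduce the claim to the single-block computation already carried out in the lemma immediately preceding the theorem, assembling the result from three ingredients: invariance of the orbit dimension under the unitary change of basis $U$, the full block decomposition \eqref{red:fulldecomp} of $\A(\Ham)$, and a decoupling of the orbit across the summands of that decomposition. First I would observe that since $U$ is unitary, hence a linear isomorphism of $\cH_d$, it carries the orbit $\A(\Ham)\cdot\ket{\psi_0}$ bijectively onto $(U\A(\Ham)U\dg)\cdot(U\ket{\psi_0}) = (U\A(\Ham)U\dg)\cdot\ket{\nu}$ and preserves its dimension. Invoking \eqref{red:fulldecomp}, it therefore suffices to compute the dimension of $\big(\bigoplus_{k=1}^m \Mat_{q_k}\otimes\Unit_{j_k}\big)\cdot\ket{\nu}$. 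For each $1\le k\le m$ I would write $\ket{\nu^{(k)}} := [\ket{\nu_{J_{k-1}+1}},\ldots,\ket{\nu_{J_k}}]\trans \in \CC^{j_k q_k}$ for the restriction of $\ket{\nu}$ to the $k$-th block, so that the matrix $M_k$ of the statement is precisely the $q_k \times j_k$ matricial form of $\ket{\nu^{(k)}}$ in the sense of the preceding lemma.

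The crux is to establish that the orbit splits as the internal, block-orthogonal direct sum
\[
	\Big(\bigoplus_{k=1}^m \Mat_{q_k}\otimes\Unit_{j_k}\Big)\cdot\ket{\nu} = \bigoplus_{k=1}^m \big(\Mat_{q_k}\otimes\Unit_{j_k}\big)\cdot\ket{\nu^{(k)}}.
\]
This rests on two features of \eqref{red:fulldecomp}. On one hand, a generic element of the algebra has the block form $\bigoplus_k A_k\otimes\Unit_{j_k}$ with each $A_k \in \Mat_{q_k}$ free to be chosen \emph{independently}, so acting on $\ket{\nu}$ yields $\bigoplus_k (A_k\otimes\Unit_{j_k})\ket{\nu^{(k)}}$ in which each block ranges over all of $(\Mat_{q_k}\otimes\Unit_{j_k})\cdot\ket{\nu^{(k)}}$ without constraint from the others; equivalently, since every $I_{q_k}\otimes\Unit_{j_k}$ lies in the algebra, so do the block projections $P_k$, which cut out the individual summands. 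On the other hand, the images $(\Mat_{q_k}\otimes\Unit_{j_k})\cdot\ket{\nu^{(k)}}$ occupy mutually orthogonal coordinate blocks of $\cH_d$, so the sum is genuinely direct and its dimension is the sum of the summand dimensions. I expect this decoupling to be the one step that truly needs care; once the independence of the blocks and their mutual orthogonality are pinned down, everything else is bookkeeping.

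Finally I would apply the preceding lemma to each summand, which gives $\dim\big((\Mat_{q_k}\otimes\Unit_{j_k})\cdot\ket{\nu^{(k)}}\big) = q_k\rank(M_k)$, and sum over $k$ using the direct-sum decomposition above to conclude $\dim(\A(\Ham)\cdot\ket{\psi_0}) = \sum_{k=1}^m q_k\rank(M_k)$, which is exactly \eqref{eq:diminv}.
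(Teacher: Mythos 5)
Your proposal is correct and follows essentially the same route as the paper, which states the theorem as an immediate consequence of the preceding lemma together with the decomposition \eqref{red:fulldecomp}; your three ingredients (unitary invariance of the orbit dimension, blockwise decoupling of the orbit under $\bigoplus_{k=1}^m \Mat_{q_k}\otimes\Unit_{j_k}$, and the per-block formula $q_k\rank(M_k)$ from the lemma) are precisely the steps the paper leaves implicit. The decoupling argument you flag as the delicate point is sound: since the $A_k$ range independently and the blockwise orbits occupy orthogonal coordinate blocks, the orbit is the direct sum of the per-block orbits and the dimensions add.
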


Now we are in a position to assess how perturbations to the initial state affect the dimension of the reduced order model. Consider a Hamiltonian model $\Ham_{\ket{\psi_0}}$ that admits a reduced subspace of dimension $r<d$ -- \ie the model reduction matrix $\Phi \in \Mat_{d \times r}(\CC)$. Writing $\ket{\nu} = U\ket{\psi_0}$, we have by the assumption $r < d$ and \erf{eq:diminv} that for some $1\leq k\leq m$, $\rank(M_k) < \min(j_k, q_k)$. The theorem further shows us how the dimension of the reduced order model will change under a perturbation of the initial state to $\ket{\psi_0'} \in \cH_d$. Indeed, write $\ket{\nu'} = U\ket{\psi_0'}$ and let $M_k = \left[\ket{\nu_{J_{k-1}+1}} ~ \cdots ~ \ket{\nu_{J_k}} \right]$ and $M_k' = \left[\ket{\nu_{J_{k-1}+1}'} ~ \cdots ~ \ket{\nu_{J_k}'} \right]$. Then, $\rank(M_k')$ can be less or greater than $\rank(M_k)$ for $1 \leq k \leq m$ and Theorem \ref{thm:diminv} states that these ranks of the perturbed initial state dictate the change in the dimension of the reduced order model. The most obvious form of \emph{instability} in the reduced order model dimension under a perturbation of the initial state occurs when $M_k=0$ for some $k$, so that any perturbation in only that component of the vector (\ie so that $\rank(M_k')>0$ and $\rank(M_l)=\rank(M_l') ~~~\forall l\neq k$) increases the dimension of the reduction by $q_k \rank(M_k')$. Interestingly, Theorem \ref{thm:diminv} points out that there exist certain perturbations that have minimal change to the model reduction dimension. These are ones that possess symmetry that leads to significant linear dependencies between $\ket{\nu_{J_{k-1}+1}}, \ldots, \ket{\nu_{J_k}}$ for some $k$, and therefore $\rank(M_k')$ remains small. However, if the perturbation to the initial state is a randomly drawn pure state (\ie random according to the pushforward measure onto $\cH_d$ of the Haar measure on the group $U(d)$ of unitaries), then one expects that with overwhelming probability the dimension of the reduced subspace will increase, making the reduced order model dimensionally not robust to such perturbations in the initial vector. Therefore such robustness to perturbations in the initial state depend strongly on the form of the perturbing state, but regardless of the form, the representation of the perturbation in the basis that block diagonalizes the algebra and Theorem \ref{thm:diminv} allow one to assess the robustness.


\section{Pauli Hamiltonians}
\label{sec:pauli}
Until now the Hamiltonian model and state-space that it acts on have been completely general (apart from the assumption of finite dimension). In the following we will specialize to interacting spin-$\frac{1}{2}$ (or qubit) models. In this setting we can exploit properties of the generalized Pauli algebra to strengthen our results. This subset of models is also of considerable interest to the condensed matter and quantum information communities. Explicitly, the setting in this section is: $\cH_d \cong (\CC^2)^{\otimes n}$ with $d=2^n$, where $n$ is the number of spins in the system. 

\subsection{Preliminaries}
Let $\Ham(\lambda) = \sum_{k=1}^M \lambda_k H_k$ be an $n$-spin Hamiltonian model, meaning that each $H_k$ is a linear combination of elements of $\Pb_n^*$. Writing each $H_k$ out explicitly in the basis of Pauli operators, $\Pb_n^*$, we get
\begin{equation} \label{eq:pauli_ham}
	\Ham(\lambda) = \sum_{k=1}^M \lambda_k \left( \sum_{j=1}^K \alpha_{k,j} ~ \bp_j^k \right) ,
\end{equation}
where each $\bp_j^k \in \Pb_n^*$, and the $\alpha_{i,j}$ are scalar coefficients appearing in the linear expansion of the component Hamiltonian $H_k$. Note that the number of parameters, $M$, defining the model, does not have to equal the number of Pauli operators in this expansion -- one parameter could \emph{control} multiple Paulis. We collect the set of Pauli operators appearing in the description of the model $\Ham$, in the following:

\begin{definition}
$\Sb_\lambda(\Ham) := \{\sigma \in  \Pb_n^* \st \tr(\sigma^\dagger \Ham(\lambda)) \neq  0\}$ is the set containing the Pauli matrices from $\Pb_n^*$ whose coefficients, $\alpha_{k,j}$, in the decomposition \erf{eq:pauli_ham} of $\Ham(\lambda)$ are non-zero.
\end{definition}

When each parameter of the Hamiltonian model controls exactly one Pauli operator, we are in a special case, where many calculations considered can be greatly simplified. We single out this case as:

\begin{definition}
\emph{Pure Pauli Hamiltonians} are spin Hamiltonians where $|\Sb_\lambda(\Ham)| = M$. That is, each Pauli operator is multiplied by an independent free parameter in the model:
	\begin{equation} \label{eq:pure_pauli}
		\Ham^*(\lambda) = \sum_{k=1}^M \lambda_k \bp_k ,
	\end{equation}
	where again, $\bp_k \in \Pb_n^*$.
\end{definition}

In fact, in some cases it may be beneficial to over-parametrize a Hamiltonian of the form \erf{eq:pauli_ham} by a pure Pauli Hamiltonian by considering each coefficient as being an independent parameter for the model. When this is done, we will refer to it as \emph{over-parameterizing} a model. For example, consider an $n$-spin 1D transverse field Ising model with periodic boundary conditions, which has the Hamiltonian
\[
	\Ham(\lambda= (B,J)) = -B \sum_{j=1}^n \sigma_x^{(j)} - J \left(\sum_{j=1}^{n-1} \sigma_z^{(j)} \otimes \sigma_z^{(j+1)} + \sigma_z^{(n)} \otimes \sigma_z^{(1)}\right),
\]
where we have suppressed identities in writing $n$-spin Pauli operators, \eg $\sigma_x^{(j)} := I_2 \otimes ... \otimes \sigma_x \otimes ... \otimes I_2$, where the $\sigma_x$ is in the $j^{th}$ position. The over-parameterized Hamiltonian corresponding to this model is:
\[
	\Ham^*(\Lambda) = \sum_{j=1}^n \Lambda_j \sigma_x^{(j)} + \sum_{j=1}^{n-1} \bar \Lambda_j\sigma_z^{(j)} \otimes \sigma_z^{(j+1)} + \bar \Lambda_n\sigma_z^{(n)} \otimes \sigma_z^{(1)}
\]
That is, each Pauli term in the original Hamiltonian is multiplied by an independent parameter. The reduced subspaces for $\Ham^*(\Lambda)_{\ket{\psi_0}}$ contain the minimal reduced subspace for $\Ham(\lambda)_{\ket{\psi_0}}$, and consequently $\dim \A(\Ham^*)_{\ket{\psi_0}} \geq \dim \A(\Ham)_{\ket{\psi_0}}$. However, we shall see below that computing the \emph{minimal} such over-approximation involves a much more efficient procedure than computing the minimal reduced subspace for $\Ham$. 

The major computational simplification in the case of pure Pauli Hamiltonians, which we shall exploit in the following, is provided by an isomorphism between the elements of $\Pb_n^*$ and binary vectors of length $2n$:
\begin{theorem} \label{pauli:groups}
	Let $B = \{\pm i I_d, \pm  I_d \}$, and $G_n := \langle \Pb_n^*\rangle $ be the $n$-spin Pauli group. Then, there is a bijection $G_n / B \leftrightarrow \Pb_n$ that naturally induces a group structure on $\Pb_n$ (so that $G_n / B \cong  \Pb_n$). Furthermore, under this group structure, there exists an isomorphism $\varphi:\Pb_n \xrightarrow{\sim} (\ZZ_2)^{2n}$.
\end{theorem}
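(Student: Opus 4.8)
The plan is to proceed in three stages: first exhibit the bijection and use it to transport the group law from $G_n/B$ onto $\Pb_n$; then construct $\varphi$ explicitly as the symplectic coordinatization of Pauli strings and verify it respects the products; and finally conclude by a cardinality count. I would begin by recording the normal form for the Pauli group. On a single qubit, any product $\sigma_a\sigma_b$ ($\sigma_a,\sigma_b\in\Pb_1$) equals $\omega\,\sigma_c$ for some $\omega\in B$ and $\sigma_c\in\Pb_1$, and tensoring site-by-site shows the same for $\Pb_n$: the ordinary matrix product of two elements of $\Pb_n$ is always a scalar in $B$ times a single element of $\Pb_n$. Since each Pauli string squares to $I_d$, it follows that $G_n=\langle\Pb_n^*\rangle=\{\,\omega P : \omega\in B,\ P\in\Pb_n\,\}$, so $|G_n|=4\cdot4^n$, and $B$ is the central (hence normal) subgroup of scalars, giving $|G_n/B|=4^n=|\Pb_n|$.

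With this in hand, the map $P\mapsto PB$ restricted to $\Pb_n$ is a bijection onto $G_n/B$: it is surjective by the normal form, and it is injective because the elements of $\Pb_n$ form an orthogonal (hence linearly independent) basis of operator space, as noted in \textsection\ref{sec:setup}, so no two distinct ones can differ by a scalar in $B$. Transporting the quotient group law through this bijection then \emph{defines} the product on $\Pb_n$: for $P,Q\in\Pb_n$ let $P\cdot Q$ be the unique $R\in\Pb_n$ with $PQ\in RB$. Well-definedness is exactly the normal-form statement above (the phase of $PQ$ always lies in $B$), so $(\Pb_n,\cdot)\cong G_n/B$, establishing the first claim.

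Next I would build $\varphi$. On one qubit, every element of $\Pb_1$ equals, up to a phase in $B$, a monomial $\sigma_x^{a}\sigma_z^{b}$ with $(a,b)\in\ZZ_2^2$: explicitly $I\leftrightarrow(0,0)$, $\sigma_x\leftrightarrow(1,0)$, $\sigma_z\leftrightarrow(0,1)$, and $\sigma_y\leftrightarrow(1,1)$ since $\sigma_x\sigma_z=-i\,\sigma_y$. For $n$ qubits I assign to $\bigotimes_{k=1}^n\sigma_x^{a_k}\sigma_z^{b_k}$ the vector $(a_1,\dots,a_n,b_1,\dots,b_n)\in\ZZ_2^{2n}$, a well-defined map on phase classes. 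The homomorphism property is the crux: using $\sigma_x^2=\sigma_z^2=I$ together with the single anticommutation relation $\sigma_z\sigma_x=-\sigma_x\sigma_z$, one slides the $\sigma_z$-part of the left factor past the $\sigma_x$-part of the right factor, incurring only a sign in $B$, to get
\[
(\sigma_x^{a}\sigma_z^{b})(\sigma_x^{a'}\sigma_z^{b'}) = (-1)^{b a'}\,\sigma_x^{\,a+a'}\sigma_z^{\,b+b'},
\]
with all exponents read mod $2$. Tensoring over the $n$ sites shows that the induced product $\cdot$ on $\Pb_n$ corresponds, under $\varphi$, precisely to componentwise addition in $\ZZ_2^{2n}$.

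Finally, $\varphi$ is surjective since every binary string of length $2n$ is realized, and both sides have cardinality $4^n=2^{2n}$, so it is a bijective homomorphism and hence an isomorphism $(\Pb_n,\cdot)\xrightarrow{\sim}(\ZZ_2)^{2n}$. The only step demanding genuine care is tracking the anticommutation sign in the displayed homomorphism computation and confirming that the accumulated phase always lands in $B$ rather than a larger scalar set; everything else reduces to the normal-form bookkeeping and a cardinality count, both routine given the orthogonality of $\Pb_n$ already established.
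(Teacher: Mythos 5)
Your proposal is correct, and the first half (the coset bijection $\bp \mapsto \bp B$ and the transported product) matches the paper's argument, though you supply more detail: the normal-form closure of $\Pb_n$ under multiplication modulo phases, and injectivity via the linear independence of the Pauli strings, both of which the paper simply asserts. Where you genuinely diverge is the second claim. The paper's proof is abstract and non-constructive: having observed that $(\Pb_n,\cdot)$ is abelian with every element of order two, it invokes the classification theorem for finitely generated abelian groups together with $|\Pb_n| = 2^{2n}$ to conclude that an isomorphism $\varphi: \Pb_n \xrightarrow{\sim} (\ZZ_2)^{2n}$ \emph{exists}. You instead build $\varphi$ explicitly through the $\sigma_x^{a}\sigma_z^{b}$ normal form and verify the homomorphism property by tracking the anticommutation sign $(-1)^{ba'}$, confirming the accumulated phase stays in $B$ — which is precisely the constructive identification that the paper relegates to the remark following the theorem (citing the quantum error correction literature, with a transposed but equally valid sign convention). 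Each route buys something: the paper's argument is shorter and isolates exactly the two structural facts needed (abelian, exponent two); yours requires no classification theorem, is self-contained, and produces the concrete binary symplectic map that the paper's later algorithms (the binary vector representation used for the Pauli Burnside basis computations) actually depend on, so in a sense your proof does more work that the paper defers to a citation. The only points to tighten are minor: your claim $|G_n| = 4\cdot 4^n$ (equivalently $B \subseteq G_n$, needed even to form the quotient) deserves the one-line witness $\sigma_x\sigma_y\sigma_z = iI$, and your verification that each coset of $B$ contains a unique monomial class $\sigma_x^{a}\sigma_z^{b}$ is implicit in your four single-qubit assignments but worth stating when tensoring over sites.
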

\begin{proof}
	Since $G_n = \{\pm  i \bp, \pm  \bp \st \bp \in  \Pb_n \}$ and given $\bp \in  \Pb_n$, $\bp B = \{\pm  i \bp, \pm  \bp \}$, we have that the mapping $\Pb_n \ni \bp \mapsto  \bp B \in  G_n / B$ is clearly a bijection. By defining multiplication in $\Pb_n$ to be matrix multiplication \emph{modulo} $B$, we furthermore have a group structure on $\Pb_n$.

	Under this group structure, $\Pb_n$ is abelian and each element is of order two, making this is an \emph{elementary abelian} group, which by the classification theorem for finitely generated abelian groups \cite[Theorem 4.5.1]{herstein1975topics} and the fact that $|\Pb_n| = 2^{2n}$, gives us the
	existence of an isomorphism $\varphi:\Pb_n \xrightarrow{\sim} (\ZZ_2)^{2n}$.
\end{proof}

\begin{remark}
	The identification $\Pb_n \cong  (\ZZ_2)^{2n}$ is commonly made in quantum error correction literature
	(see \cite[\textsection~10.2]{lidar2013quantum} and \cite{calderbank1998quantum}) in a more constructive
	manner, by extending $\varphi$ tensorially from the $n=1$ case, wherein $\sigma_x \mapsto  (0,1), \sigma_y \mapsto  (1,1)$ and
	$\sigma_z \mapsto  (1,0)$.
\end{remark}

The isomorphism $\varphi$ induces what is referred to as a \emph{binary vector representation} of the Pauli group generators \cite[\textsection~2]{calderbank1998quantum}. This enables us to represent each element of $\Pb_n$ as a binary vector of length $2n$ (linear in the number of spins), and the multiplication operation among matrices is mapped to the XOR operation among binary vectors.

\subsection{Certificate for invariant subspaces of Pauli Hamiltonians}
The algebraic certificate presented in \textsection  \ref{sec:cert_general} for the existence of invariant subspaces for a Hamiltonian model $\Ham$ requires computing a basis for the subalgebra generated by $\Coeff(\Ham)$. The bottlenecks in this computation (detailed in Appendix \ref{app:burnside_alg}) are (1) the multiplication of elements in $\Coeff(\Ham)$, and (2) $\vspan$ checks to assess linear dependence of elements. For general Pauli Hamiltonians of the form \erf{eq:pauli_ham} neither of these bottlenecks are completely removed (although, since the Pauli matrices have significant structure the multiplications should be performed symbolically as opposed to explicitly forming the $d\times d$ matrices). In contrast, considerable simplification is possible for \emph{pure Pauli} Hamiltonians of the form \erf{eq:pure_pauli} and both bottlenecks can be overcome. In this case each element of the natural $\Coeff(\Ham)$ is a member of $\Pb_n$ and has a representation as a vector in $(\ZZ_2)^{2n}$. Further, multiplication of two elements in $\Pb_n$ results in another element in $\Pb_n$ (along with a multiplicative constant that we do not need to track), and thus all multiplications in Algorithm \ref{alg_burnside} in Appendix \ref{app:burnside_alg} can be implemented as binary addition of vectors of length $2n$ (as opposed to multiplication of matrices of size $2^n \times 2^n$). As for bottleneck (2), the fact that all elements of $\Pb_n$ are linearly independent, implies that any linear dependence or $\vspan$ based checks in the algorithm become unnecessary. Instead, we need only check for duplicates in the collection of multiplicatively generated elements (since computational storage objects such as arrays are forms of \emph{multisets} and not the usual mathematical sets), the removal of which gives our Burnside basis (the test for membership can be efficiently implemented by the AND operation on binary vectors). A straightforward modification of the algorithm to generate the Burnside basis that exploits these simplifications to simplify computations is specified as Algorithm \ref{alg:pauli_burnside} in Appendix \ref{app:burnside_alg}. It is also possible to construct a completely different procedure that utilizes all the structure in $\Pb_n$ and its binary vector representation, and this is presented as Algorithm \ref{alg:pauli_burnside_basis_2} in Appendix \ref{app:burnside_alg}. It greatly improves the efficiency of the Burnside basis calculation by utilizing the key structural properties of $\Pb_n$: that it is an elementary abelian $2$-group and all of its operators are linearly independent as vectors of $\Mat_d(\CC)$. This is done by quickly enumerating all possible binary vector additions through a Gray codes table, a method employed in the Method of Four Russians algorithm for quickly generating subspaces of $(\ZZ_2)^k$ (see \cite[\textsection~9.2]{bard2009algebraic} and \cite[\textsection~2]{bard2006accelerating}).
All required operations in Algorithms \ref{alg:pauli_burnside} and \ref{alg:pauli_burnside_basis_2} are clearly polynomial complexity in $n$, the number of spins. However, it must be noted that the size of the Burnside basis, $|\mathscr{B}(\Ham)|$, can still be exponential in $n$.

As alluded to above, this simplification for pure Pauli Hamiltonians motivates the over-parametrization of spin Hamiltonians; even if the true Hamiltonian takes the form in \erf{eq:pauli_ham} one may want to over-parametrize it to obtain the form \erf{eq:pure_pauli} because computing the algebraic certificate for the existence of invariant subspaces for $\Ham^*$ is more efficient. However, one must be mindful that invariant subspaces for the model $\Ham$ may exist even when none exist for $\Ham^*$.

In addition to simplifying the calculation of the Burnside basis (whose size provides a necessary and sufficient condition for model reduction by Proposition \ref{red:burnprop}), the $\Pb_n \cong  (\ZZ_2)^{2n}$ isomorphism allows us to derive a strong \emph{sufficiency} criteria for Pauli Hamiltonians:
\begin{theorem} \label{pauli:thm1}
	Any $n$-spin Hamiltonian $\Ham(\lambda)$ with $|\Sb_\lambda(\Ham)| < 2n$ has a non-trivial proper invariant subspace.
\end{theorem}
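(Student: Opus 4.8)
The plan is to invoke the certificate of Proposition \ref{red:burnprop}: a non-trivial proper invariant subspace exists precisely when $\A(\Ham)$ is a proper subalgebra of $L(\cH_d)$, i.e. when $\dim(\A(\Ham)) < \dim(L(\cH_d)) = d^2 = 4^n = 2^{2n}$. So the whole theorem reduces to establishing the bound $\dim(\A(\Ham)) < 2^{2n}$ under the hypothesis $|\Sb_\lambda(\Ham)| < 2n$. The first step is to pass to the algebra generated by the individual Pauli operators appearing in the model. By Proposition \ref{prelim:prop1} we may take $\Coeff(\Ham)$ to be the component operators $\{H_k\}$, each of which is a complex linear combination of the Paulis in $\Sb_\lambda(\Ham)$. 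Since every $H_k$ thus lies in the algebra generated by $\Sb_\lambda(\Ham)$, we obtain the inclusion $\A(\Ham) = \A(\Coeff(\Ham)) \subseteq \A(\Sb_\lambda(\Ham))$, so it suffices to bound $\dim(\A(\Sb_\lambda(\Ham)))$.

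The key step is to identify this latter dimension exactly. Using repeatedly that the product of two elements of $\Pb_n$ is again an element of $\Pb_n$ times a phase in $B = \{\pm 1, \pm i\}$, every monomial in the generators $\Sb_\lambda(\Ham)$ is a scalar multiple of a single Pauli operator lying in the subgroup $\langle \Sb_\lambda(\Ham)\rangle \leq \Pb_n$ generated under the mod-phase multiplication of Theorem \ref{pauli:groups}. Hence $\A(\Sb_\lambda(\Ham)) = \vspan_{\CC}\langle \Sb_\lambda(\Ham)\rangle$, and because distinct elements of $\Pb_n$ are linearly independent in $\Mat_d(\CC)$, this dimension equals exactly the cardinality $|\langle \Sb_\lambda(\Ham)\rangle|$.

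Finally I would bound this cardinality through the isomorphism $\varphi \colon \Pb_n \xrightarrow{\sim} (\ZZ_2)^{2n}$ of Theorem \ref{pauli:groups}. Under $\varphi$ the subgroup $\langle \Sb_\lambda(\Ham)\rangle$ maps to the $\ZZ_2$-linear span of the image vectors $\{\varphi(\sigma) : \sigma \in \Sb_\lambda(\Ham)\}$, a subspace of $(\ZZ_2)^{2n}$ whose dimension is at most the number of generators, i.e. at most $|\Sb_\lambda(\Ham)|$. The hypothesis $|\Sb_\lambda(\Ham)| < 2n$ forces this $\ZZ_2$-dimension to be at most $2n-1$, so $|\langle \Sb_\lambda(\Ham)\rangle| \leq 2^{2n-1}$. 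Chaining the inequalities gives $\dim(\A(\Ham)) \leq 2^{2n-1} < 2^{2n} = \dim(L(\cH_d))$, and Proposition \ref{red:burnprop} then delivers the invariant subspace.

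The only real subtlety — the step I would be most careful about — is the exact identification $\dim(\A(\Sb_\lambda(\Ham))) = |\langle \Sb_\lambda(\Ham)\rangle|$. One must confirm both that no product escapes the subgroup (closure up to a phase in $B$, which is exactly the group structure furnished by Theorem \ref{pauli:groups}) and that the count is tight rather than merely an upper bound, which relies on the global linear independence of the $4^n$ elements of $\Pb_n$ as matrices. Everything else is a routine chaining of dimension inequalities, and the hypothesis enters only at the very last stage to convert $<2n$ generators into a strict inequality $2^{2n-1}<4^n$.
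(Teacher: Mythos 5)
Your proposal is correct and follows essentially the same route as the paper's proof: reduce via $\A(\Ham) \subseteq \A(\Sb_\lambda(\Ham))$, identify $\A(\Sb_\lambda(\Ham))$ with the $\CC$-span of the subgroup $\langle \Sb_\lambda(\Ham)\rangle$ generated modulo phases, and use the isomorphism $\varphi\colon \Pb_n \xrightarrow{\sim} (\ZZ_2)^{2n}$ to conclude that fewer than $2n$ generators cannot span $(\ZZ_2)^{2n}$, so the algebra is proper and Proposition \ref{red:burnprop} applies. Your explicit cardinality bound $|\langle \Sb_\lambda(\Ham)\rangle| \leq 2^{2n-1}$ is a harmless quantitative refinement that the paper records separately in the remark bounding $|\mathscr{B}(\Ham)| \leq 4^n/2^m$.
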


\begin{proof}
We begin by noting $\A(\Ham) \subseteq  \A(\Sb_\lambda(\Ham))$. Then note that $\A(\Sb_\lambda(\Ham))$ is just the linear span of $\langle \Sb_\lambda(\Ham)\rangle _B$, where $\langle X\rangle _B$ denotes the group generated by $X$, modulo $B$. Therefore we have by Theorem \ref{red:burnside} that if $\vspan(\langle \Sb_\lambda(\Ham)\rangle _B)$ is a proper subspace of $L(\cH_d)$ (recall that $\cH_d = (\CC^2)^{\otimes n}$), then $\A(\Sb_\lambda(\Ham))$ is reducible. These facts together with Proposition \ref{red:burnprop} imply that if $\vspan(\langle \Sb_\lambda(\Ham)\rangle _B) < L(\cH_d)$, then the model $\Ham$ has a non-trivial proper invariant subspace. 

We know that $L(\cH_d)$ is spanned by the linearly independent elements of $\Pb_n$, and so $\vspan(\langle \Sb_\lambda(\Ham)\rangle _B) < L(\cH_d)$ is equivalent to $\langle \Sb_\lambda(\Ham)\rangle _B < \Pb_n$, \ie the group generated by $\Sb_\lambda(\Ham)$ is a proper subgroup of the generators of the Pauli group. To show that this is true when $|\Sb_\lambda(\Ham)| < 2n$, we turn to the binary vector representation of $\Pb_n$: $(\ZZ_2)^{2n}$ is a $2n$-dimensional linear space over $\ZZ_2$ with scalar multiplication given in the natural way: for any $\bm{\nu} \in  (\ZZ_2)^{2n}$, $0\bm{\nu} = 0$ and $1\bm{\nu} = \bm{\nu}$. As a result, constructing the linear span of elements in $(\ZZ_2)^{2n}$ is equivalent to constructing the group generated by the corresponding elements in $\Pb_n$. Moreover, since the $2n$-dimensional linear space $(\ZZ_2)^{2n}$ is spanned by $2n$ binary vectors, we can conclude that the minimal generating set of the group $\Pb_n$ is of size $2n$; \ie $\rank(\Pb_n) = 2n$. This is sufficient to prove the result, since by assumption $|\varphi(\Sb_\lambda(\Ham))| = |\Sb_\lambda(\Ham)| < 2n$, and therefore $\text{span}(\varphi(\Sb_\lambda(\Ham))) = \varphi(\langle \Sb_\lambda(\Ham)\rangle _B) < (\ZZ_2)^{2n}$, whence $\langle \Sb_\lambda(\Ham)\rangle _B < \Pb_n$.
\end{proof}

\begin{remark}
If $|\Sb_\lambda(\Ham)| = 2n-m$ for $0 < m \leq 2n$, then a straightforward bound on the size of the Burnside basis is given by $|\mathscr{B}(\Ham)| \leq \frac{4^n}{2^m}$.
\end{remark}

A generalization of the above result to the complementary case where $|\Sb_\lambda(\Ham)| \geq  2n$ is possible by again utilizing the isomorphism $\varphi:\Pb_n \xrightarrow{\sim} (\ZZ_2)^{2n}$ that allows us to associate a collection of Pauli operators with a binary matrix. But in this case a counting argument is insufficient and one needs to perform a computation:
\begin{theorem} \label{pauli:thm2}
	An $n$-spin Hamiltonian model $\Ham$ has a non-trivial proper invariant subspace whenever $\rank \varphi(\Sb_\lambda(\Ham)) < 2n$. 
\end{theorem}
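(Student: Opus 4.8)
The plan is to reduce everything to the binary vector picture and observe that the quantity $\rank \varphi(\Sb_\lambda(\Ham))$ is precisely the $\ZZ_2$-dimension of the subgroup generated by $\Sb_\lambda(\Ham)$, so that the hypothesis becomes exactly the condition for that subgroup to be proper. I would begin just as in the proof of Theorem \ref{pauli:thm1}: since $\A(\Ham) \subseteq \A(\Sb_\lambda(\Ham))$ and $\A(\Sb_\lambda(\Ham)) = \vspan(\langle \Sb_\lambda(\Ham)\rangle_B)$, Theorem \ref{red:burnside} together with Proposition \ref{red:burnprop} show that it suffices to prove that $\langle \Sb_\lambda(\Ham)\rangle_B$ is a proper subgroup of $\Pb_n$, equivalently that $\vspan(\langle \Sb_\lambda(\Ham)\rangle_B) < L(\cH_d)$. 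Thus the entire content of the theorem is to convert $\rank \varphi(\Sb_\lambda(\Ham)) < 2n$ into this properness statement.

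The key step is the identification, under the isomorphism $\varphi:\Pb_n \xrightarrow{\sim} (\ZZ_2)^{2n}$ of Theorem \ref{pauli:groups}, of group generation in $\Pb_n$ with $\ZZ_2$-linear span in $(\ZZ_2)^{2n}$. Because $\varphi$ carries the group operation (matrix multiplication modulo $B$) to vector addition and every element of $\Pb_n$ has order two, the subgroup generated by any set $X \subseteq \Pb_n$ corresponds exactly to the $\ZZ_2$-linear span of $\varphi(X)$; that is, $\varphi(\langle X\rangle_B) = \vspan_{\ZZ_2}(\varphi(X))$. Applying this with $X = \Sb_\lambda(\Ham)$ gives $\varphi(\langle \Sb_\lambda(\Ham)\rangle_B) = \vspan_{\ZZ_2}(\varphi(\Sb_\lambda(\Ham)))$, whose $\ZZ_2$-dimension is by definition $\rank \varphi(\Sb_\lambda(\Ham))$, the $\ZZ_2$-rank of the binary matrix whose rows are the vectors $\varphi(\sigma)$ for $\sigma \in \Sb_\lambda(\Ham)$.

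With this in hand the conclusion is immediate: since $(\ZZ_2)^{2n}$ has dimension $2n$, the span $\vspan_{\ZZ_2}(\varphi(\Sb_\lambda(\Ham)))$ is a proper subspace if and only if $\rank \varphi(\Sb_\lambda(\Ham)) < 2n$. Under the hypothesis this holds, so $\langle \Sb_\lambda(\Ham)\rangle_B$ is a proper subgroup of $\Pb_n$ (recall $|\Pb_n| = 2^{2n}$), hence $\vspan(\langle \Sb_\lambda(\Ham)\rangle_B) < L(\cH_d)$, and Proposition \ref{red:burnprop} then yields the desired non-trivial proper invariant subspace. I would also remark that this strictly subsumes Theorem \ref{pauli:thm1}, since $\rank \varphi(\Sb_\lambda(\Ham)) \leq |\Sb_\lambda(\Ham)|$, so $|\Sb_\lambda(\Ham)| < 2n$ already forces the rank below $2n$; the gain is that the generators may now be numerous provided they are $\ZZ_2$-linearly dependent.

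I do not anticipate a serious obstacle, since the heavy lifting (Theorem \ref{red:burnside}, Proposition \ref{red:burnprop}, and the isomorphism $\varphi$) is already established. The only point requiring care is the verification that group generation corresponds to $\ZZ_2$-span rather than merely that $\varphi$ is a group isomorphism, i.e.\ checking that closure under the group operation coincides with closure under $\ZZ_2$-linear combinations. This is precisely where the elementary-abelian (order-two) structure is used essentially, and it is the same structural fact already invoked in the proof of Theorem \ref{pauli:thm1}.
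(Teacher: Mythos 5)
Your proposal is correct and takes essentially the same approach as the paper: the paper's own proof of Theorem \ref{pauli:thm2} consists precisely of observing that $\rank \varphi(\Sb_\lambda(\Ham)) < 2n$ holds if and only if $\im(\varphi|_{\langle \Sb_\lambda(\Ham)\rangle}) < (\ZZ_2)^{2n}$ and then citing the proof of Theorem \ref{pauli:thm1}, whose chain of reductions ($\ZZ_2$-span equals generated subgroup under $\varphi$, properness of $\langle \Sb_\lambda(\Ham)\rangle_B$ in $\Pb_n$, hence $\vspan(\langle \Sb_\lambda(\Ham)\rangle_B) < L(\cH_d)$, hence Proposition \ref{red:burnprop}) you have simply inlined rather than invoked by reference. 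Your auxiliary remarks, that the elementary-abelian structure of $\Pb_n$ is exactly what makes group generation coincide with $\ZZ_2$-linear span, and that Theorem \ref{pauli:thm2} subsumes Theorem \ref{pauli:thm1} because $\rank \varphi(\Sb_\lambda(\Ham)) \leq |\Sb_\lambda(\Ham)|$, are both correct and consistent with the paper.
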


\begin{proof}
	Clearly, the binary matrix corresponding to the collection of vectors $\varphi(\Sb_\lambda(\Ham)) \subseteq  (\ZZ_2)^{2n}$ has rank less than $2n$ if and only if $\im(\varphi|_{\langle \Sb_\lambda(\Ham)\rangle }) < (\ZZ_2)^{2n}$. By the proof of Theorem \ref{pauli:thm1}, this implies that $\Ham$ has a non-trivial proper invariant subspace. 
\end{proof}

\vspace{2mm}
\begin{remarks}\begin{multirem}
\rem The rank computation required in this theorem is performed over the field $\ZZ_2$ and thus has some efficient implementations based on the Method of Four Russians \cite{M4RI}.
\rem For pure Pauli Hamiltonians, this condition becomes both necessary and sufficient for the existence of a proper non-trivial invariant subspace for the model. If a basis for the reduced subspace must be constructed, then the Burnside basis must be explicitly calculated, which is nothing more than the group generated by $\Sb_\lambda(\Ham)$ (see Algorithms \ref{alg:pauli_burnside} and \ref{alg:pauli_burnside_basis_2} in Appendix \ref{app:burnside_alg}).
\end{multirem}\end{remarks}

Theorems \ref{pauli:thm1} and \ref{pauli:thm2} are useful sufficient conditions for Pauli Hamiltonians: by simply counting the number of terms in the Pauli decomposition of a spin Hamiltonian (or by computing the rank of a binary matrix in the case of Theorem \ref{pauli:thm2}) they allow one to check for the existence of invariant subspaces for the model. However, it should be noted that these sufficient conditions can be very conservative if $M < |\Sb_\lambda(\Ham)|$. In this case, the number of free parameters is smaller than the number of Pauli ``directions" in the Hamiltonian and an invariant subspace can exist even if the sufficient conditions in Theorems \ref{pauli:thm1} and \ref{pauli:thm2} are not met.

\subsection{Constructing reduced subspaces for Pauli Hamiltonians}
\label{sec:construct_pauli}
Once the Burnside basis has been computed, the remaining step is to apply its elements to $\ket{\psi_0}$ and collect the maximal linearly independent members of the resulting vectors into the model reduction matrix $\Phi$. For general Pauli Hamiltonians this is just as computationally demanding as for general Hamiltonians. Even for pure Pauli Hamiltonians, where each element of the Burnside basis $B_j \in \Pb_n$, this step is normally computationally demanding since $\ket{\beta_j} = B_j \ket{\psi_0}$ cannot be computed using the binary vector representation of $B_j$ in general.

Nonetheless, in the following we sketch an efficient approach to computing the reduced order model in the special case where expectation values of Pauli operators under the initial state are easily computable; explicitly, $\expect{\bp}  := \bra{\psi_0} \bp \ket{\psi_0}$ is assumed to be known or efficiently computable for all $\bp \in \Pb_n$. This assumption is valid for many physically relevant scenarios where the initial state has particularly simple structure, \eg $\ket{\psi_0}$ is a separable state or a matrix product state \cite{Verstraete:2008ko}. Given this assumption and the fact that $B_j \in \Pb_n$ for pure Pauli Hamiltonians, a maximal linearly independent subset of $\{\ket{\beta_j}\}_{j=1}^{|\mathscr{B}(\Ham)|}$ can be efficiently found by the following procedure:
\begin{enumerate}
\item Initialize the linearly independent subset as $ \mathscr{L} = \{\ket{\beta_1} \}$.
\item Take a vector $\ket{\beta_j}, ~~ 1<j\leq |\mathscr{B}(\Ham)|$, and compute the Gramian matrix for $\ket{\beta_j} \cup \mathscr{L}$. This is efficient by assumption since the entries of the Gramian matrix are all of the form $\bra{\psi_0}B_k\dg B_l \ket{\psi_0} = \expect{\bp}$ for some $\bp \in \Pb_n$.
\item If the resulting Gramian matrix has nonzero determinant, include $\ket{\beta_j}$ in $\mathscr{L}$.
\item Repeat steps 2-3 for all $1<j\leq |\mathscr{B}(\Ham)|$.
\end{enumerate}
At the conclusion of this procedure the columns of the model reduction matrix $\Phi$ are the elements of $\mathscr{L}$ (after suitable ortho-normalization).

As with the computation of the Burnside basis, the above procedure is not guaranteed to be polynomial complexity in the number of spins, $n$, since $|\mathscr{B}(\Ham)|$ could be exponential in $n$. However, for models where significant model reduction is possible the dimension of the Gramian matrix whose determinant needs to be computed will not grow quickly since although $|\mathscr{B}(\Ham)|$ can be large, the number of linearly independent vectors in the set $\{\ket{\beta_j}\}_{j=1}^{|\mathscr{B}(\Ham)|}$ will be small.

The assumption of easily computable Pauli expectation values under the initial state also enables efficient formation of the reduced order Hamiltonian in \erf{eq:red_order_ham}. Forming this Hamiltonian by brute-force requires the projection of a $d\times d$ matrix, but writing out this projection in the Pauli case yields,
\[
\calhat{\Ham}(\lambda) = \Phi \dg \Ham(\lambda) \Phi = \sum_{i=1}^M \lambda_i \left[\begin{array}{c} \bra{\psi_0}B_{j_1} \\ \bra{\psi_0} B_{j_2} \\ \vdots \\ \bra{\psi_0} B_{j_r} \end{array}\right] \bp_i \left[ B_{j_1} \ket{\psi_0}, B_{j_2} \ket{\psi_0}, ..., B_{j_r} \ket{\psi_0}\right ],
\]
where the columns of $\Phi$ are formed from the $r$ maximal linearly independent set of vectors out of $\{ B_j \ket{\psi_0}\}_{j=1}^{|\mathscr{B}(\Ham)|}$. Recalling that $B_j \in \Pb_n$, it is clear that each element of the matrix $\calhat{\Ham}(\lambda)$ takes the form $\bra{\psi_0}B_k \bp_i B_j \ket{\psi_0} = \expect{\bp}$ for some $\bp \in \Pb_n$ that can be determined efficiently since the product $B_k \bp_i B_j$ can be calculated using the binary vector representation. 

\section{Examples}
\label{sec:eg}
In this section we apply the methods developed thus far to three paradigmatic spin models. In all the examples, we notate spin states using the $\sigma_z$ eigenbasis $\{\ket{0}, \ket{1}\}$, with $\sigma_z \ket{b} = (-1)^{b}\ket{b}$.

\begin{example}[Collective rotation Hamiltonian]
A spin model with significant symmetry commonly used to describe nuclear magnetic resonance systems is the collective rotation Hamiltonian
\begin{equation}
\label{eq:eg_coll_spin}
		\Ham_{\textrm{$n$-spin}}(\lambda) = \lambda_z \sum_{j=1}^n \sigma_z^{(j)} + \lambda_x \sum_{j=1}^n \sigma_x^{(j)} + \lambda_y \sum_{j=1}^n \sigma_y^{(j)},
\end{equation}
where $j$ denotes the spin on which the Pauli matrix acts non-trivially, and the three model parameters are $\lambda = (\lambda_x, \lambda_y, \lambda_z)$. This dynamical model possess complete permutation symmetry and therefore we would expect a simple description of the dynamics if the initial state is also permutation symmetric, regardless of the values of $\lambda_{x,y,z}$. Table \ref{tab:coll_spin} lists the size of the Burnside basis, $|\mathscr{B}(\Ham)|$, and $r = \dim \vspan ~\{B_j\ket{\psi_0}\}_{j=1}^{|\mathscr{B}(\Ham)|}$, as a function of number of spins for both a permutationally symmetric and asymmetric initial state.

For any number of spins, except for $n=1$, a non-trivial invariant subspace exists since $|\mathscr{B}(\Ham)| < 4^n$. Furthermore, the reduced subspace dimension appears to scale linearly for the permutionally invariant initial state. Even in the case of an initial state that is not completely permutation symmetric, $\ket{\psi_0} = \ket{1}\ket{0}^{\otimes n-1}$, the scaling of $r$ appears to be linear. 

\begin{table*}\centering
\ra{1.3}
\begin{tabular}{@{}lrrrrrr@{}}
\toprule
$n$ & 1 & 2 & 3 & 4 & 5 & 6 \\
\midrule
$|\mathscr{B}(\Ham)|$ & 4 & 10 & 20 & 35 & 56 & 84 \\
$r$ ($\ket{\psi_0} = \ket{0}^{\otimes n}$) & 2 & 3 & 4 & 5 & 6 & 7 \\
$r$ ($\ket{\psi_0} = \ket{1}\ket{0}^{\otimes n-1}$) & 2 & 4 & 6 & 8 & 10 & 12 \\
\bottomrule
\end{tabular}
\caption{Degree of model reduction possible for the collective rotation Hamiltonian model, \erf{eq:eg_coll_spin} as a function of number of spins. $|\mathscr{B}(\Ham)|$ is the size of the Burnside basis and $r$ is the dimension of the reduced subspace under the specified initial state. \label{tab:coll_spin}}
\end{table*}

\end{example}

\begin{example}[Transverse-field Ising model]
A paradigmatic spin chain model exhibiting many important many-body phenomena is the transverse field Ising Hamiltonian \cite{Sac-1998}:
\begin{equation}
\label{eq:eg_ising}
		\Ham_{\textrm{Ising}}(\lambda) = -B \sum_{j=1}^n \sigma_x^{(j)} - J \left(\sum_{j=1}^{n-1} \sigma_z^{(j)}\sigma_z^{(j+1)} +  \sigma_z^{(n)}\sigma_z^{(1)} \right),
\end{equation}
where the two model parameters are $\lambda=(B,J)$, and we assume periodic boundary conditions so that the spins at the ends of the chain are coupled.

In this case, we numerically calculated the reduced subspace under two possible initial conditions, and the results are shown in table \ref{tab:ising}. The first initial state is the completely polarized state (all spins aligned in the direction of the transverse field) $\ket{\psi_0}=\ket{+}^{\otimes n}$, where $\ket{+} = \frac{1}{\sqrt{2}}(\ket{0}+\ket{1})$, and the second is the ground state of \erf{eq:eg_ising} for $B=0.05, J=1$, notated by $\ket{\texttt{gs}(0.05,1)}$. The second initial state is motivated by quench dynamics experiments, where a many-body system is prepared in the ground state of a model at some parameter values $\lambda_0$, and then the model parameters are quickly changed (quenched) to some other values $\lambda_1$. The resulting dynamics can be interesting and in many cases informative about the equilibrium phase diagram of the model, \eg \cite{Dziarmaga:2006ue,Heyl:2013uc}. Physically implementing quenched dynamics is becoming increasingly feasible, \eg \cite{Edwards:2010iv}, and therefore predictive simulations of such dynamics are extremely valuable. Although the particular model in \erf{eq:eg_ising} is exactly solvable, it serves to illustrate the compressibility of such many-body models, and the dimensions in table \ref{tab:ising} indicate how much model reduction is feasible for such dynamical simulations.

To explicitly demonstrate model reduction for this model, in Fig. \ref{fig:ising} we show the net transverse magnetization for an $8$-spin transverse field Ising chain with $\ket{\psi_0}=\ket{\texttt{gs}(0.05,1)}$ quenched to various various parameter values, simulated using the full order model ($d=2^8$) and its reduced order version generated by projecting onto the reduced subspace for the model ($r=24$). The reduced subspace was constructed via the Burnside basis method of \textsection  \ref{sec:burnside_alg} and Algorithm \ref{alg_burnside}. The reduced order dynamical model is given by the projected dynamics in \erf{eq:red_order_model}, and the reduced-order observable is generated by projection: $\hat{O} = \Phi\dg O \Phi$. We see complete agreement between the two simulations for any value of $\lambda$, with errors on the order of numerical precision. To confirm that this agreement is not due to this particular observable being robust to Hilbert space truncation, we also show in Fig. \ref{fig:ising} that if the invariant subspace is truncated by removing just one basis vector, the reduced order model can disagree with the full order model.

\begin{table*}\centering
\ra{1.3}
\begin{tabular}{@{}lrrrrrrr@{}}
\toprule
$n$ & 2 & 3 & 4 & 5 & 6 & 7 & 8\\
\midrule
$|\mathscr{B}(\Ham)|$ & 6 & 10 & 27 & 50 & 126 & 250 & 536\\
$r$ ($\ket{\psi_0} = \ket{+}^{\otimes n}$) & 2 & 2 & 4 & 4 & 8 & 8 & 16 \\
$r$ ($\ket{\psi_0} = \ket{\texttt{gs}(0.05,1)}$) & 3 & 4 & 6 & 8 & 12 & 16 & 24 \\
\bottomrule
\end{tabular}
\caption{Degree of model reduction possible for the transverse field Ising model, \erf{eq:eg_ising}, as a function of number of spins. $|\mathscr{B}(\Ham)|$ is the size of the Burnside basis and $r$ is the dimension of the reduced subspace under the specified initial state. \label{tab:ising}}
\end{table*}

\begin{figure}[h!]
\centering
\subfigure[{\small Dynamics projected onto full reduced subspace, $r=24$.}]{\includegraphics[scale=0.25]{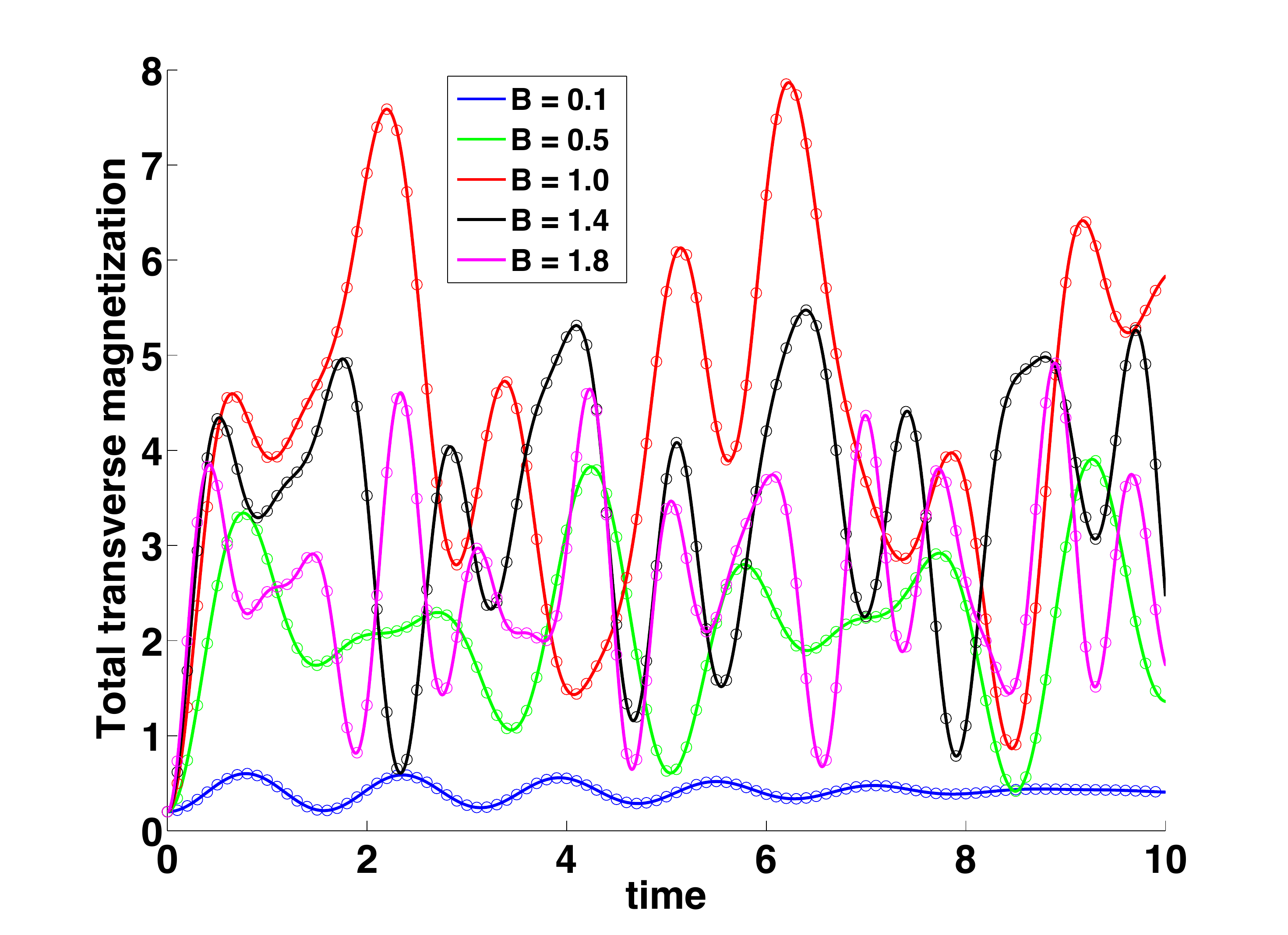}}
\subfigure[{\small Dynamics projected onto truncated reduced subspace, $r=23$.}]{\includegraphics[scale=0.25]{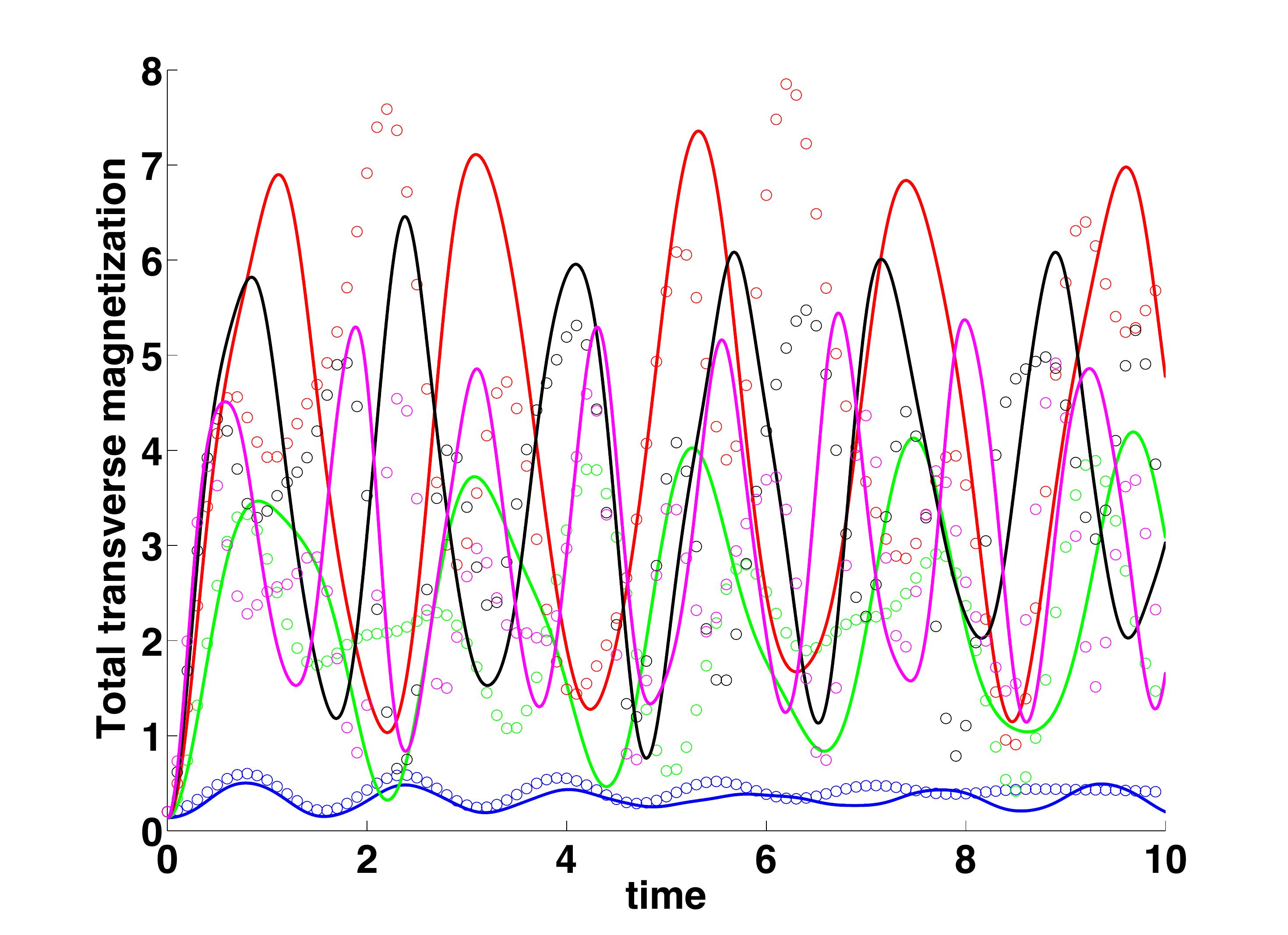}}
\caption{\small \label{fig:ising} Net transverse magnetization (the observable $\expect{\sum_{j=1}^8 \sigma_x^{(j)}}$) for an $8$-spin transverse-field Ising chain under quenched dynamics. The colors represent dynamics resulting from different quenching parameters ($J=1$ and $B$ specified in the legend), all starting from the initial state $\ket{\psi_0} = \ket{\texttt{gs}(0.05,1)}$. The curves correspond to dynamics generated by the reduced order model whereas the circles represent dynamics generated by the full order model. (a) shows complete agreement between the full order model and the reduced order model defined by the invariant subspace, while (b) shows disagreement when the dynamics are projected onto a subspace of the reduced order model, of codimension 1.}
\end{figure}

\end{example}

\begin{example}[Random transverse-field Ising model]
A more complex spin chain Hamiltonian is the random transverse-field Ising model \cite{Fisher:1995gb}:
\[
		\Ham_{\textrm{random Ising}}(\lambda) = \sum_{j=1}^n B_j \sigma_x^{(j)} +  \sum_{j=1}^{n-1} J_j \sigma_z^{(j)}\sigma_z^{(j+1)},
\]
where the $2n-1$ model parameters are $\lambda=(\{B_j\}_{j=1}^n , \{J_j\}_{j=1}^{n-1})$, and we assume open boundary conditions (\ie the spins on the ends of the chain are only coupled to one neighbor). This is a pure Pauli Hamiltonian with $|\Sb_\lambda(\Ham)| < 2n$ and therefore by Theorem \ref{pauli:thm1} we know it possesses a non-trivial invariant subspace. 

In computing the reduced subspace for this model we can exploit the simplifications enabled by the binary vector representation of $\Pb_n$ as detailed in \textsection  \ref{sec:pauli}. Computing the Burnside basis using this representation and Algorithm \ref{alg:pauli_burnside_basis_2} in Appendix \ref{app:burnside_alg}, we find that $|\mathscr{B}(\Ham)| = \frac{4^n}{2}$. Now, consider an initial state where all spins are aligned in the direction of the transverse field:  $\ket{\psi_0}=\ket{+}^{\otimes n}$. In this case, the inner product $\expect{\bp}$ for $\bp \in \Pb_n$ specified in the binary vector representation is easily computable since the initial state is a tensor product of $\sigma_x$ eigenstates. Using this fact, we can readily compute the maximal linearly independent set of vectors in $\{B_j \ket{\psi_0}\}_{j=1}^{|\mathscr{B}(\Ham)|}$ using the Gramian matrix method detailed in \textsection  \ref{sec:construct_pauli}, and we find that the reduced order model dimension is $r = 2^{n-1}$, half of the formal Hilbert space dimension. 
\end{example}

\section{Discussion}
\label{sec:disc}
In this work we have initiated the study of model reduction for many-body quantum systems via identification and exploitation of invariant subspaces that host all dynamics generated by a parameterized Hamiltonian from a given initial state. The methods developed here can be applied to simplify simulations of quantum many-body systems for many applications, including quenched quantum dynamics, adiabatic quantum evolution, and quantum control. The degree of efficiency improvement afforded by the model reduction methods developed here are dependent on the exact model being considered. In the worst case, where no invariant subspace exists for the dynamics, no improvement is possible. However, as the examples presented above demonstrate, in many cases significant efficiency improvements are possible.

The methods developed in this work motivate several possible directions for future research, including: (1) constructing more efficient certificates for determining the (im)possibility of a model reduction than the direct computation of a basis for $\A(\Ham)$; (2) developing time and space saving techniques for attaining a maximal linearly independent subset of a given set of vectors, or for computing its size, in application to constructing a basis for $\A(\Ham)$ or determining $\dim(\A(\Ham))$, respectively; and (3) extending the time sampling methods to more general classes of dynamical evolutions and providing sampling strategies in terms of the parameters $\lambda$ in order to exhaust the full subspace $\A(\Ham)\cdot \ket{\psi_0}$. We elaborate upon each of these directions in the following.

Firstly, concerning the results of \textsection  \ref{sec:cert_general}, the key benefit to the current certificate for the existence of non-trivial proper invariant subspaces is that computing the Burnside basis is also essential for generating the reduced subspace for the model: $\A(\Ham)\cdot \ket{\psi_0}$. Therefore any method to increase the efficiency of implementing Algorithm \ref{alg_burnside} in Appendix \ref{app:burnside_alg} would be extremely beneficial. On the other hand, if only a sufficient condition for the \emph{possibility} of a model reduction is required, it is possible that other methods for computing $\dim(\A(\Ham))$ or deriving bounds on it may be more efficient. Conversely, there also exist certificates for determining \emph{impossibility} of a model reduction, for example, following Laffey \cite[\textsection~5]{laffey1986simultaneous}, the model $\Ham$ has \emph{no} reduced subspace if the characteristic polynomial $\det(x_0 H_0 + \cdots  + x_M H_M + x_{M+1} (\ket{\psi_0}\bra{\psi_0}) - x I_d)$ in the $M+3$ complex variables $x, x_j$, with $\{H_j\}_{j=0}^M = \Coeff(\Ham)$, does \emph{not} split into linear factors. This can practically be computed by absolute factorization methods or through comparing the homogenization of the polynomial against the corresponding Chow variety \cite[Ch. 4]{gelfand2008discriminants}. Efficient implementations and more certificates in these directions require further attention.

Another crucial computational element in forming a reduced order model, via the Burnside basis or time sampling, is finding a maximal linearly independent set from a collection of vectors; \eg the collection $\{B_j \ket{\psi_0}\}_{j=1}^{|\mathscr{B}(\Ham)|}$ in the construction via the Burnside basis. These vectors are of length $d$ and therefore grow exponentially with the number of elementary degrees of freedom. This makes explicit computation of the maximal linearly independent set infeasible for large models. In \textsection  \ref{sec:pauli} we exploited Pauli group structure and assumptions on the initial state to perform this computation using a Gramian matrix which avoids explicitly working with the $d$-dimensional vectors. Other techniques to extract the linearly independent set efficiently will be useful in reducing the burden of computing the reduced order model. 

On a related note, the only special class of Hamiltonians that we have examined in this work are the Pauli Hamiltonians for many-body spin-$1/2$ models. It might be fruitful to study if other special cases, \eg commuting Hamiltonians where invariant subspaces are equivalent to common eigenspaces for all terms in the Hamiltonian, possess structure that makes identification and construction of reduced order models particularly easy.

The results established in \textsection  \ref{sec:sample} quantifying the effectiveness of forming the model reduction matrix $\Phi$ by sampling the time evolution of the full order model provide another avenue for future work. The techniques used to develop these results are very general and amenable to extension beyond quantum dynamics generated by unitary evolution. Using these techniques, it should be possible to establish conditions that specify when time sampling captures all dynamical modes for dynamics generated by any map analytic in $t$ and $\lambda$. Such a result, which would formulate conditions for the probabilistic abundance of $\emph{good}$ time samples for any analytic map, is relevant to many empirical model reduction techniques that rely on sampling to construct reduced order models \cite{Antoulas:2009tb,Schilders:2008uc}. 

Although we have established conditions under which time sampling is effective for constructing the model reduction matrix $\Phi$, it remains an open question as to how to best perform the sampling in time and parameters $\lambda$. It was remarked in \textsection  \ref{sec:sample} that by collecting snapshots of states generated by dynamics at multiple values of the parameters, $\lambda_1, \lambda_2, ...$, one increases the probability of generating all of reduced subspace $\A(\Ham)\cdot \ket{\psi_0}$. A promising direction for future work is constructing strategies for performing this sampling in a systematic manner.

Finally, an important practical issue reserved for future study is the numerical stability of the algorithms developed in this work. For example, there are several points in implementing Algorithm \ref{alg_burnside} in Appendix \ref{app:burnside_alg} or determining a linearly independent set from a collection of vectors where determinants or ranks must be computed, and it would be interesting to study how the numerical stability of these computations affects the reduced order models that are constructed.

Progress on any of the fronts outlined above will contribute to the further development of reduced order modeling techniques for quantum mechanical systems, and hence reduce the computational burden of simulating such systems.

\section*{Acknowledgements}
A.K. would like to thank Martin Harrison and Ryan Mohr for many helpful discussions pertaining to the topics of invariant subspaces and the structure of quantum dynamical evolution. M.S. thanks Kevin Carlberg (SNL-CA) for many helpful discussions on the topic of model reduction.
This work was supported by the Laboratory Directed Research and Development program at Sandia National Laboratories. Sandia is a multi-program laboratory managed and operated  by Sandia Corporation, a wholly owned subsidiary of Lockheed Martin Corporation, for the United States Department of Energy's National Nuclear Security Administration under contract DE-AC04-94AL85000.

\bibliographystyle{ieeetr}
\bibliography{main}

\begin{thebibliography}{10}

\bibitem{Poulin:2011jh}
D.~Poulin, A.~Qarry, R.~Somma, and F.~Verstraete, ``{Quantum Simulation of
  Time-Dependent Hamiltonians and the Convenient Illusion of Hilbert Space},''
  {\em Phys. Rev. Lett.}, vol.~106, p.~170501, Apr. 2011.

\bibitem{Antoulas:2009tb}
A.~C. Antoulas, {\em {Approximation of large-scale dynamical systems}}, vol.~6
  of {\em Advances in design and control}.
\newblock Society for Industrial and Applied Mathematics, 2009.

\bibitem{Schilders:2008uc}
W.~H.~A. Schilders, H.~A. van~der Vorst, and J.~Rommes, eds., {\em {Model Order
  Reduction: Theory, Research Aspects and Applications}}.
\newblock Springer, 2008.

\bibitem{curtis1999pioneers}
C.~Curtis, {\em Pioneers of Representation Theory: Frobenius, Burnside, Schur,
  and Brauer}.
\newblock History of mathematics, American Mathematical Society, 1999.

\bibitem{Han.Mab-2005}
R.~van Handel and H.~Mabuchi, ``{Quantum projection filter for a highly
  nonlinear model in cavity QED},'' {\em J. Opt. B: Q. Semiclass. Opt.},
  vol.~7, p.~S226, 2005.

\bibitem{Mabuchi:2008uy}
H.~Mabuchi, ``{Derivation of Maxwell-Bloch-type equations by projection of
  quantum models},'' {\em Physical Review A}, vol.~78, no.~1, p.~015801, 2008.

\bibitem{Hopkins:2008tx}
A.~Hopkins, {\em {Reduced Order Models for Open Quantum Systems}}.
\newblock PhD thesis, California Institute of Technology, 2008.

\bibitem{Nielsen:2009vs}
A.~E.~B. Nielsen, A.~S. Hopkins, and H.~Mabuchi, ``{Quantum filter reduction
  for measurement-feedback control via unsupervised manifold learning},'' {\em
  New J. Phys.}, vol.~11, p.~105043, 2009.

\bibitem{Nurdin:2013ux}
H.~I. Nurdin, ``{Structures and Transformations for Model Reduction of Linear
  Quantum Stochastic Systems},'' {\em IEEE Transactions on Automatic Control},
  vol.~59, no.~9, pp.~2413--2425, 2014.

\bibitem{White:2004fd}
S.~R. White and A.~E. Feiguin, ``{Real-Time Evolution Using the Density Matrix
  Renormalization Group},'' {\em Physical Review Letters}, vol.~93, p.~076401,
  Aug. 2004.

\bibitem{Daley:2004hk}
A.~J. Daley, C.~Kollath, U.~Schollw{\"o}ck, and G.~Vidal, ``{Time-dependent
  density-matrix renormalization-group using adaptive effective Hilbert
  spaces},'' {\em Journal of Statistical Mechanics: Theory and Experiment},
  vol.~2004, p.~P04005, Apr. 2004.

\bibitem{Gallier:2011vu}
J.~Gallier, {\em {Geometric Methods and Applications}}.
\newblock Springer, 2011.

\bibitem{israel2006invariant}
I.~Gohberg, P.~Lancaster, and L.~Rodman, {\em Invariant Subspaces of Matrices
  with Applications}.
\newblock Classics in Applied Mathematics, Society for Industrial and Applied
  Mathematics, 2006.

\bibitem{Lomonosov200445}
V.~Lomonosov and P.~Rosenthal, ``The simplest proof of burnside's theorem on
  matrix algebras,'' {\em Linear Algebra and its Applications}, vol.~383,
  no.~0, pp.~45 -- 47, 2004.
\newblock Special Issue in honor of Heydar Radjavi.

\bibitem{jacobson1953lectures}
N.~Jacobson, ``Lectures in abstract algebra. vol. ii. linear algebra,'' {\em
  Bull. Amer. Math. Soc. 59 (1953), 480-483 DOI: http://dx. doi.
  org/10.1090/S0002-9904-1953-09727-0 PII}, pp.~0002--9904, 1953.

\bibitem{Laffey1977189}
T.~J. Laffey, ``Simultaneous quasidiagonalization of complex matrices,'' {\em
  Linear Algebra and its Applications}, vol.~16, no.~3, pp.~189 -- 201, 1977.

\bibitem{Shapiro1979129}
H.~Shapiro, ``Simultaneous block triangularization and block diagonalization of
  sets of matrices,'' {\em Linear Algebra and its Applications}, vol.~25,
  no.~0, pp.~129 -- 137, 1979.

\bibitem{farenick2001algebras}
D.~Farenick, {\em Algebras of Linear Transformations}.
\newblock Universitext - Springer-Verlag, Springer, 2001.

\bibitem{Bargmann:1964eo}
V.~Bargmann, ``{Note on Wigner's theorem on symmetry operations},'' {\em
  Journal of Mathematical Physics}, vol.~5, no.~7, p.~862, 1964.

\bibitem{Murota:2010wu}
K.~Murota, Y.~Kanno, M.~Kojima, and S.~Kojima, ``{A numerical algorithm for
  block-diagonal decomposition of matrix {*}-algebras with application to
  semidefinite programming},'' {\em Japan Journal of Industrial and Applied
  Mathematics}, vol.~27, pp.~125--160, June 2010.

\bibitem{Ram.Sal.etal-1995}
V.~Ramakrishna, M.~V. Salapaka, M.~Dahleh, H.~Rabitz, and A.~Peirce,
  ``{Controllability of molecular systems},'' {\em Phys. Rev. A}, vol.~51,
  p.~960, 1995.

\bibitem{Kempe:2001tr}
J.~Kempe, D.~Bacon, D.~P. DiVincenzo, and K.~B. Whaley, ``{Encoded universailty
  from a singel physical interaction},'' {\em Quant. Inf. and Comp.}, vol.~1,
  no.~4, p.~33, 2001.

\bibitem{krantz2002primer}
S.~G. Krantz and H.~R. Parks, {\em A primer of real analytic functions}.
\newblock Springer, 2002.

\bibitem{herstein1975topics}
I.~Herstein, {\em Topics in algebra}.
\newblock Wiley, 1975.

\bibitem{lidar2013quantum}
D.~Lidar and T.~Brun, {\em Quantum Error Correction}.
\newblock Cambridge University Press, 2013.

\bibitem{calderbank1998quantum}
A.~R. Calderbank, E.~M. Rains, P.~Shor, and N.~J. Sloane, ``Quantum error
  correction via codes over gf (4),'' {\em Information Theory, IEEE
  Transactions on}, vol.~44, no.~4, pp.~1369--1387, 1998.

\bibitem{bard2009algebraic}
G.~Bard, {\em Algebraic Cryptanalysis}.
\newblock Springer, 2009.

\bibitem{bard2006accelerating}
G.~V. Bard, ``Accelerating cryptanalysis with the method of four russians,''
  {\em IACR Cryptology ePrint Archive}, vol.~2006, p.~251, 2006.

\bibitem{M4RI}
M.~Albrecht and G.~Bard, {\em {The M4RI Library -- Version 20121224}}.
\newblock The M4RI~Team, 2012.

\bibitem{Verstraete:2008ko}
F.~Verstraete, V.~Murg, and J.~I. Cirac, ``{Matrix product states, projected
  entangled pair states, and variational renormalization group methods for
  quantum spin systems},'' {\em Adv. Phys.}, vol.~57, pp.~143--224, Mar. 2008.

\bibitem{Sac-1998}
S.~Sachdev, {\em {Quantum phase transitions}}.
\newblock Cambridge University Press, 2~ed., 2011.

\bibitem{Dziarmaga:2006ue}
J.~Dziarmaga, ``{Dynamics of a quantum phase transition in the random Ising
  model: logarithmic dependence of the defect density on the transition
  rate},'' {\em Physical Review B}, vol.~74, no.~6, p.~064416, 2006.

\bibitem{Heyl:2013uc}
M.~Heyl, A.~Polkovnikov, and S.~Kehrein, ``{Dynamical Quantum Phase Transitions
  in the Transverse-Field Ising Model},'' {\em Phys. Rev. Lett.}, vol.~110,
  no.~13, p.~135704, 2013.

\bibitem{Edwards:2010iv}
E.~E. Edwards, S.~Korenblit, K.~Kim, R.~Islam, M.~S. Chang, J.~K. Freericks,
  G.-D. Lin, L.-M. Duan, and C.~Monroe, ``{Quantum simulation and phase diagram
  of the transverse-field Ising model with three atomic spins},'' {\em Physical
  Review B}, vol.~82, p.~060412, Aug. 2010.

\bibitem{Fisher:1995gb}
D.~Fisher, ``{Critical behavior of random transverse-field Ising spin
  chains},'' {\em Phys. Rev. B}, vol.~51, pp.~6411--6461, Mar. 1995.

\bibitem{laffey1986simultaneous}
T.~J. Laffey, ``Simultaneous reduction of sets of matrices under similarity,''
  {\em Linear Algebra and Its Applications}, vol.~84, pp.~123--138, 1986.

\bibitem{gelfand2008discriminants}
I.~Gelfand, I.~Gelfand, M.~Kapranov, and A.~Zelevinsky, {\em Discriminants,
  Resultants, and Multidimensional Determinants}.
\newblock Mathematics (Birkh{\"a}user), Birkh{\"a}user Boston, 2008.

\bibitem{burnside1905condition}
W.~Burnside, ``On the condition of reducibility of any group of linear
  substituions,'' {\em Proceedings of the London Mathematical Society}, vol.~2,
  no.~1, pp.~430--434, 1905.

\bibitem{Glazman:2006}
I.~M. Glazman, J.~I. Ljubic, G.~P. Baker, and G.~Kuerti, {\em
  {Finite-dimensional linear analysis: a systematic presentation in problem
  form}}.
\newblock Dover Books on Mathematics, 2006.

\bibitem{munkres2000topology}
J.~R. Munkres, {\em {Topology}}.
\newblock Prentice Hall, 2000.

\bibitem{paz1984application}
A.~Paz, ``An application of the cayley-hamilton theorem to matrix polynomials
  in several variables,'' {\em Linear and Multilinear Algebra}, vol.~15, no.~2,
  pp.~161--170, 1984.

\end{thebibliography}

\clearpage

\appendix
\section{Burnside's Theorem on Matrix Algebras}
\label{app:burnside_proof}
We begin with the following classical theorem of Burnside, which has become a core part of the theory of
group representations:
\begin{theorem}[Burnside \cite{burnside1905condition}] \label{burnside_orig}
	Let $\Gr$ be a group and $\pi:\Gr\rightarrow \GL_d(\CC)$ a representation. Then, $\pi$ is irreducible if and only if $\im(\pi)$ spans $\Mat_d(\CC)$. \qed
\end{theorem}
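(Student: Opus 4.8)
The plan is to make the linear span $A := \vspan\{\pi(g) \st g \in \Gr\}$ the central object. Because $\pi$ is a homomorphism, $\im(\pi)$ is closed under multiplication and contains $\pi(e) = I_d$, so $A$ is a unital subalgebra of $\Mat_d(\CC)$, and a subspace of $\CC^d$ is $\pi$-invariant if and only if it is $A$-invariant. Thus $\pi$ is irreducible exactly when $A$ is an irreducible algebra, and the theorem becomes the assertion that an irreducible unital subalgebra $A \leq \Mat_d(\CC)$ must equal $\Mat_d(\CC)$ (together with the trivial converse). I would give a self-contained argument rather than invoke Theorem \ref{red:burnside}, since the appendix is meant to deduce that statement from the present one. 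The converse is immediate: if $A = \Mat_d(\CC)$ and $0 \neq W \subsetneq \CC^d$ were invariant, then for any $0 \neq w \in W$ we would have $\Mat_d(\CC)\cdot w = \CC^d \not\subseteq W$, a contradiction, so $\pi$ is irreducible.

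For the substantive direction I would first show that an irreducible $A$ contains an operator of rank one, following the elementary line of reasoning underlying the proofs in \cite{Lomonosov200445}. Pick a nonzero $T \in A$ of minimal rank $r$ and fix $x$ with $Tx \neq 0$. Irreducibility gives $A\cdot(Tx) = \CC^d$, hence $\{TBTx \st B \in A\} = T(\CC^d) = \im(T)$, a space of dimension $r$. On the other hand, for each $B \in A$ the restriction $TB|_{\im(T)}$ is an endomorphism of the $r$-dimensional space $\im(T)$ and therefore has an eigenvalue $\lambda \in \CC$; this is precisely where the algebraic closedness of $\CC$ enters. The element $S := TBT - \lambda T \in A$ satisfies $\im(S) = (TB - \lambda I)(\im(T))$, which has dimension at most $r-1$ since $TB - \lambda I$ is singular on $\im(T)$. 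Minimality of $r$ forces $S = 0$, i.e. $TB|_{\im(T)} = \lambda I$, so every $TBTx$ is a scalar multiple of $Tx$. Comparing with the previous sentence gives $r \leq 1$, whence $r = 1$.

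Writing the rank-one element as $uv\dg$ with $u, v \neq 0$, for any $B, C \in A$ we have $B(uv\dg)C = (Bu)(C\dg v)\dg \in A$. Irreducibility yields $\{Bu \st B \in A\} = \CC^d$, and a short annihilator argument (if $R := \{v\dg C \st C \in A\}$, then $\{x \st \rho x = 0 ~\forall \rho \in R\}$ is $A$-invariant and proper, hence zero) shows that $R$ exhausts all row covectors. Consequently $A$ contains every rank-one matrix $u'w\dg$, and since these span $\Mat_d(\CC)$ we conclude $A = \Mat_d(\CC)$, i.e. $\im(\pi)$ spans $\Mat_d(\CC)$.

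The main obstacle is the minimal-rank step: the crux is producing, inside $A$, a nonzero element of strictly smaller rank whenever $r \geq 2$, and the clean way to do this is via the eigenvalue $\lambda$ of $TB|_{\im(T)}$. This is the one place where working over an algebraically closed field is essential---over $\RR$ the eigenvalue need not exist, and indeed the statement fails there---so the argument must be organized so that every scalar extraction happens on a genuinely complex eigenspace.
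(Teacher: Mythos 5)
Your proof is correct, but it is worth being clear that it takes the opposite logical route from the paper. The paper does not prove Theorem \ref{burnside_orig} at all: it states it as Burnside's classical result (citing \cite{burnside1905condition}) and then, in Appendix \ref{app:burnside_proof}, \emph{derives} the matrix-algebra version (Theorem \ref{red:burnside}) from it, via a somewhat delicate argument --- inverses of invertible elements of $\A$ obtained through Cayley--Hamilton, the subgroup $\Gr := \langle \A \cap \GL_d(\CC)\rangle$, and a topological limiting argument with the polynomial $p_\tau(x) = (1-\tau)x + \tau$ to show $\Inv(\Gr) = \Inv(\A)$ despite the singular elements of $\A$. You invert this: you prove the algebra statement directly by the minimal-rank argument (this is essentially the Lomonosov--Rosenthal proof, which is in fact one of the paper's citations for Theorem \ref{red:burnside}), and then the group statement falls out immediately, since $\vspan(\im(\pi))$ is a unital subalgebra (using $\pi(e) = I_d$ and multiplicative closure of the image) with the same invariant subspaces as $\pi$. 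Each step of your argument checks out: the orbit $\A\cdot(Tx)$ is nonzero, invariant, and contains $Tx$ by unitality, so it is all of $\CC^d$; $TB$ maps $\im(T)$ into itself, so the eigenvalue $\lambda$ (depending on $B$) exists over $\CC$; $S = (TB - \lambda I)T$ has rank $\leq r-1$, forcing $S=0$ by minimality and hence $r=1$; and the annihilator of $R = \{v\dg C \st C \in \A\}$ is $\A$-invariant and proper (as $v\dg = v\dg I \in R$ is nonzero), hence zero, so the rank-one matrices you produce span $\Mat_d(\CC)$. What each approach buys: the paper's presentation is deliberately organized to exhibit the historical dependence of the algebra theorem on Burnside's representation-theoretic original, at the cost of taking that original as a black box; your route is fully self-contained and elementary, correctly avoids circularity (you do not invoke Theorem \ref{red:burnside}, which the appendix deduces from the present statement), and would in fact render the appendix's closure argument unnecessary --- though it forfeits the historical derivation the appendix is explicitly written to showcase. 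Your closing remark about where algebraic closedness enters is also apt: over $\RR$ the statement genuinely fails (e.g.\ the rotation representation of $SO(2)$ on $\RR^2$ is irreducible, yet its image spans only a two-dimensional subspace of $\Mat_2(\RR)$).
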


From this familiar version of Burnside's theorem, we can derive Theorem \ref{red:burnside}, which is a version applicable to matrix subalgebras that we have used throughout this work. Although many proofs of Theorem \ref{red:burnside} exist in literature, nearly all of them are independent of Burnside's original result and the connection has become somewhat a part of folklore. In order to make the connection apparent, we produce a proof below that uses directly the above theorem of Burnside on group representations, making the theorem for matrix subalgebras simply a:

\begin{corollary}[of Burnside's theorem]
	A subalgebra $\A \leq  \Mat_d(\CC)$ is irreducible if and only if $\A = \Mat_d(\CC)$.
\end{corollary}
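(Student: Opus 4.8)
The plan is to derive the subalgebra statement directly from the group version, Theorem~\ref{burnside_orig}, by passing from $\A$ to its \emph{group of units}. Throughout I take $\A$ to be unital ($I_d \in \A$), which is the relevant case here since $I_d \in \Burn(\Ham)$ for the algebras $\A(\Ham)$ under study. The forward implication is immediate: if $\A = \Mat_d(\CC)$ then for any nonzero $\ket{\phi}$ and any target $\ket{\chi}$ there is an operator in $\A$ carrying $\ket{\phi}$ to $\ket{\chi}$, so no proper nonzero subspace can be invariant and $\Inv(\A)^\circ = \varnothing$. The substance is the converse, and the whole argument can be packaged as a short chain of equivalences.

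First I would introduce $G := \A \cap \GL_d(\CC)$, the invertible elements of $\A$, and verify that it is a group. Closure under products is clear, so the only real point is closure under inverses. For this I would invoke Cayley--Hamilton: any invertible $a \in \A$ satisfies its characteristic polynomial $a^d + c_{d-1}a^{d-1} + \cdots + c_1 a + c_0 I_d = 0$ with $c_0 = \pm\det(a) \neq 0$, whence $a^{-1} = -c_0^{-1}(a^{d-1} + \cdots + c_1 I_d)$ is a polynomial in $a$ (using $I_d \in \A$) and therefore lies in $\A$. Thus $G$ is a subgroup of $\GL_d(\CC)$ containing $I_d$, and the inclusion $\iota : G \hookrightarrow \GL_d(\CC)$ is a representation of $G$.

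Next I would establish the two bridging facts that tie $G$ back to $\A$. (i) $\vspan(G) = \A$: given $a \in \A$, the shift $a + c I_d$ is invertible for all but the finitely many $c$ equal to $-\mu$ with $\mu \in \spec(a)$, so one may choose distinct $c_1, c_2$ with $a + c_1 I_d,\, a + c_2 I_d \in G$ and write $a = (c_2 - c_1)^{-1}\big[c_2(a + c_1 I_d) - c_1(a + c_2 I_d)\big]$, exhibiting $a$ as a linear combination of elements of $G$; the reverse inclusion is trivial. (ii) $\Inv(G) = \Inv(\A)$: since $G \subseteq \A$, every $\A$-invariant subspace is $G$-invariant, while conversely any $G$-invariant subspace is invariant under all linear combinations of elements of $G$, that is, under $\vspan(G) = \A$. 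In particular $\A$ is irreducible exactly when $\iota$ is an irreducible representation of $G$.

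Finally I would close the loop with Burnside. Applying Theorem~\ref{burnside_orig} to $\iota$, the representation is irreducible iff $\im(\iota) = G$ spans $\Mat_d(\CC)$; combined with (i) and (ii) this yields
\[
	\A \text{ irreducible} \iff \iota \text{ irreducible} \iff \vspan(G) = \Mat_d(\CC) \iff \A = \Mat_d(\CC),
\]
which is precisely the claim. The main obstacle is not a single deep step but the care needed around unitality: both the closure of $G$ under inversion and the identity $\vspan(G) = \A$ rely on $I_d \in \A$. Should one wish to drop that hypothesis, one would first have to argue that an irreducible $\A$ acts nondegenerately and so recovers the identity; I would sidestep this by assuming $\A$ unital as above, consistent with the algebras arising in this work.
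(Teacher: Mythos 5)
Your proof is correct, and its overall architecture matches the paper's: both pass to the invertible elements of $\A$, use Cayley--Hamilton to show inverses stay in $\A$, and then apply the group version of Burnside's theorem to the inclusion representation. The genuine difference lies in the bridging step that ties invariance for the group back to invariance for the algebra. The paper proves $\Inv(\Gr) = \Inv(\A)$ by a topological argument: the invertible elements are dense in $\A$ (via the pencil $p_\tau(x) = (1-\tau)x + \tau$ and non-vanishing of $\det p_\tau(A)$ near $\tau = 0$), subspaces are closed, and invariance passes to pointwise limits, so $\Inv(\Gr) \subseteq \Inv(\cl{\Gr}) = \Inv(\A)$. You instead prove the strictly stronger \emph{algebraic} identity $\vspan(G) = \A$, writing any $a \in \A$ as $(c_2 - c_1)^{-1}\bigl[c_2(a + c_1 I_d) - c_1(a + c_2 I_d)\bigr]$ with $c_1, c_2 \notin -\spec(a)$; from this, $\Inv(G) = \Inv(\A)$ and the final equivalence $\vspan(G) = \Mat_d(\CC) \iff \A = \Mat_d(\CC)$ both follow by pure linearity, with no limits or closures needed. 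Your version is more elementary and makes the concluding chain of equivalences tighter (the paper must argue the two directions of the corollary separately, since it only knows $\Gr \subseteq \A$, not that $\Gr$ spans $\A$). Two minor points of contrast: you work with $G = \A \cap \GL_d(\CC)$ and verify directly that it is a group, whereas the paper takes the generated subgroup $\langle \A \cap \GL_d(\CC)\rangle$ and then shows it lies in $\A$ --- materially the same step; and you make the unitality hypothesis $I_d \in \A$ explicit (needed both for inverses via Cayley--Hamilton and for the invertible shifts $a + c_k I_d$), which the paper uses implicitly --- its density argument and even the nonemptiness of $\Gr$ also require it, so flagging it is a small improvement rather than a restriction.
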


\begin{proof}
Given any subalgebra $\A \leq  \Mat_d(\CC)$, let $\Gr := \langle \A \cap  \GL_d(\CC)\rangle $ be the subgroup of $\GL_d(\CC)$ generated by the invertible matrices in $\A$. The strategy for the proof will be to show $\Gr \subseteq  \A$ and that subspace invariance under $\Gr$ is equivalent to subspace invariance under $\A$: $\Inv(\Gr) =\Inv(\A)$. Then, an application of Theorem \ref{burnside_orig}, by letting the representation $\pi:\Gr \hookrightarrow  \GL_d(\CC)$ be the canonical inclusion map, will show the desired result.

To show that $\Gr$ is a subset of $\A$, note that $\A$ contains the inverses of its invertible elements. This is easy to see by writing the inverse of $A \in  \A$ as a polynomial in $A$ per the Cayley-Hamilton theorem: $A^{-1} = \frac{(-1)^{d-1}}{\det(A)}(A^{d-1}+c_{d-1}A^{d-2} + ... + c_1 I_d)$, for scalars $c_k$. Then by the closure of $\A$ under products and linear combination we have that $\Gr \subseteq  \A$.

Next, we show that $\Inv(\Gr) =\Inv(\A)$. First, the direction $\Inv(\A) \subseteq \Inv(\Gr)$ is easily seen from $\Gr \subseteq  \A$. The reverse direction is less trivial since $\A$ has singular operators (\viz $\A$ properly contains $\Gr$), but we will overcome this difficulty by a limiting argument. Consider the polynomial $p_\tau(x) = (1 - \tau)x + \tau$ and note that for any singular $A \in  \A$, the polynomial $\det(p_\tau(A))$ is not identically zero (as a function of $\tau$) and has a root at $\tau=0$. So it must be non-zero in a punctured neighborhood about $\tau=0$, and hence there is a punctured neighborhood $\Nbd_A \subseteq  \GL_d(\CC)$ about $A$. We collect these neighborhoods in $\Nbd := \cup _{A \in  \A \wo \GL_d(\CC)} \Nbd_A$ and note that $\Nbd \subset  \A \cap  \GL_d(\CC)$. Now, since vector subspaces are topologically closed 
\cite[\textsection~11.294]{Glazman:2006}
, the invariance of subspaces is preserved under taking pointwise limits from $\Nbd$ converging to singular matrices in $\A$; in other words, if $\cl{\Gr}$ is the topological closure 
\cite[\textsection~17]{munkres2000topology} 
of $\Gr$ and $V \in  \Inv(\Gr)$, then $\A = \cl{\Gr}$ and $V \in  \Inv(\cl{\Gr}) = \Inv(\A)$, whence we see that $\Inv(\Gr) \subseteq  \Inv(\A)$.

Having shown $\Inv(\Gr) =\Inv(\A)$, we can proceed with proving the corollary. First, if $\A = \Mat_d(\CC)$, then $\Gr = \GL_d(\CC)$ by definition and $\vspan(\Gr)= \Mat_d(\CC)$. Then by Theorem \ref{burnside_orig} $\Gr$ has no proper invariant subspaces, and $\Inv(\Gr) =\Inv(\A)$ implies the same for $\A$, meaning that $\A$ is irreducible. To go the other way, assume that $\A$ is irreducible, then by $\Inv(\Gr) =\Inv(\A)$ so is $\Gr$. By Theorem \ref{burnside_orig} this means that $\vspan(\Gr)= \Mat_d(\CC)$, but because $\Gr \subseteq  \A$, if $\vspan(\Gr)= \Mat_d(\CC)$, then $\A= \Mat_d(\CC)$.
\end{proof}

\begin{remark}
	See \cite{curtis1999pioneers} for more of a historical discussion on the development of this theorem and
	the connection with Frobenius and Schur.
\end{remark}

\section{Explicit algorithms for calculating the algebraic Burnside basis}
\label{app:burnside_alg}
Algorithm \ref{alg_burnside} details the procedure to generate the Burnside basis for a general Hamiltonian $\Ham$, which is a maximal linearly independent subset (over $\CC$) of the monoid generated by taking products of the operators in $\Coeff(\Ham)$. In terms of Definition \ref{def:algebra}, $\A(\Ham)$ is the linear span of monomials in $\Coeff(\Ham)$. The crux of the algorithm resides in the fact that if $\cL_k$ is the collection (called a \emph{layer}) of all monomials of degree at most $k$, then we have a chain of inclusions,
\[
	\vspan{\cL_1} \subseteq  \vspan{\cL_2} \subseteq  \cdots  \subseteq  \vspan{\cL_k} \subseteq  \cdots ,
\]
where if at some point $k$ in the chain of inclusions equality holds, so that $\vspan{\cL_k} = \vspan{\cL_{k+1}}$, then it must be that for all $m \geq  k$, $\vspan{\cL_m} = \vspan{\cL_k}$ and moreover, $\vspan{\cL_k} = \A(\Ham)$ (see \cite[\textsection~4]{laffey1986simultaneous}). One method of achieving this (step \ref{alg_burnside:main_loop}.(\ref{alg_burnside:spancheck})) is by adding only linearly independent monomials of degree $k$ to $\cL_{k-1}$, at the $k$-th iteration. Then, the above inclusion chain also holds for this collection of linearly independent layers, the stopping criterion simply becomes that $\cL_k = \cL_{k+1}$ and clearly $\cL_k = \Burn(\Ham)$. We detail a second possible implementation that may be useful for scaling upwards, by off-loading the linear independency check at each step of the iteration to a post-processing step.

Algorithm \ref{alg:pauli_burnside} is a straightforward modification of Algorithm \ref{alg_burnside} that exploits additional structure in pure Pauli Hamiltonians of the form:
\begin{equation}
\label{eq:app_pauli}
\Ham(\lambda) = \sum_{k=1}^M \lambda_k \bp_k
\end{equation}
where each $\bp_k$ is an $n$-spin generalized Pauli operator. In this case, $\Coeff(\Ham) = \{\bp_k\}_{k=1}^M$, and properties of the elements of $\Pb_n$ described in the main text simplify operations within the Burnside basis algorithm. Recall that $\np{X}$ is the binary vector representation of $X \in \Pb_n$.

Finally, Algorithm \ref{alg:pauli_burnside_basis_2} is a different method of generating the Burnside basis for a pure Pauli Hamiltonian that exploits all the structure of $\Pb_n$ and its binary vector representation. It is considerably more efficient than Algorithm \ref{alg:pauli_burnside} because it exploits the fact that $\Pb_n$ is abelian to reduce the number of binary additions (equivalent to multiplications of elements of $\Pb_n$) required. To do so, we use the concept of a Gray code to cycle through the binary additions necessary to generate the basis while avoiding repetitions that would result if we ignored the abelian nature of $\Pb_n$. This is a method for quickly generating subspaces of the binary vector space as used in the Method of Four Russians; the correctness of the algorithm is argued for in \cite[\textsection~9.2]{bard2009algebraic}.

Note that the return value for the general Algorithm \ref{alg_burnside} is a set of operators while the return value for the two modified algorithms (for pure Hamiltonians) is a set of binary vectors, each of which corresponds to an element of $\Pb_n$.

In the following XOR (entry-wise addition modulo $2$) is denoted by $\oplus$.

\clearpage

\IncMargin{1em}
\begin{procedure}[H]
	\SetKwData{Mons}{Monomials}\SetKwData{NewMons}{NewMonomials}
	\SetKwInOut{Input}{input}\SetKwInOut{Output}{output}
	\Indm
	\Input{\Mons}
	\Output{\NewMons}
	\Indp
	$\NewMons \leftarrow \varnothing $	\\
	\For{$W \in  \Mons, X \in  \Coeff(\Ham)$}{
		$\NewMons \leftarrow \NewMons \cup  \{W \cdot  X\}$	\\
	}
	\caption{incMonomials(): Takes a set of monomials and constructs all monomials of degree one greater, from the generating operators $\Coeff(\Ham)$.}
\end{procedure}

\vspace{2em}

{\RestyleAlgo{algoruled}
\begin{algorithm}[H]
	\SetAlgoNoLine
	\SetKwData{Layer}{Layer}\SetKwData{Mons}{Monomials}\SetKwData{NewMons}{NewMonomials}
	\SetKwFunction{Span}{span}
	\SetKwInOut{Input}{input}\SetKwInOut{Output}{output}
	
	\Indm
	\Input{$\Coeff(\Ham)$}
	\Output{$\Burn(\Ham)$}
	\Indp
	
	$\Layer \leftarrow \varnothing , ~ \Mons \leftarrow \{ I_d \}$	\\
		
	Iteratively run $\Mons \leftarrow \incMonomials{\Mons}$ and $\Layer \leftarrow \Layer \cup  \Mons$ until the stopping criterion
	is met that \Span{\Layer} is unchanged for two consecutive iterations. This check can be implemented
	in a variety of ways; two of the possible implementations are:
	\begin{enumerate}
	\item	Modify the loop of \incMonomials{} to check that each new monomial is not in
				\Span{$\Layer \cup  \Mons \cup  \NewMons$}
			before adding it to \NewMons. This check can be implemented in a variety of ways, one of
			which would be to generate a new matrix from the vectorizations of all of the matrices in
			question and compare ranks. The stopping criterion is then simply that the new collection of
			monomials is not empty. In this regime, after the halt, clearly $\Layer = \Burn(\Ham)$.
			\label{alg_burnside:spancheck}
	\item	Na\"ively, one only needs at most $d^2$ iterations before the stopping criterion must be met. This
			upper bound can be improved due to a result of Paz \cite{paz1984application} to
				$k := \lceil \frac{d^2 + 2}{3}\rceil $.
			Thus, one can avoid the check of spans at each step and simply form all $k$ layers, then post-process
			\Layer to find a maximal linearly independent subset to serve as the Burnside Basis $\Burn(\Ham)$. See
			\cite[\textsection~4]{laffey1986simultaneous} for more discussion (including that of a conjecture of Paz that
			$k = 2d - 2$, which is shown there to be a \emph{lower} bound in general).
			\label{alg_burnside:paz}
	\end{enumerate}
	\label{alg_burnside:main_loop}

	\caption{Generating the Burnside basis for a general Hamiltonian $\Ham$.}\label{alg_burnside}
\end{algorithm}
}

\clearpage

\begin{procedure}[H]
	\SetKwData{Vectors}{Vectors}\SetKwData{NewVectors}{NewVectors}
	\SetKwInOut{Input}{input}\SetKwInOut{Output}{output}
	\SetKwFunction{Phi}{$\varphi$}
	\Indm
	\Input{\Vectors}
	\Output{\NewVectors}
	\Indp
	$\NewVectors \leftarrow \varnothing $	\\
	\For{$W \in  \Vectors, X \in  \Coeff(\Ham)$}{
		$\NewVectors \leftarrow \NewVectors \cup  \{W \oplus  \Phi{$X$}\}$	\\
	}
	\caption{incVectors(): Version of \incMonomials{} for Pauli Hamiltonians.}
\end{procedure}

\vspace{2em}

{\RestyleAlgo{algoruled}
\begin{algorithm}[H]
	\SetAlgoNoLine
	\SetKwData{Layer}{Layer}\SetKwData{Vectors}{Vectors}\SetKwData{NewVectors}{NewVectors}
	\SetKwFunction{Span}{span}\SetKwFunction{Phi}{$\varphi$}
	\SetKwInOut{Input}{input}\SetKwInOut{Output}{output}
	
	\Indm
	\Input{$\Coeff(\Ham) = \Sb_\lambda(\Ham)$}
	\Output{$\varphi[\Burn(\Ham)]$}
	\Indp
	
	$\Layer \leftarrow \varnothing , ~ \Vectors \leftarrow \{ \vec{0} \}$	\\
	
	Iteratively run $\Vectors \leftarrow \incVectors{\Vectors}$ and $\Layer \leftarrow \Layer \cup  \Vectors$ until the stopping criterion
	is met that $\Layer$ is unchanged for two consecutive iterations. This check can be implemented in a variety
	of ways; two of the possible implementations are:
	\begin{enumerate}
	\item	Modify the loop of \incVectors{} to check that each new binary vector $A$ is not in
				$\cL := \Layer \cup  \Vectors \cup  \NewVectors$
			before adding it to \NewVectors. One possible implementation of this check is through binary AND
			operations; \ie $A \in  \cL$ iff $(\exists )C \in  \cL$ such that $A ~ \texttt{AND} ~ C = 0$. The stopping criterion
			then becomes that the new collection of binary vectors is not empty and finally, $\Layer = \varphi[\Burn(\Ham)]$.
	\item	The alternate method of step \ref{alg_burnside:main_loop}.(\ref{alg_burnside:paz}) of Algorithm
			\ref{alg_burnside} can again be applied here, except that the post-processing step need only
			eliminate any duplicate binary vector entries.
	\end{enumerate}

	\caption{Na\"ive speed-up of Algorithm \ref{alg_burnside} for generating the Burnside basis for a pure Pauli Hamiltonian of the form \erf{eq:app_pauli}.}
	\label{alg:pauli_burnside}
\end{algorithm}
}

\vspace{5em}

\begin{algorithm}[H]
	\SetKwData{Layer}{Layer}\SetKwData{Gray}{Gray}\SetKwData{Flip}{flip}
	\SetKwFunction{Log}{$\log_2$}\SetKwFunction{Phi}{$\varphi$}
	\SetKwInOut{Input}{input}\SetKwInOut{Output}{output}
	\Input{$\Coeff(\Ham) = \Sb_\lambda(\Ham)$}
	\Output{$\mathcal{V} = \Phi{$\Burn(\Ham)$}$}
	\Begin{
		$\mathcal{V} \leftarrow \varnothing $	\\
		$\ell \leftarrow \st\Coeff(\Ham)\st$	\tcp*[r]{Order elements of $\Coeff(\Ham)$ from $1 \ldots  \ell$.} 
		\Gray $\leftarrow$ Gray code for $\ell$ bits	\\
		$A \leftarrow 0$	\\
		\For{$j \leftarrow 2$ \KwTo $2^{\ell}$}{
			$\Flip \leftarrow \Log{$\Gray[j-1] \oplus  \Gray[j]$}$ \label{grayflip}	\\
			$X \leftarrow X_{\Flip + 1} \in  \Coeff(\Ham)$	\\
			$A \leftarrow A \oplus  \Phi{$X$}$	\\
			\If{$A \not\in  \mathcal{V}$}{
				Append $A$ to $\mathcal{V}$	\\
			}
		}
	}
	\caption{Efficient algorithm for generating the Burnside basis for a pure Pauli Hamiltonian of the form \erf{eq:app_pauli}. The idea is to generate all linear combinations of $\varphi(\Coeff(\Ham))$ according to a Gray codes table, which avoids the layering method needed for less structured sets of operators. The purpose of line \ref{grayflip} is to calculate the bit that is flipped between two neighboring elements of the Gray code.}
	\label{alg:pauli_burnside_basis_2}
\end{algorithm}
\DecMargin{1em}

\end{document}